\let\coloneqq\relax
\newcolumntype{x}[1]{>{\centering\arraybackslash}p{#1}}
\newtheorem{thm}{Theorem}
\newtheorem*{thm*}{Theorem}
\newtheorem{prop}[thm]{Proposition}
\newtheorem*{prop*}{Proposition}
\newtheorem{lemma}[thm]{Lemma}
\newtheorem*{lemma*}{Lemma}
\newtheorem{cor}[thm]{Corollary}
\newtheorem*{cor*}{Corollary}
\newtheorem*{cj*}{Conjecture}
\newtheorem{Def}[thm]{Definition}
\newtheorem*{Def*}{Definition}
\newtheorem*{question*}{Question}
\newtheorem*{problem*}{Problem}
\def\thmhead@plain#1#2#3{%
  \thmname{#1}\thmnumber{\@ifnotempty{#1}{ }\@upn{#2}}%
  \thmnote{ {\the\thm@notefont#3}}}
\let\thmhead\thmhead@plain
\theoremstyle{definition}
\newtheorem{rem}[thm]{Remark}
\newcommand{\bb}{\begin{equation}\begin{aligned}\hspace{0pt}}
\newcommand{\bbb}{\begin{equation*}\begin{aligned}}
\newcommand{\ee}{\end{aligned}\end{equation}}
\newcommand{\eee}{\end{aligned}\end{equation*}}
\newcommand{\eqt}[1]{\stackrel{\mathclap{\mbox{\scriptsize #1}}}{=}}
\newcommand{\leqt}[1]{\stackrel{\mathclap{\mbox{\scriptsize #1}}}{\leq}}
\newcommand{\geqt}[1]{\stackrel{\mathclap{\mbox{\scriptsize #1}}}{\geq}}
\newcommand{\e}{\varepsilon}
\renewcommand{\epsilon}{\varepsilon}
\newcommand{\id}{\mathbbm{1}}
\newcommand{\R}{\mathbb{R}}
\newcommand{\N}{\mathbb{N}}
\DeclareMathOperator{\Tr}{Tr}
\DeclareMathAlphabet{\pazocal}{OMS}{zplm}{m}{n}
\newcommand{\XX}{\mathcal{X}}
\newcommand{\YY}{\mathcal{Y}}
\newcommand{\FF}{\mathcal{F}}
\newcommand{\lsmatrix}{\left(\begin{smallmatrix}}
\newcommand{\rsmatrix}{\end{smallmatrix}\right)}
\newcommand{\deff}[1]{\textbf{\emph{#1}}}
\newcommand*\rel@kern[1]{\kern#1\dimexpr\macc@kerna}
\newcommand*\widebar[1]{%
  \begingroup
  \def\mathaccent##1##2{%
    \rel@kern{0.8}%
    \overline{\rel@kern{-0.8}\macc@nucleus\rel@kern{0.2}}%
    \rel@kern{-0.2}%
  }%
  \macc@depth\@ne
  \let\math@bgroup\@empty \let\math@egroup\macc@set@skewchar
  \mathsurround\z@ \frozen@everymath{\mathgroup\macc@group\relax}%
  \macc@set@skewchar\relax
  \let\mathaccentV\macc@nested@a
  \macc@nested@a\relax111{#1}%
  \endgroup
}
\tikzset{meter/.append style={draw, inner sep=10, rectangle, font=\vphantom{A}, minimum width=30, line width=.8, path picture={\draw[black] ([shift={(.1,.3)}]path picture bounding box.south west) to[bend left=50] ([shift={(-.1,.3)}]path picture bounding box.south east);\draw[black,-latex] ([shift={(0,.1)}]path picture bounding box.south) -- ([shift={(.3,-.1)}]path picture bounding box.north);}}}
\tikzset{roundnode/.append style={circle, draw=black, fill=gray!20, thick, minimum size=10mm}}
\tikzset{squarenode/.style={rectangle, draw=black, fill=none, thick, minimum size=10mm}}
\definecolor{Blues5seq1}{RGB}{239,243,255}
\definecolor{Blues5seq2}{RGB}{189,215,231}
\definecolor{Blues5seq3}{RGB}{107,174,214}
\definecolor{Blues5seq4}{RGB}{49,130,189}
\definecolor{Blues5seq5}{RGB}{8,81,156}
\definecolor{Greens5seq1}{RGB}{237,248,233}
\definecolor{Greens5seq2}{RGB}{186,228,179}
\definecolor{Greens5seq3}{RGB}{116,196,118}
\definecolor{Greens5seq4}{RGB}{49,163,84}
\definecolor{Greens5seq5}{RGB}{0,109,44}
\definecolor{Reds5seq1}{RGB}{254,229,217}
\definecolor{Reds5seq2}{RGB}{252,174,145}
\definecolor{Reds5seq3}{RGB}{251,106,74}
\definecolor{Reds5seq4}{RGB}{222,45,38}
\definecolor{Reds5seq5}{RGB}{165,15,21}
\renewenvironment{boxed}[1][white]%
{\expandafter\ifstrequal\expandafter{#1}{filled}{\begin{tcolorbox}[colback=gray!3,colframe=gray!20,breakable=false,enhanced,left=5.75pt,right=5.75pt,grow sidewards by=10pt]}{\begin{tcolorbox}[colback=MidnightBlue!70!black!70!TealBlue!4!white,colframe=MidnightBlue!70!black!70!TealBlue!50!white,breakable=false,enhanced,left=5.75pt,right=5.75pt,grow sidewards by=10pt]}}%
  {\end{tcolorbox}}
\let\cp\relax
\newcommand{\cp}{\mathrm{CP}}
\newcommand{\uml}[1]{\ddot{#1}}
\newtheorem{conv}[thm]{Convention}
\newcommand{\opns}[1]{\ensuremath{E_0^{\rm NS}(#1)}} 
\newcommand{\opom}[1]{\ensuremath{E_0^{\Omega}(#1)}} 
\newcommand{\errns}[1]{\ensuremath{E^{\rm NS}(0^+,#1)}} 
\newcommand{\errom}[1]{\ensuremath{E^{\Omega}(0^+,#1)}} 
\newcommand{\errpl}[1]{\ensuremath{E^{\emptyset}(0^+,#1)}}          
\newcommand{\errL}[1]{\ensuremath{E^{\emptyset}_L(0^+,#1)}}
\begin{document}

\title{Umlaut information}

\author{Filippo Girardi}
\email{filippo.girardi@sns.it}
\affiliation{Scuola Normale Superiore, Piazza dei Cavalieri 7, 56126 Pisa, Italy}
\affiliation{QuSoft, Science Park 123, 1098 XG Amsterdam, The Netherlands}
\affiliation{Korteweg--de Vries Institute for Mathematics, University of Amsterdam, Science Park 105-107, 1098 XG Amsterdam, The Netherlands}

\author{Aadil Oufkir}
\affiliation{Institute for Quantum Information, RWTH Aachen University,  Germany}

    \author{Bartosz Regula}
\affiliation{Mathematical Quantum Information RIKEN Hakubi Research Team, RIKEN Pioneering Research Institute (PRI) and RIKEN Center for Quantum Computing (RQC), Wako, Saitama 351-0198, Japan}

\author{Marco~Tomamichel}
\affiliation{Centre for Quantum Technologies, National University of Singapore, Singapore}
\affiliation{Department of Electrical and Computer Engineering, National University of Singapore, Singapore}

\author{Mario Berta}
\affiliation{Institute for Quantum Information, RWTH Aachen University,  Germany}

\author{Ludovico Lami}
\affiliation{Scuola Normale Superiore, Piazza dei Cavalieri 7, 56126 Pisa, Italy}
\affiliation{QuSoft, Science Park 123, 1098 XG Amsterdam, The Netherlands}
\affiliation{Korteweg--de Vries Institute for Mathematics, University of Amsterdam, Science Park 105-107, 1098 XG Amsterdam, The Netherlands}

\begin{abstract}
The sphere-packing bound quantifies the error exponent for noisy channel coding for rates above a critical value. Here, 
we study the zero-rate limit of the sphere-packing bound and show that it has an intriguing single-letter form, which we call the umlaut information of the channel, inspired by the lautum information introduced by Palomar and Verd\'u. Unlike the latter quantity, we show that the umlaut information is additive for parallel uses of channels. We show that it has a twofold operational interpretation: as the zero-rate error exponent of non-signalling--assisted coding on the one hand, and as the zero-rate error exponent of list decoding in the large list limit on the other.
\end{abstract}

\maketitle



\section{Introduction}

The sphere-packing bound $E_{\rm sp}(r, \pazocal{W})$~\cite{SHANNON196765} is a fundamental restriction on the error exponent (reliability function) of coding over a noisy channel $\pazocal{W}$ at a rate $r$. However, it only acquires a precise operational interpretation for rates above a certain critical rate, as, in general, it cannot be achieved for rates below this value~\cite{Gallager1965}. This may cast some doubts on the operational relevance of the sphere-packing bound in the low-rate regime. We observe, however, that in the limit $r \to 0$, the bound takes a rather curious form (see Sections~\ref{sec:prelims}--\ref{sec:channels} for detailed definitions and all derivations):
\begin{equation}\begin{aligned}\label{intro:sphere}
    \lim_{r\to0^+} E_{\rm sp}(r, \pazocal{W}) = \max_{P_X} \min_{Q_Y} D(P_XQ_Y\| P_{XY}),
\end{aligned}\end{equation}
where $P_{XY}(x,y) = \pazocal{W}(y|x) P_X(x)$ and $D$ denotes the relative entropy (Kullback--Leibler divergence). Interpreted as a correlation measure between the random variables $X$ and $Y$, the quantity on the right-hand side of~\eqref{intro:sphere} bears a certain resemblance to the mutual information
\begin{equation}\label{intro:mut-info}
I(X\!:\!Y) = D(P_{XY} \| P_X P_Y) = \min_{Q_Y} D(P_{XY}\| P_XQ_Y).
\end{equation}
However, since the order of the arguments in~\eqref{intro:sphere} is reversed in comparison to~\eqref{intro:mut-info}, this then suggests a possible connection with the lautum information~\cite{Lautum_08} --- a reversed variant of the mutual information defined as
\begin{equation}\label{intro:laut-info}
L(X\!:\!Y) = D(P_{X}P_Y \| P_{XY}).
\end{equation}
Importantly, however, Palomar and Verd\'{u}~\cite{Lautum_08} noticed that, unlike the mutual information, the lautum information cannot be expressed using a variational form optimised over all distributions $Q_Y$:  they showed through a specific example that
\bb\label{intro:first_variational}
    L(X\!:\!Y)  > \min_{Q_Y} D(P_XQ_Y\| P_{XY}).
\ee
Due to this key distinction, the zero-rate limit of the sphere-packing bound in~\eqref{intro:sphere} does not correspond to an information measure for channels induced by the lautum information itself, but rather by a different type of measure of correlations for joint probability distributions.

Motivated by this insight, in this work we study the correlation measure given by the right-hand side of~\eqref{intro:first_variational} and use it to define an information measure for channels. We show that it exhibits superior properties to the lautum information, in particular, that it is additive under parallel composition of channels. We also establish several direct operational interpretations of this channel information measure in the context of channel coding and list decoding.

\bigskip

\textbf{Overview of results.}\,---  For a joint probability distribution $P_{XY}$ on $\XX\times \YY$ with marginal $P_X$, the \deff{umlaut information} between $X$ and $Y$ is defined as 
\bb\label{intro:uml}
    U(X;Y) \coloneqq \min_{Q_Y}D(P_XQ_Y\| P_{XY}).
\ee
We use the name `umlaut' --- an anagram of `lautum', itself a reversed spelling of `mutual' --- to emphasise the difference from the lautum information as originally defined in~\cite{Lautum_08}.\footnote{The terminology seems to us unavoidable, as `umlaut' and `lautum' are the only two possible anagrams of `mutual' that also mean something in either English or Latin, if one is willing to exclude the third, more eerie alternative `tumula'.} Previously, closely related quantities have also appeared in Ref.~\cite{Nuradha2024}, which considered the `oveloh information' as a lautum-style reverse variant of the Holevo information~\cite{Holevo1973} for special cases of classical-quantum states, and in Ref.~\cite{ji_2024}, where $\alpha$-R\'enyi entropy variants of the umlaut information were defined for quantum states under the name of Petz–R\'enyi lautum information.

Leveraging the Gibbs variational principle, we can show that the minimiser is unique and leads to the following closed-form expression of the umlaut information: 
\bb
        U(X;Y)&=-H(P_X)-\log\sum_{y\in\YY}\exp\left(\sum_{x\in\XX}P_X(x)\log P_{XY}(x,y)\right).
    \ee
This formulation allows us to prove the additivity of the umlaut information under tensor product. In turn, this then implies an operational interpretation of the umlaut information in composite hypothesis testing. More precisely, for a joint probability distribution $P_{XY}$ over $\XX\times \YY$ with marginal $P_X$, we establish that the umlaut information $U(X;Y)$ governs the exponential decay rate of the type II error probability while maintaining a type I error probability smaller than a constant $\epsilon>0$ for the problem of testing
\begin{align}
\text{$H_0 : R_{X^nY^n}=P_X^{\times n}Q_{Y^n}$\quad versus\quad $H_1 : R_{X^nY^n}=P_{XY}^{\times n}$,}
\end{align}
where $Q_{Y^n}$ is an arbitrary probability distribution on $\mathcal{Y^n}$.

We can further show that the additivity property holds even in the channel setting where we optimise the umlaut information over input distribution $P_X$. More precisely, given a channel $\pazocal{W}_{Y|X}$, the \deff{channel umlaut information} is defined as 
\bb\label{intro:lautum_ch}
    U(\pazocal{W})&\coloneqq \max_{P_X}U(X;Y)=    \max_{P_X}\min_{Q_Y}D(P_{X}Q_Y\|P_{XY}),
\ee
where $P_{XY} \coloneqq  \pazocal{W}_{Y|X}P_X$. Then, for two channels $\pazocal{W}_1$ and $\pazocal{W}_2$, it holds that 
\bb\label{eq:8}
U(\pazocal{W}_1 \times \pazocal{W}_2) = U(\pazocal{W}_1) +U(\pazocal{W}_2). 
\ee
A similar property holds also for variants of the umlaut information defined in terms of Rényi $\alpha$-divergences, which we also introduce and study in detail.

This then allows for operational interpretations in the setting of noisy channel coding. We consider the problem of coding over a channel $\pazocal{W}_{Y|X}$ and in the block-length setting, the task is to send a number of messages $M=\exp({rn})$ with rate $r\ge 0$ through a certain number $n\in \mathbb{N}$ of copies of the channel $\pazocal{W}$, i.e. $\pazocal{W}^{\times n}$. As is well-known, for rates strictly below  the channel capacity $r< C(\pazocal{W})$, the (average) error probability $\epsilon(\exp(rn), \pazocal{W}^{\times n})$ vanishes exponentially fast as 
\bb
\epsilon(\exp(rn), \pazocal{W}^{\times n}) \simeq  \exp(-n E(r, \pazocal{W})+o(n)), 
\ee
where $E(r, \pazocal{W})$ is known as the error exponent (sometimes also termed reliability function). Although the standard setting of communication relies on encoders and decoders without additional assistance, many converse results can be proved for more general, assisted coding schemes. In particular, Polyanskiy, Poor and Verd\'u~\cite{PPV} introduced the meta-converse bound that implies many well-known converse results in the literature, and Matthews~\cite{Matthews2012} later showed that this bound in fact exactly corresponds to the setting of coding assisted by non-signalling (NS) correlations. Now, in the zero-rate regime where  $\frac{1}{n}\log M \rightarrow 0$, we establish that the non-signalling--assisted error exponent is quantified by the channel umlaut information. That is, we find
\bb
\errns{\pazocal{W}}=E_{\mathrm{sp}}(0^+,\pazocal{W}) = U(\pazocal{W}),
\ee
where $E_{\mathrm{sp}}(0^+, \pazocal{W})$ denotes the zero-rate limit of the sphere-packing bound $E_{\mathrm{sp}}(r, \pazocal{W})$ from \cite{SHANNON196765}.

Returning to the unassisted setting, we can relax the decoding requirement to the list decoding where the error occurs only if the original message does not belong to the list of messages (of size $L\ge 1$) returned by the decoder. In this setting, the zero-rate unassisted reliability function is~\cite{Blinovsky2001Oct} (see also \cite{bondaschi2021} for further discussions)
\bb
        \errL{\pazocal{W}} = \max_{P_X}\sum_{x_1,\dots,x_{L+1}}P_X(x_1)\cdots P_X(x_{L+1})\left(-\log \sum_{y\in\YY}\sqrt[L+1]{\pazocal{W}(y|x_1)\cdots\pazocal{W}(y|x_{L+1})}\right).
    \ee
As the second operational interpretation of the channel lautum information, we show that for all $L\ge 1$, the channel umlaut information $U(\pazocal{W})$ provides an upper bound to the unassisted zero-rate error exponent $E_L(0^+, \pazocal{W})$ and establish that the rate $U(\pazocal{W})$ is in fact achieved in the large list limit $L\to\infty$. 
This is similar to, but different from, prior literature that studied an asymptotic limit of exponentially large list sizes and also connected it with the sphere packing bound~\cite{Merhav2017}.

The remainder of our manuscript is structured as follows. In Section~\ref{sec:prelims} we fix our notation; in Section~\ref{sec:states} we present the umlaut information and its operational interpretations for probability distributions; in Section~\ref{sec:channels} we discuss the channel umlaut information and its operational interpretations; finally, in Section~\ref{sec:outlook} we discuss some open questions.

\bigskip


\section{Notation and Preliminaries}\label{sec:prelims}
 
Given a measurable space $( \XX, \Sigma_{\XX} )$ and probability measures $P, Q$ on this space, the Kullback--Leibler divergence, or relative entropy, is defined as
\bb
    D(P\|Q)\coloneqq \int_{\XX}\,  \log\frac{\mathrm{d} P}{\mathrm{d} Q} \, \mathrm{d} P
\ee
if $P \ll Q$ and $+\infty$ otherwise. In the former case the Radon-Nikodym derivative $\frac{\mathrm{d} P}{\mathrm{d} Q}$ is well-defined $P$-everywhere. We will mostly work with 
finite spaces $\XX$, with $\Sigma_{\XX}$ the power set of $\XX$, in which case the relative entropy takes the form 
\begin{align}
    D(P\|Q) = \sum_{x \in \XX} P(x) \log \frac{P(x)}{Q(x)}\, ,
\end{align}
and $P$, $Q$ are the probability mass functions defined on $\XX$. The discussion of the case of a continuous probability density function is
relegated to Appendix~\ref{app:cv}.
We use standard notation to describe random variables, as summarised in Table~\ref{tbl:notation}.

\begin{table}[h!]
\setlength\tabcolsep{5pt}
\begin{tabular}{r|l}
    $\XX$ & finite alphabet of symbols $x\in\XX$ with cardinality $|\XX|$ \\
    $x^n$ & $x^n=(x_1,\dots,x_n)$ is an element of $\XX^n$ \\
    $x_i$  & i-th component of the vector $x^n$ \\
    $\mathcal{P}(\XX)$ & set of probability distributions on $\XX$ \\
    $P_X$ & probability distribution on $\XX$ \\
    $X$ & random variable taking values in $\XX$ with probability distribution $P_X$ \\
    $P_{X^n}$ & a probability distribution on $\XX^n$ \\
    $X^n$ & $X^n=(X_1,\dots,X_n)$ is a random variable taking values in $\XX^n$ with probability distribution $P_{X^n}$ \\
    $P_{X_i}$ & i-th marginal of $P_{X^n}$, i.e. the probability distribution of $X_i$ \\
\end{tabular}
\caption{Summary of our notation.} \label{tbl:notation}
\end{table}

The Shannon entropy of $P$ is defined as
\begin{align}
    H(P)\coloneqq -\sum_{x\in\XX} P(x) \log P(x) \,.
\end{align}
We will later refer to several properties of the relative entropy, which we are going to quickly summarise here.

\begin{description}
\item [Positive definiteness] This is the property that $D(P\|Q) \geq 0$ with equality if and only if $P = Q$ as measures. 
\end{description}

Positive definiteness is closely related to the following well known inequality.\footnote{To prove Lemma~\ref{thm:Gibbs} from positive definiteness, it suffices to write the left-hand side of~\eqref{Gibbs} as $D(P\|P_0) - \log \sum_{x\in \XX} \exp[-A(x)]$, where $P_0$ is the probability distribution defined by the right-hand side of~\eqref{Gibbs_optimiser}, and then observe that $D(P\|P_0)\geq 0$, with equality if and only if $P=P_0$.}
\begin{lemma}[(Gibbs variational principle)]\label{thm:Gibbs}
Let $A:\XX\to\mathbb{R}$ be a function. For any $P\in\mathcal{P}(\XX)$,
    \begin{equation}
       -H(P)+\mathbb{E}_P[A(X)]\geq - \log\sum_{x\in\XX} \exp[-A(x)]\, ,
    \label{Gibbs}
    \end{equation}
    with equality if and only if
    \begin{equation}
        P(x)=\frac{\exp[-A(x)]}{\sum_{x'\in\XX}\exp[-A(x')]},
    \label{Gibbs_optimiser}
    \end{equation}
    where $\mathbb{E}_P[A(X)]\coloneqq\sum_{x\in\XX}P(x)A(x)$ is the expectation value of $A(X)$ when $X$ is distributed according to $P$.
\end{lemma}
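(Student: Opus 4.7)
The plan is to reduce the claim to positive definiteness of the relative entropy, following the strategy sketched in the footnote. First I would introduce the candidate optimiser $P_0 \in \mathcal{P}(\XX)$ defined by $P_0(x) \coloneqq \exp[-A(x)]/Z$, where the partition function $Z \coloneqq \sum_{x'\in\XX}\exp[-A(x')]$ is finite and strictly positive because $\XX$ is finite and $A$ is real-valued. Since $P_0$ has full support, $D(P\|P_0)$ is finite for every $P\in\mathcal{P}(\XX)$, so there are no issues with absolute continuity or the $0\log 0$ convention.

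Next I would expand $D(P\|P_0)$ directly from its discrete definition. Using $\log(1/P_0(x)) = A(x) + \log Z$, the cross-entropy term splits cleanly into $\mathbb{E}_P[A(X)] + \log Z$, giving
\begin{equation*}
    D(P\|P_0) \;=\; -H(P) + \mathbb{E}_P[A(X)] + \log Z,
\end{equation*}
which rearranges to $-H(P) + \mathbb{E}_P[A(X)] = D(P\|P_0) - \log Z$. Invoking positive definiteness of the relative entropy, namely $D(P\|P_0)\geq 0$ with equality if and only if $P=P_0$, immediately yields both the inequality~\eqref{Gibbs} and the identification of the unique optimiser as the Gibbs distribution~\eqref{Gibbs_optimiser}.

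There is essentially no obstacle: the only nontrivial step is guessing the right reference measure $P_0$, and this is forced by the form of the right-hand side of~\eqref{Gibbs}, which is exactly $-\log Z$ for the Boltzmann-type normalisation $P_0 \propto \exp[-A(x)]$. Everything after that reduces to a one-line algebraic manipulation followed by a single appeal to the nonnegativity of $D$.
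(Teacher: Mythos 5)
Your proof is correct and matches the paper's own argument exactly: the footnote following the lemma sketches precisely this reduction, rewriting the left-hand side as $D(P\|P_0) - \log Z$ with $P_0 \propto \exp[-A(x)]$ and invoking positive definiteness of the relative entropy. Nothing is missing.
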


\begin{description}
\item[Joint convexity] The map $(P, Q) \mapsto D(P\|Q)$ is jointly convex.

\item[Monotonicity] The relative entropy is monotone under simultanous application of stochastic maps (channels, in the following) to both arguments. 

\item[Additivity] The relative entropy is additive when both arguments are tensor products, i.e., $D(P_1 \times P_2\| Q_1 \times Q_2) = D(P_1\|Q_1) + D (P_2\|Q_2)$.
\end{description}


\section{Umlaut information}\label{sec:states}

\subsection{Definition and basic properties}

The mutual information between two random variables $X$ and $Y$ with joint probability distribution $P_{XY}$ and marginals $P_X$ and $P_Y$ is defined as
\bb
I(X\!:\!Y)\coloneqq D(P_{XY}\| P_XP_Y) \,.
\ee
Inspired by the this definition, Palomar and Verd\'{u}~\cite{Lautum_08} defined the lautum information as
\begin{equation}
L(X\!:\!Y)\coloneqq D(P_XP_Y\|P_{XY})
\end{equation}
and established some fundamental properties of this quantity. Both the mutual information and the lautum information can measure the independence between the random variables $X$ and $Y$, as they both vanish only when the random variables are independent. Additionally, they provide upper bounds on the total variation between the joint probability distributions and  the product its marginals. They have also been used to provide generalization bounds for learning algorithms~\cite{Issa}. 

Despite their superficial similarity, there are many differences between mutual and lautum information: for instance, $I(X\!:\!Y)$ can be expressed as a linear combination of entropic measures~\cite{Verdu2019Jul}, while the same property is not true for $L(X\!:\!Y)$~\cite{Lautum_08}. Furthermore, it is well known that the mutual information can be written in the following variational forms:
 \bb
    I(X\!:Y\!)= D(P_{XY}\| P_XP_Y)=\min_{Q_Y}D(P_{XY}\| P_XQ_Y)=\min_{Q_X}\min_{Q_Y}D(P_{XY}\| Q_XQ_Y).
 \ee
Palomar and Verd\'{u} pointed out that adopting such an optimised definition for the lautum information would produce a different result: they showed that $ L(X\!:\!Y)>\min_{Q_Y}D(P_XQ_Y\| P_{XY})$
for the counterexample
\bb
    P_{XY}(x,y)=\begin{cases}
        0 &x=y=0\\
        \frac{1}{3}&\text{otherwise,}
    \end{cases}
\ee
where $\XX=\YY=\{0,1\}$.
The aim of this paper is to reconsider the variational formulation and to study its properties and operational interpretation in communication.

\begin{Def}[(Umlaut information)]\label{def:lautum}
Given two random variables $X$ and $Y$ taking values in $\XX$ and $\YY$ with joint probability distribution $P_{XY}$, the umlaut information is defined as
\bb
U(X;Y)&\coloneqq \min_{Q_Y}D(P_XQ_Y\|P_{XY})
\ee
where $P_X$ is the marginal of $P_{XY}$ on $\XX$ and $Q_Y\in \mathcal{P}(\YY)$.
\end{Def}

While the lautum information $L$ is symmetric under exchange of $X$ and $Y$, in general the umlaut information $U$ is not: because of this reason, for the latter we use the semicolon instead of the colon. This asymmetry will turn out to be particularly meaningful in the setting of communication theory, as the role of the sender and the receiver is not symmetric in general. Let us first establish some basic properties of the umlaut information.

\begin{prop} \label{prop:lautum_prop}
    The umlaut information satisfies the following properties:
    \begin{enumerate}
        \item[(1)] Positive definiteness: $U(X;Y) \geq 0$ with equality if and only if $P_{XY} = P_X P_Y$.
        \item[(2)] Boundedness: $U(X;Y) < \infty \iff \exists y \in \YY$ such that $\forall x \in \XX$ either $P_{Y|X}(y|x) > 0$ or $P_X(x) = 0$, i.e., if there is a symbol in $\YY$ that cannot exclude any symbol in $\XX$.
        \item[(3)] Data-processing inequality: Consider a Markov chain $X - Y - Z$. Then,
        \begin{align}
            U(X;Z) \leq U(X;Y) \quad \textnormal{and} \quad U(X;Z) \leq U(Y;Z) \,.
        \end{align}
    \end{enumerate}    
\end{prop}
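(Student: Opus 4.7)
The plan is to exploit the closed-form expression of $U(X;Y)$ obtained earlier via the Gibbs variational principle (Lemma~\ref{thm:Gibbs})---which in particular gives uniqueness of the optimal $Q_Y^\star$ whenever $U$ is finite---together with the standard properties of the relative entropy recorded in Section~\ref{sec:prelims}.

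For (1), I would first observe that nonnegativity is immediate from the positive definiteness of $D$. The equality case splits naturally: if $P_{XY}=P_XP_Y$, then the choice $Q_Y=P_Y$ makes the objective vanish. Conversely, if $U(X;Y)=0$ then $U$ is finite, so by Lemma~\ref{thm:Gibbs} the minimiser $Q_Y^\star$ exists and is unique, and $D(P_XQ_Y^\star\|P_{XY})=0$ forces $P_XQ_Y^\star=P_{XY}$; taking the $Y$-marginal then yields $Q_Y^\star=P_Y$ and hence $P_{XY}=P_XP_Y$.

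For (2), I would inspect the closed form directly: $U(X;Y)<\infty$ holds precisely when the argument of the outer logarithm is strictly positive, i.e.\ when there exists some $y\in\YY$ for which $\sum_{x\in\XX}P_X(x)\log P_{XY}(x,y)>-\infty$. The latter is equivalent to $P_{XY}(x,y)>0$ whenever $P_X(x)>0$, which is exactly the stated condition on $P_{Y|X}$.

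For (3), the strategy is to reduce each inequality to the monotonicity of the relative entropy under stochastic maps. The first bound $U(X;Z)\le U(X;Y)$ is straightforward: I would apply $\mathrm{id}_X\otimes P_{Z|Y}$ to both arguments of $D(P_XQ_Y\|P_{XY})$, so that the first becomes $P_XQ_Z$ with $Q_Z=Q_YP_{Z|Y}$ while the second becomes $P_{XZ}$ thanks to the Markov decomposition $P_{XYZ}=P_{XY}P_{Z|Y}$; monotonicity followed by minimisation over $Q_Y$ closes the argument. The second bound $U(X;Z)\le U(Y;Z)$ is the step I expect to be the main obstacle, since one has to identify the correct channel: I would use the \emph{reverse} channel $\mathcal{N}_{Y\to X}(x|y)\coloneqq P_{X|Y}(x|y)$ tensored with $\mathrm{id}_Z$ and verify, using the Markov decomposition $P_{XYZ}=P_Y P_{X|Y}P_{Z|Y}$, that it maps $P_YQ_Z$ to $P_XQ_Z$ and $P_{YZ}$ to $P_{XZ}$; monotonicity together with minimisation over $Q_Z$ will then yield the claim.
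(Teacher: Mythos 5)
Your proposal is correct, and for parts (1) and the first inequality of (3) it essentially matches the paper's argument (for (1) you detour through uniqueness of the minimiser, which is harmless but unnecessary: positive definiteness of $D$ together with marginalisation already suffices). For (2) you take a genuinely different route: you read off finiteness from the closed-form expression, whereas the paper argues directly in terms of absolute continuity, exhibiting the point mass $Q_Y=\delta_y$ as a feasible candidate for the easy direction and showing that no $Q_Y$ can satisfy $P_X\times Q_Y\ll P_{XY}$ for the converse. Both work, but note that in the paper's exposition the closed form (Proposition~\ref{thm:closed_lautum}) is stated \emph{after} this proposition, so if you rely on it you should make clear it is an independent consequence of Lemma~\ref{thm:Gibbs}.

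The part where your proposal genuinely improves on the paper is the second inequality in (3). The paper dismisses it with ``it suffices to show the first inequality since $X-Y-Z$ implies $Z-Y-X$,'' but applying the first inequality to the reversed chain only yields $U(Z;X)\le U(Z;Y)$ --- not $U(X;Z)\le U(Y;Z)$ --- and since $U$ is \emph{not} symmetric in its two arguments these are different statements. Your explicit construction, applying $P_{X|Y}\otimes\mathrm{id}_Z$ and using the Markov factorisation $P_{XYZ}=P_Y\,P_{X|Y}\,P_{Z|Y}$ to check that it sends $P_YQ_Z\mapsto P_XQ_Z$ and $P_{YZ}\mapsto P_{XZ}$, is exactly the argument that is needed; monotonicity of $D$ then gives $D(P_YQ_Z\|P_{YZ})\ge D(P_XQ_Z\|P_{XZ})\ge U(X;Z)$, and minimising over $Q_Z$ finishes. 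So the step you flagged as ``the main obstacle'' is indeed the one the paper glosses over, and your treatment of it is the correct one.
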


\begin{proof}
    For (1), note that from the positive definiteness of the relative entropy we gather that $U(X;Y) = 0 \iff P_{XY} = P_X Q_Y$ for some $Q_Y$, but this implies that $Q_Y = P_Y$. 
    
    For (2), to show ``$\impliedby$'', we first observe that the candidate distribution with $Q(y) = 1$ makes $U(X; Y)$ finite. To show the contrapositive, we note that if $\exists x \in \XX$ such that $P_{Y|X}(y|x) = 0$ and $P_X(x) > 0$ then we must have $Q(y) = 0$ in order for the condition $P_X \times Q_Y \ll P_{XY}$ to be met. But since this must be so for all $y \in \YY$, $Q_Y$ cannot be a probability distribution.

    For (3), it suffices to show the first inequality since $X - Y - Z$ implies $Z - Y - X$. The Markov condition than implies the existence of a channel
    $\pazocal{W}_{Z|Y}$ such that $P_{XZ}(x,z) = \sum_{y \in \YY} P_{XY}(x,y) \pazocal{W}(z|y)$. Using the monotonicity of relative entropy under channels we arrive at
    \begin{align}
        U(X;Y) = \min_{Q_Y} D(P_X Q_Y \| P_{XY}) \geq \min_{Q_Y} D(P_X \tilde{Q}_Z \| P_{XZ}) \geq \min_{Q_Z} D(P_X Q_Z \| P_{XZ}) = U(X; Z) \,.
    \end{align}
    where we used that $\tilde{Q}_Z(z) = \sum_{y \in \YY} Q_Y(y) \pazocal{W}(z|y)$ is a candidate for the minimization over $\mathcal{P}(Z)$.
\end{proof}


\subsection{Closed form and optimal marginal}

In this subsection we find the minimiser appearing in our variational definition of the umlaut information. As a consequence, we can write down a closed formula for $U(X;Y)$ and prove additivity. The same computations can be done in the case of the Rényi $\alpha$-umlaut information; also in this case additivity holds. 

\begin{Def}[(Umlaut-marginal of a joint distribution)]\label{def:lautum_marginal}
Given a joint probability distribution $P_{XY}\in\mathcal{P}(\XX\times\YY)$, the umlaut-marginal $\uml{P}_Y$ of $P_{XY}$ on the alphabet $\YY$ is defined as
    \bb
        \uml{P}_Y(y)\coloneqq \frac{1}{Z[P_{XY}]}\exp\left(\sum_{x\in\XX}P_X(x)\log P_{XY}(x,y)\right)\qquad\forall y\in\YY,
    \ee
    where $Z[P_{XY}]$ is the constant making $\uml{P}_Y(y)$ a probability distribution on $\YY$ and $P_X$ is the marginal of $P_{XY}$ on $\XX$.
\end{Def}

In the following theorem we are going to prove that the umlaut-marginal of a joint distribution is the optimiser of the variational problem appearing in the definition of $U$.

\begin{prop}[(A closed formula for the umlaut information)]\label{thm:closed_lautum}
Given a joint probability distribution $P_{XY}\in\mathcal{P}(\XX\times\YY)$, let $P_X$ be the marginal on $\XX$ and $\uml{P}_Y$ be the umlaut-marginal on $\YY$. Then the variational formulation of the umlaut information provided in Definition~\ref{def:lautum} admits the following explicit representation:
    \bb
        U(X;Y)&=D(P_X\uml{P}_Y\|P_{XY})\\
        &=-H(P_X)-\log\sum_{y\in\YY}\exp\left(\sum_{x\in\XX}P_X(x)\log P_{XY}(x,y)\right).
    \ee
    Furthermore, $\uml{P}_Y$ is the unique minimiser in the definition of $U$.
\end{prop}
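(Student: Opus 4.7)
The plan is to reduce the minimization over $Q_Y$ to a direct application of the Gibbs variational principle (Lemma~\ref{thm:Gibbs}). First I would expand the relative entropy as
\begin{align*}
D(P_XQ_Y\|P_{XY}) &= \sum_{x,y} P_X(x) Q_Y(y)\bigl[\log P_X(x)+\log Q_Y(y)-\log P_{XY}(x,y)\bigr]\\
&= -H(P_X) - H(Q_Y) + \mathbb{E}_{Q_Y}[A(Y)],
\end{align*}
where I define $A(y)\coloneqq -\sum_{x\in\XX} P_X(x) \log P_{XY}(x,y)$. The $-H(P_X)$ term is a constant in $Q_Y$ and drops out of the optimization.

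Next I would apply Lemma~\ref{thm:Gibbs} to the remaining functional of $Q_Y$, namely $-H(Q_Y)+\mathbb{E}_{Q_Y}[A(Y)]$. This yields
\begin{equation*}
    -H(Q_Y)+\mathbb{E}_{Q_Y}[A(Y)] \geq -\log\sum_{y\in\YY}\exp[-A(y)] = -\log\sum_{y\in\YY}\exp\!\Bigl(\sum_{x\in\XX} P_X(x)\log P_{XY}(x,y)\Bigr),
\end{equation*}
with equality \emph{if and only if} $Q_Y(y)=\exp[-A(y)]/\sum_{y'}\exp[-A(y')]$, which is precisely $\uml{P}_Y(y)$ as defined in Definition~\ref{def:lautum_marginal}. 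Adding back $-H(P_X)$ gives the claimed closed form, and by construction it equals $D(P_X\uml{P}_Y\|P_{XY})$. Uniqueness of the minimiser is inherited directly from the equality condition in Gibbs' principle.

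The only mild subtlety is ensuring the manipulations are valid when some $P_{XY}(x,y)=0$, i.e.\ when $A(y)=+\infty$ for some $y$. If for every $y\in\YY$ there exists $x$ with $P_X(x)>0$ and $P_{XY}(x,y)=0$, then $A\equiv +\infty$ on the support of any candidate $Q_Y$, and both sides of the claimed identity equal $+\infty$, in agreement with Proposition~\ref{prop:lautum_prop}(2); in this degenerate case uniqueness is moot. Otherwise, restricting $Q_Y$ to be supported on $\{y: A(y)<\infty\}$ is without loss of generality (any $Q_Y$ placing mass elsewhere makes $D(P_X Q_Y\|P_{XY})=+\infty$), and on this restricted alphabet Gibbs' principle applies verbatim. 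This case analysis is the only step requiring care; the main calculation is otherwise an immediate rewriting plus one invocation of Lemma~\ref{thm:Gibbs}.
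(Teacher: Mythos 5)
Your proof is correct and follows essentially the same route as the paper: expand the relative entropy, isolate the $-H(P_X)$ constant, and apply the Gibbs variational principle (Lemma~\ref{thm:Gibbs}) with $A(y)=-\sum_x P_X(x)\log P_{XY}(x,y)$ to identify $\uml{P}_Y$ as the unique minimiser. Your closing paragraph on the support/degeneracy issues is a sensible extra caution that the paper's proof leaves implicit, but it does not change the underlying argument.
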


\begin{proof}
Let us compute
\bb
    U(X;Y)&= \min_{Q_Y}D(P_XQ_Y\|P_{XY})\\
    &=-H(P_X)+\min_{Q_Y} \sum_{y\in\YY}Q_Y(y)\left(\log Q_Y(y)-\sum_{x\in\XX}P_X(x)\log P_{XY}(x,y) \right)\\
    &= -H(P_X)-\log\sum_{y\in\YY}\exp\left(\sum_{x\in\XX}P_X(x)\log P_{XY}(x,y)\right),
\ee
where in the last line we have used the Gibbs variational principle (Lemma~\ref{thm:Gibbs}) to solve the minimization problem over $Q_Y$. The unique minimiser $\uml{P}_Y$ turns out to be the Gibbs state corresponding to 
\bb
A(y) = -\sum_{x\in\XX}P_X(x)\log P_{XY}(x,y)
\ee
in the statement of Lemma~\ref{thm:Gibbs}, which is the umlaut-marginal of $P_{XY}$ on $\YY$.
\end{proof}

An alternative argument to derive the closed form of the umlaut information presented in the above Proposition~\ref{thm:closed_lautum} can be obtained by carefully taking the limit $\alpha\to 1$ in~\cite[Lemma~38]{ji_2024}. Using the explicit formula for $U$, it is elementary to prove its additivity.

\begin{cor}[(Additivity of $U$)]\label{cor:additivity}
Given two joint probability distributions $P_{XY}\in\mathcal{P}(\XX\times\YY)$ and $P'_{X'Y'}\in\mathcal{P}(\XX'\times\YY')$, let us consider the pairs of random variables $(X,X')$ and $(Y,Y')$ distributed according the product $P_{XY}P'_{X'Y'}$. Then their umlaut information is given by the sum
    \begin{equation}
        U(XX';YY')=U(X;Y)+U(X';Y').
    \end{equation}
\end{cor}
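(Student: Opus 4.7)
The plan is to deduce the additivity directly from the closed-form expression established in Proposition~\ref{thm:closed_lautum}, relying only on two elementary facts: the additivity of the Shannon entropy under products, i.e.\ $H(P_X P'_{X'}) = H(P_X) + H(P'_{X'})$, and the fact that a sum over a product index set of an exponential whose exponent splits as a sum factorises into a product.

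Concretely, I would start by applying Proposition~\ref{thm:closed_lautum} to the joint distribution $P_{XY}P'_{X'Y'}$ on $(\XX\times\XX')\times(\YY\times\YY')$, whose marginal on $\XX\times\XX'$ is $P_X P'_{X'}$. The key manipulation is to rewrite the exponent appearing in the closed formula:
\begin{align*}
&\sum_{x,x'} P_X(x) P'_{X'}(x')\log\!\bigl(P_{XY}(x,y)\,P'_{X'Y'}(x',y')\bigr) \\
&\qquad = \sum_{x} P_X(x)\log P_{XY}(x,y) \;+\; \sum_{x'} P'_{X'}(x')\log P'_{X'Y'}(x',y'),
\end{align*}
where I used $\sum_{x'} P'_{X'}(x') = \sum_{x} P_X(x) = 1$ to eliminate the respective cross-sums. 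Since this expression is a sum of a function of $y$ alone and a function of $y'$ alone, exponentiating and summing over $(y,y')\in\YY\times\YY'$ factorises into a product of two independent sums, and the logarithm of the product becomes a sum of two logarithms; combined with the additivity of the entropy contribution, this yields $U(XX';YY')=U(X;Y)+U(X';Y')$.

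There is no real obstacle here: the only subtlety worth flagging is the boundary case where some of the $P_{XY}(x,y)$ vanish at points where $P_X(x)>0$, in which case one or both sides may be $+\infty$; this is handled by the convention $0 \log 0 = 0$ together with the boundedness characterisation in Proposition~\ref{prop:lautum_prop}(2), which shows that $U(XX';YY')$ is finite if and only if both $U(X;Y)$ and $U(X';Y')$ are finite, so the equality also holds in the extended reals.

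As an alternative route, one could instead observe that the umlaut-marginal of a product factorises, i.e.\ $\uml{(P_{XY}P'_{X'Y'})}_{YY'} = \uml{P}_Y\,\uml{P'}_{Y'}$, directly from Definition~\ref{def:lautum_marginal}, and then apply the additivity of the Kullback--Leibler divergence under tensor products (listed in Section~\ref{sec:prelims}) to the representation $U(X;Y) = D(P_X\uml{P}_Y\|P_{XY})$ from Proposition~\ref{thm:closed_lautum}. Either route is short, and I would prefer the second as it is conceptually cleaner and emphasises that the umlaut-marginal construction respects tensor products.
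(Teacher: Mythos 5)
Your proposal is correct, and your preferred second route (umlaut-marginal factorises, then apply additivity of the Kullback--Leibler divergence to the representation $U(X;Y)=D(P_X\uml{P}_Y\|P_{XY})$) is exactly the paper's proof; the first route is the same computation written out in the scalar closed formula. Your aside on the extended-reals case is a sound observation, though strictly one has to derive the "iff" from Proposition~\ref{prop:lautum_prop}(2) rather than read it off directly --- it is a short argument but not stated there verbatim.
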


\begin{proof}
    The umlaut-marginal $\uml{P}_{YY'}$ of $P_{XY}P'_{X'Y'}$ factorises due to the additivity of the logarithm:
\bb
        \uml{P}_{YY'}(y,y')&= \frac{1}{Z[P_{XY}P'_{X'Y'}]}\exp\left(\sum_{\substack{x\in\XX\\x'\in\mathcal{X'}}}P_X(x)P_{X'}(x')\log \left(P_{XY}(x,y)P'_{X'Y'}(x',y')\right)\right)\\
        &= \frac{1}{Z[P_{XY}P'_{X'Y'}]}\exp\left(\sum_{x\in\XX}P_X(x)\log P_{XY}(x,y)+\sum_{x'\in\XX'}P_{X'}(x')\log P_{X'Y'}(x',y')\right)\\
        &=\uml{P}_{Y}(y)\uml{P}'_{Y'}(y')
    \ee
whence, by the closed formula of Proposition~\ref{thm:closed_lautum},
\bb
    U(XX';YY')&=D(P_{X}P_{X'}\uml{P}_{YY'}\|P_{XY}P'_{X'Y'})\\
    &=D(P_{X}P_{X'}\uml{P}_{Y}\uml{P}'_{Y'}\|P_{XY}P'_{X'Y'})\\
    &=D(P_{X}\uml{P}_{Y}\|P_{XY})+D(P'_{X}\uml{P}'_{Y'}\|P'_{X'Y'})\\
    &=U(X;Y)+U(X';Y'),
\ee
which proves the additivity of $U$.
\end{proof}

\begin{rem}
    We can manipulate the closed-form expression for the umlaut information in order to find an alternative form:
    \bb\label{eq:alternative_um}
        U(X;Y)&=-H(P_X)-\log\sum_{y\in\YY}\exp\left(\sum_{x\in\XX}P_X(x)\log P_{XY}(x,y)\right)\\
        &=-\log\sum_{y\in\YY}P_Y(y)\exp\left(\sum_{x\in\XX}P_X(x)\log \frac{P_{XY}(x,y)}{P_X(x)P_Y(y)}\right)\\
        &=-\log \mathbb{E}_{Z\sim P_Y}\left[  \exp\left(-D(P_X\|P_{X|Y=Z})\right)\right].
 \ee
\end{rem}

\subsection{Rényi divergence variant}

The definition the umlaut information (Definition~\ref{def:lautum}) can be generalised in terms of the Rényi $\alpha$-relative entropies. These quantities previously appeared in~\cite[Appendix~E]{ji_2024} as a tool used to obtain bounds on the error probability of (quantum) state exclusion.

\begin{Def}[(R\'enyi $\alpha$-umlaut information)] \label{def:renyi-umlaut}
Let $\alpha\in(0,1)\cup (1,\infty)$. Given two random variables $X$ and $Y$ taking values in $\XX$ and $\YY$ with joint probability distribution $P_{XY}$, the R\'enyi $\alpha$-umlaut information is defined as
\bb
U_\alpha(X;Y)&\coloneqq \min_{Q_Y}D_\alpha(P_XQ_Y\|P_{XY})
\ee
where $P_X$ is the marginal of $P_{XY}$ on $\XX$ and $Q_Y\in \mathcal{P}(\YY)$; $D_\alpha$ is the R\'enyi $\alpha$-relative entropy: given $P,Q\in\mathcal{P}(\XX)$,
\bb\label{eq:Renyi}
    D_{\alpha}(P\|Q)\coloneqq \frac{1}{\alpha-1}\log\sum_{x\in \XX}P^\alpha(x)Q^{1-\alpha}(x)\, .
\ee
\end{Def}

\begin{lemma}\label{lem:alpha_to_1}
     Given two random variables $X$ and $Y$, it holds that
    \bb
        \lim_{\alpha\to 1^-} U_\alpha(X;Y)=U(X;Y)
    \ee
    and
    \bb
        \lim_{\alpha\to 1^+} U_\alpha(X;Y)=U(X;Y).
    \ee
\end{lemma}

\begin{proof}
    Let $P_{XY}$ be the joint probability distribution of $X$ and $Y$. Since $\alpha\mapsto D_\alpha(p\|q)$ is a monotonically increasing function, for $0<\alpha<1$ we can write
    \bb
        \lim_{\alpha\to 1^-} U_\alpha(X;Y)=\sup_{\alpha<1}\min_{Q_Y}D_\alpha(P_XQ_Y\|P_{XY})
    \ee
    By the Mosonyi--Hiai minimax theorem~\cite[Corollary A2]{MosonyiHiai}, we can rewrite
    \bb
        \lim_{\alpha\to 1^-} U_\alpha(X;Y)=\min_{Q_Y}\sup_{\alpha<1}D_\alpha(P_XQ_Y\|P_{XY})\eqt{(i)}\min_{Q_Y}D(P_XQ_Y\|P_{XY})=U(X;Y),
    \ee
    where in~(i) we have used that the R\'{e}nyi relative entropies converge to the Kullback--Leibler divergence as $\alpha\to 1^-$. For $\alpha>1$, leveraging again on the monotonicity of $\alpha\mapsto D_\alpha(p\|q)$, we can easily compute
    \bb
        \lim_{\alpha\to 1^+}U_\alpha(X;Y)&=\inf_{\alpha<1}\min_{Q_Y}D_\alpha(P_XQ_Y\|P_{XY})=\min_{Q_Y}\inf_{\alpha<1}D_\alpha(P_XQ_Y\|P_{XY})\\
        &\eqt{(iii)}\min_{Q_Y}D(P_XQ_Y\|P_{XY})=U(X;Y),
    \ee
    where in~(iii) we have used that the Rényi relative entropies converge to the Kullback--Leibler divergence as $\alpha\to 1^+$.
\end{proof}

Using the approach of Sibson~\cite{sibson_1969}, for the Rényi $\alpha$-umlaut information a similar closed formula can be shown \cite[Lemma~38]{ji_2024}. Remarkably, it also turns out to be additive.

\begin{prop}[{(A closed formula and additivity for $U_\alpha$ \cite[Lemma~38]{ji_2024})}]\label{thm:closed_alpha}
    For any $\alpha\in(0,1)\cup (1,\infty)$, given a joint probability distribution $P_{XY}\in\mathcal{P}(\XX\times\YY)$, it holds that
    \bb\label{eq:closed1}
        U_\alpha(X;Y)= D_\alpha\big(P_X\uml{P}^{(\alpha)}_Y\,\big\|\,P_{XY}\big) = -\log \sum_{y \in \YY} P_Y(y)  \left( \sum_{x\in\XX}P_X^\alpha(x)P_{X|Y}^{1-\alpha}(x|y) \right)^{\frac{1}{1-\alpha}},
    \ee
    where $P_X$ is the marginal of $P_{XY}$ on $\XX$ and, assuming $U_\alpha(X;Y)<+\infty$,
    \bb\label{eq:closed2}
        \uml{P}_Y^{(\alpha)}(y)\coloneqq \frac{1}{Z_\alpha[P_{XY}]}\left(\sum_{x\in\XX}P_X^\alpha(x)P_{XY}^{1-\alpha}(x,y)\right)^{\frac{1}{1-\alpha}},
    \ee
    where $Z_\alpha[P_{XY}]$ is the normalisation constant making $\uml{P}_Y^{(\alpha)}$ a probability distribution on $\YY$, with the conventions $0^{-1}=+\infty$ and $+\infty^{-1}=0$. If $U_\alpha(X;Y)=+\infty$, then $\uml{P}_Y^{(\alpha)}$ can be taken to be any distribution in $\mathcal{P}(\YY)$. In particular, $U_\alpha$ is additive for $\alpha\in(0,1)\cup (1,\infty)$. 
\end{prop}

\begin{proof}
    Let us consider the case $\alpha\in(0,1)$ and rewrite
\bb
    U_\alpha(X;Y)&=\min_{Q_Y}\frac{1}{\alpha-1}\log\sum_{\substack{x\in\XX\\ y\in\YY}}P_X^\alpha(x)Q_Y^\alpha(y)P_{XY}^{1-\alpha}(x,y)\\
    &=\frac{1}{\alpha-1}\log \max_{Q_Y}\sum_{y\in\YY}Q_Y^\alpha(y)\sum_{x\in\XX}P_X^\alpha(x)P_{XY}^{1-\alpha}(x,y).
\ee
Let $f,g:\YY\to \R$ be defined as $f(y)\coloneqq Q_Y^\alpha(y)$ and $g\coloneqq \sum_{x\in\XX}P_X^\alpha(x)P_{XY}^{1-\alpha}(x,y)$, for all $y\in \YY$. Then
\bb
f,g\geq 0, \qquad \|f\|_{1/\alpha}=1\, ,
\ee
where, for an arbitrary $p>0$, we defined the norm $\|f\|_p\coloneqq \left(\sum_{y\in \YY} |f(y)|^p\right)^{1/p}$. Using this observation, we can alternatively cast the optimisation over $Q_Y$ as an optimisation over vectors $f\geq 0$ such that $\|f\|_{1/\alpha} =1$. Then
\bb\label{eq:U_a}
    U_\alpha(X;Y)
    &=\frac{1}{\alpha-1}\log \max_{\substack{f\geq 0,\\ \|f\|_{1/\alpha}=1}}\langle f, g\rangle = \frac{1}{\alpha-1}\log \|g\|_{\frac{1}{1-\alpha}}=-\log \left\|\sum_{x\in\XX}P_X^\alpha(x)P_{XY}^{1-\alpha}(x,\,\cdot\,)\right\|_{\frac{1}{1-\alpha}}^{\frac{1}{1-\alpha}},
\ee
where in the second equality we observed that the maximum is achieved by $f = g^{\frac{\alpha}{1-\alpha}}\big/\|g\|_{\frac{1}{1-\alpha}}^{\frac{\alpha}{1-\alpha}}$, according to H\"older's inequality; in other words,
\bb \label{def:alpha-marginal}
f=\frac{\left(\sum_{x\in\XX}P_X^\alpha(x)P_{XY}^{1-\alpha}(x,\,\cdot\,)\right)^{\frac{\alpha}{1-\alpha}}}{\left[
\sum_{y\in\YY}\left(\sum_{x\in\XX} P_X^\alpha(x)P_{XY}^{1-\alpha}(x,y)\right)^{\frac{1}{1-\alpha}}\right]^{\alpha} } \eqcolon \left(\uml{P}_{Y}^{(\alpha)}\right)^{\alpha}.
\ee
We can rewrite~\eqref{eq:U_a} further:
\bb\label{eq:as_later}
 U_\alpha(X;Y)&=-\log \sum_{y \in \YY} P_Y(y)  \left( \sum_{x\in\XX}P_X^\alpha(x)P_{X|Y}^{1-\alpha}(x|y) \right)^{\frac{1}{1-\alpha}}.
\ee
Let us call in general $\uml{P}_{Y}^{(\alpha)}$ the $\alpha$-umlaut marginal of an arbitrary distribution $P_{XY}$, as in~\eqref{def:alpha-marginal}.
If we consider any product distribution $P_{XY}P'_{X'Y'}$, it is easy to verify that its $\alpha$-umlaut reduced state is the tensor product of $\uml{P}_{Y}$ and $\uml{P}'_{Y'}$. Since the Rényi quantum relative entropies are additive, this concludes the proof of the additivity of $U_\alpha$.

We now consider the case of $\alpha>1$. Let us assume $U_\alpha(X;Y)<\infty$.
\bb\label{eq:alphag1}
    U_\alpha(X;Y)&=\min_{Q_Y}\frac{1}{\alpha-1}\log\sum_{\substack{x\in\XX\\ y\in\YY}}P_X^\alpha(x)Q_Y^\alpha(y)P_{XY}^{1-\alpha}(x,y)\\
    &\eqt{(i)}\min_{Q_Y\in\mathcal{P}(\YY^\ast)}\frac{1}{\alpha-1}\log\sum_{\substack{x\in\mathrm{supp}(P_X)\\ y\in\YY^\ast}}P_X^\alpha(x)Q_Y^\alpha(y)P_{XY}^{1-\alpha}(x,y)\\
    &\eqt{(ii)}\min_{Q_Y\in\mathcal{P}(\YY^\ast)}\frac{1}{\alpha-1}\log \sum_{y\in\YY^\ast}Q_Y^\alpha(y)\left(\frac{1}{Z_\alpha[P_{XY}]}\left(\sum_{x\in\XX}P_X^\alpha(x)P_{XY}^{1-\alpha}(x,y)\right)^{\frac{1}{1-\alpha}}\right)^{1-\alpha}-\log Z_\alpha[P_{XY}]\\
    &\eqt{(iii)}\min_{Q_Y\in\mathcal{P}(\YY^\ast)}D_{\alpha}(Q_Y\|\uml{P}_Y^{(\alpha)})-\log Z_{\alpha}[P_{XY}]
\ee
where, starting from~(i), we denote $\YY^\ast\coloneqq\{y\in\YY: P(x,y)>0\ \ \forall x\in\mathrm{supp}(P_X)\}$; we can indeed restrict the optimization to probability distributions supported on $\YY^\ast$ since, if $Q_Y(y)>0$ for $y\in\YY\setminus\YY^\ast$, then we would have
\bb
    \sum_{\substack{x\in\XX\\ y\in\YY}}P_X^\alpha(x)Q_Y^\alpha(y)P_{XY}^{1-\alpha}(x,y)=+\infty.
\ee
Clearly, $\YY^\ast=\emptyset$ if and only if $U_\alpha(X;Y)=+\infty$. In (ii) we have introduced
\bb
    Z_\alpha[P_{XY}]\coloneqq \sum_{y\in\YY^\ast}\left(\sum_{x\in\XX}P_X^\alpha(x)P_{XY}^{1-\alpha}(x,y)\right)^{\frac{1}{1-\alpha}}
\ee
and in (iii) we have defined
\bb
    \uml{P}_Y^{(\alpha)}(y)\coloneqq\frac{1}{Z_\alpha[P_{XY}]}\left(\sum_{x\in\XX}P_X^\alpha(x)P_{XY}^{1-\alpha}(x,y)\right)^{\frac{1}{1-\alpha}}\qquad y\in\YY^\ast,
\ee
which is a probability distribution on $\YY^\ast$ and which can naturally be extended to a probability distribution on $\YY$ by setting $\uml{P}_Y^{(\alpha)}(y)=0$ for all $y\in\YY\setminus\YY^\ast$. Using the conventions $0^{-1}=+\infty$ $+\infty^{-1}=0$, since $\alpha>1$ we could formally rewrite
\bb
    Z_\alpha[P_{XY}]=\sum_{y\in\YY}\left(\sum_{x\in\XX}P_X^\alpha(x)P_{XY}^{1-\alpha}(x,y)\right)^{\frac{1}{1-\alpha}},
\ee
extending the sum to $\YY$, and similarly
\bb
    \uml{P}_Y^{(\alpha)}(y)\coloneqq\frac{1}{Z_\alpha[P_{XY}]}\left(\sum_{x\in\XX}P_X^\alpha(x)P_{XY}^{1-\alpha}(x,y)\right)^{\frac{1}{1-\alpha}}, \qquad y\in\YY,
\ee
Looking at the last line of~\eqref{eq:alphag1}, we notice that
\bb
     U_\alpha(X;Y)=-\log Z_\alpha[P_{XY}]
\ee
and that the minimisation problem is saturated for $Q_Y=\uml{P}_Y^{(\alpha)}$, which concludes the proof of~\eqref{eq:closed1} and~\eqref{eq:closed2}. If we consider $P_{XY}P'_{X'Y'}$ and we construct
    \bb
        (\YY\times \YY')^\ast = \{(y,y')\in \YY\times \YY': P_{XY}(x,y)P'_{X'Y'}(x',y')>0\ \ \forall (x,x')\in\mathrm{supp}(P_XP'_{X'})\},
    \ee
    then $(\YY\times \YY')^\ast = \YY^\ast \times (\YY')^\ast$. Indeed,
    \bb
    P_{XY}(x,y)P'_{X'Y'}(x',y')>0\quad&\iff\quad P_{XY}(x,y)>0 \text{ and } P'_{X'Y'}(x',y')>0,\\
    (x,x')\in\mathrm{supp}(P_XP'_{X'})\quad&\iff\quad x\in\mathrm{supp}(P_X) \text{ and } x'\in\mathrm{supp}(P'_{X'}).
    \ee
    So, the optimiser in $U_\alpha(XX';YY')$ is proportional to
    \bb
        \id_{(\YY\times\YY')^\ast}(y,y')&\left(\sum_{\substack{x\in\XX\\x'\in\XX'}}P_X^\alpha(x)P_{X'}^\alpha(x')P_{XY}^{1-\alpha}(x,y){P'}_{X'Y'}^{1-\alpha}(x',y')\right)^{\frac{1}{1-\alpha}}\\
        &=\id_{\YY^\ast}(y)\id_{(\YY')^\ast}(y')\left(\sum_{x\in\XX}P_X^\alpha(x)P_{XY}^{1-\alpha}(x,y)\right)^{\frac{1}{1-\alpha}}\left(\sum_{x'\in\XX'}{P'}_{X'}^\alpha(x')P_{XY}^{1-\alpha}(x',y')\right)^{\frac{1}{1-\alpha}},
    \ee
    whence
    \bb
        U_\alpha(XX';YY')&=D_\alpha(P_X\uml{P}_Y^{(\alpha)}P'_{X'}\uml{P'}_{Y'}^{(\alpha)}\|P_{XY}P'_{X'Y'})\\
        &=D_\alpha(P_X\uml{P}_Y^{(\alpha)}\|P_{XY})+D_\alpha(P'_{X'}\uml{P'}_{Y'}^{(\alpha)}\|P'_{X'Y'})=U_\alpha(X;Y)+U_\alpha(X';Y').
    \ee
 This concludes the proof of the additivity.

\end{proof}

\begin{rem}
    For $\alpha\in(0,1)\cup(1,\infty)$, we can rewrite, similarly to~\eqref{eq:alternative_um},
    \bb\label{eq:new_form}
        U_\alpha(X;Y)&=-\log \sum_{y \in \YY} P_Y(y)  \left( \sum_{x\in\XX}P_X^\alpha(x)P_{X|Y}^{1-\alpha}(x|y) \right)^{\frac{1}{1-\alpha}}\\
        &=-\log \sum_{y \in \YY} P_Y(y)  \exp\left(-D_\alpha(P_X\|P_{X|Y=y})\right)\\
        &=-\log \mathbb{E}_{Z\sim P_Y}\left[  \exp\left(-D_\alpha(P_X\|P_{X|Y=Z})\right)\right].
 \ee
\end{rem}
This form suggests another proof of Lemma~\ref{lem:alpha_to_1}, where we showed that the Rényi $\alpha$-umlaut information converges to the umlaut information in the limit $\alpha \to 1$.
\begin{proof}[Alternative proof of Lemma~\ref{lem:alpha_to_1}]
    By the closed-form expression in~\eqref{eq:new_form}, we can compute
    \bb
        \lim_{\alpha\to 1^{\pm}}U_\alpha(X;Y) 
        &\eqt{(i)}-\log \sum_{y \in \YY} P_Y(y)  \exp\left(-\lim_{\alpha\to 1^\pm}D_\alpha(P_X\|P_{X|Y=y})\right)\\
        &=-\log \sum_{y \in \YY} P_Y(y)  \exp\left(-D(P_X\|P_{X|Y=y})\right)\\
        &=U(X;Y).
    \ee
    where in (i) we have used that the sum is over a finite set.
\end{proof}


\subsection{Operational interpretation in composite hypothesis testing}

Let $\XX$ be a finite set of symbols and let $\mathcal{P}(\XX)$ be the set of probability distributions on $\XX$. Let us suppose that a random variable $X$ is distributed according to $Q$ (null hypothesis $H_0$) or $P$ (alternative hypothesis $H_1$). 
The task of simple asymmetric hypothesis testing consists in sampling $n$ i.i.d.\ copies of $X$ in order to decide whether $H_0$ or $H_1$ holds. The asymmetry stems from the role of the two hypotheses: for instance, we may want to detect $H_1$ as efficiently as possible when it is the case, provided that under $H_0$ the probability of a false alarm is under a fixed threshold, or vice-versa. A test guessing between $H_0$ and $H_1$ starting from $n$ samples of $X$ is a map $\mathcal{A}_n:\XX^n\to \{H_0,H_1\}$, called acceptance function, which we do not necessarily assume to be deterministic. Let $X^n=(X_1,\dots,X_n)$ be the vector of $n$ i.i.d.\ copies of $X$ and let $Q^{\times n}$ and $P^{\times n}$ the corresponding probability distributions according to $H_0$ and $H_1$, respectively. Two kinds of error could occur in the guess of $\mathcal{A}_n$:
\begin{itemize}
    \item $H_0$ holds, but $\mathcal{A}_n$ guesses $H_1$ (error of type I, or false positive);
    \item $H_1$ holds, but $\mathcal{A}_n$ guesses $H_0$ (error of type II, or false negative).
\end{itemize}
The corresponding error probabilities are
\begin{align}
\alpha(\mathcal{A}_n)&\coloneqq \mathbb{P}_{X^n\sim Q^{\times n}}\left(\mathcal{A}_n(X_1,\dots,X_n)=H_1\right)&\text{type-I error}, \label{type_I_err_iid} \\
\beta(\mathcal{A}_n)&\coloneqq \mathbb{P}_{X^n\sim P^{\times n}}\left(\mathcal{A}_n(X_1,\dots,X_n)=H_0\right)&\text{type-II error}. \label{type_II_err_iid}
\end{align}
Given any $\epsilon\in (0,1)$, the hypothesis testing relative entropy is defined as
\bb
D^{\epsilon}_H(Q^{\times n}\|P^{\times n})\coloneqq -\log\min\left\{\beta(\mathcal{A}_n)\,:\, \alpha(\mathcal{A}_n)\leq \epsilon\right\}.
\label{D_H_iid}
\ee
The type II error exponent will asymptotically decay according to the error exponent
\bb
\text{Stein}(Q\|P)\coloneqq\lim_{\epsilon\to 0}\liminf_{n\to \infty}\frac{1}{n}D^\epsilon_H(Q^{\times n}\|P^{\times n})\, ,
\ee
which we refer to as 
Stein exponent. The Chernoff--Stein 
lemma~\cite{stein_unpublished, chernoff_1956} states that
\bb
\text{Stein}(Q\|P)=D(Q\|P)\, ,
\ee
where $D(Q\|P)$ is the relative entropy.
This means that, when the type-I error probability is below a fixed threshold, the optimal type-II error probability asymptotically decays as $\exp(-nD(Q\|P))$, i.e.\ the distinguishability between $Q$ and $P$ in this asymmetric setting is quantified by their relative entropy. 

Here, we are interested in the generalisation of the above setting where the null hypothesis is composite, i.e.\ is given by a set of non-i.i.d distributions. Namely, for all $n\in \N^+$, let $\FF_n\subseteq \mathcal{P}(\XX^n)$ be a given family of distributions; we collect the sets $\FF_n$ in the sequence $\FF=(\FF_n)_{n\in\N^+}$. The new task is testing whether $X^n$ is distributed according to $Q_n$, for some $Q_n\in \FF_n$, or according to the i.i.d.\ distribution $P^{\times n}$. 
The error probabilities are 
\begin{align}
\alpha(\mathcal{A}_n)&\coloneqq \sup_{Q_n\in \FF_n} \mathbb{P}_{X^n\sim Q_n}\left(\mathcal{A}_n(X_1,\dots,X_n)=H_1\right)&\text{type-I error}, \label{type_I_err_pr_composite} \\
\beta(\mathcal{A}_n)&\coloneqq \mathbb{P}_{X^n\sim P^{\times n}}\left(\mathcal{A}_n(X_1,\dots,X_n)=H_0\right)&\text{type-II error}. \label{type_II_err_pr_composite} 
\end{align}
Minimising the type II error with the usual constraint on the type I error yields a decay rate for the former equal to
\bb\label{eq:stein_def}
    \mathrm{Stein}(\FF\|P)\coloneqq \lim_{\epsilon\to 0}\liminf_{n\to \infty}\frac{1}{n} D^\epsilon_H(\FF_n\|P^{\times n}),
\ee
where $P^{\times n}$ is the product distribution corresponding to $H_1$, and
\bb
D^\epsilon_H(\FF_n\|P^{\times n}) \coloneqq -\log\min\left\{\beta(\mathcal{A}_n)\,:\, \alpha(\mathcal{A}_n)\leq \epsilon\right\},
\label{D_H_composite}
\ee
with $\beta(\mathcal{A}_n)$ now given by~\eqref{type_II_err_pr_composite} (cf.~\eqref{D_H_iid}). Note that if $\FF_n$ is a compact
convex set for every $n$, von Neumann's minimax theorem~\cite{vN-minimax} allows us to write
\bb
D^\epsilon_H(\FF_n\|P^{\times n}) = \min_{Q_n\in \FF_n} D^\epsilon_H(Q_n\|P^{\times n})\, ,
\label{D_H_minimax}
\ee
where $D^\epsilon_H(Q_n\|P^{\times n})$ is given by a formula analogous to~\eqref{D_H_iid}, the only difference being that $Q^{\times n}$ is replaced by $Q_n$ in the definition of the type-I error probability~\eqref{type_I_err_iid}.

We can also consider the strong converse Stein exponent, given by a modified version of~\eqref{eq:stein_def} in which we allow the type I error to be arbitrarily close to $1$, instead of arbitrarily small, and we replace the limit inferior in $n$ with a limit superior:
\bb\label{eq:stein_sc_def}
    \mathrm{Stein}^\dag(\FF\|P)\coloneqq \lim_{\epsilon\to 1^-}\limsup_{n\to \infty}\frac{1}{n} D^\epsilon_H(\FF_n\|P^{\times n}) .
\ee
Clearly, in general $\mathrm{Stein}^\dag(\FF\|P) \geq \mathrm{Stein}(\FF\|P)$. In many interesting cases, however, equality holds; equivalently, $\lim_{n\to \infty}\frac{1}{n} D^\epsilon_H(\FF_n\|P^{\times n})$ exists for all $\e\in (0,1)$, and its value is independent of $\e$. This holds, for example, when both hypotheses are simple and i.i.d., in which case we have indeed~\cite{stein_unpublished, chernoff_1956}
\bb
\label{eq:strong_converse_property_iid}
D(Q\|P) = \mathrm{Stein}(Q \| P) = \mathrm{Stein}^\dagger(Q\|P ) = \lim_{n\to \infty}\frac{1}{n}D^\epsilon_H(Q^{\times n}\|P^{\times n})\qquad \forall\ \e\in (0,1)\, .
\ee

The operational interpretation of the umlaut information emerges in the composite hypothesis testing problem where the underlying alphabet is bipartite, i.e.\ of the form $\XX\times \YY$, the alternative hypothesis is i.i.d.\ with single-copy distribution $P \mapsto P_{XY}$, and the null hypothesis is composite and of the form $Q_n \mapsto P_X^{\times n} Q_{Y^n}$. Here, $P_X$ is the marginal of $P_{XY}$ to the $X$ variable, and $Q_{Y^n}$ denotes an arbitrary probability distribution on $\YY^n$. With the role of the two hypotheses interchanged, this problem previously appeared in~\cite{Polyanskiy13,hayashi_2016-1,tomamichel_2018}.

More explicitly, given a bipartite probability distribution $P_{XY}$ on $\XX\times\YY$, the random variables $X^n$ and $Y^n$ taking values in $\XX^n$ and $\YY^n$, respectively, are distributed according to one of the following hypotheses:
\begin{itemize}
    \item $H_0$: the probability of observing $X_1=x_1,\dots, X_n=x_n$ and $Y_1=y_1, \dots, Y_n=y_n$ is given by $P_{X}(x_1)\cdots P_{X}(x_n)Q_{Y^n}(y_1,\dots, y_n)$, where $P_X$ is the marginal of $P_{XY}$ on $\XX$, and $Q_{Y^n}$ could be any probability distribution in $\mathcal{P}(\YY^n)$;
    \item $H_1$: the probability of observing $X_1=x_1,\dots, X_n=x_n$ and $Y_1=y_1, \dots, Y_n=y_n$ is given by $P_{XY}(x_1,y_1)\cdots P_{XY}(x_n,y_n)$.
\end{itemize}

To phrase the above hypothesis testing task in the composite hypothesis testing framework described earlier, it suffices to define $\FF^{P_X} \coloneqq \big(\FF^{P_X}_n\big)_n$, with
\bb
\FF^{P_X}_n &\coloneqq \left\{P_X^{\times n} Q_{Y^n}:\ Q_{Y^n}\in \mathcal{P}(\mathcal{Y}^n)\right\}.
\ee
Note that $\FF^{P_X}_n$ is a compact convex set, hence we can apply~\eqref{D_H_minimax} and write
\bb
D^\epsilon_H\big(\FF_n^{P_X}\,\big\|\,P_{XY}^{\times n}\big) = \min_{Q_{Y^n}\in \mathcal{P}(\mathcal{Y^n})} D^\epsilon_H\big(P_X^{\times n} Q_{Y^n}\,\big\|\,P_{XY}^{\times n}\big)\, .
\label{D_H_minimax_product_testing}
\ee
The following theorem states that the corresponding Stein exponent coincides with the umlaut information between $X$ and $Y$, and that this equality holds also in the strong converse regime.

\begin{boxed}{}
\begin{thm}[(Operational interpretation of the umlaut information)] \label{thm:interp_L2}
Given a joint probability distribution $P_{XY}\in\mathcal{P}(\XX\times\YY)$, let $P_X$ be the marginal on $\XX$. Then it holds that
\bb
U(X;Y)=\lim_{n\to \infty}\frac{1}{n} D^\epsilon_H\big(\FF^{P_X}\,\big\|\, P_{XY}\big) \qquad \forall\ \e\in (0,1)\, ;
\ee
equivalently,
\bb
U(X;Y) = \mathrm{Stein}\big(\FF^{P_X}\,\big\|\, P_{XY}\big) = \mathrm{Stein}^\dag\big(\FF^{P_X}\,\big\|\, P_{XY}\big)\, .
\ee
\end{thm}
\end{boxed}

\begin{proof} 
Let $\alpha\in(0,1)$, and, as usual, let $P_{XY}^{\times n}\in\mathcal{P}(\XX^n\times\YY^n)$ be the i.i.d.\ distribution 
\bb
P^{\times n}_{XY}(x_1,\dots,x_n,y_1,\dots,y_n)=P_{XY}(x_1,y_1)\cdots P_{XY}(x_n,y_n)\, .
\ee
We denote as $P^{\times n}_X$ its marginal on $\XX^n$. A standard argument shows that (see, e.g., \cite{Hayashi_2007,Audenaert2012_quantum})
\bb\label{eq:Dalpha}
D^\epsilon_H(p\|q)\geq D_\alpha(p\|q)+\frac{\alpha}{1-\alpha}\log\frac{1}{\epsilon}.
\ee
We can use this inequality as follows:
\bb\label{proof_interp_L2_eq3}
\mathrm{Stein}\big(\FF^{P_X}\,\big\|\,P_{XY}\big) &\eqt{(i)} \lim_{\epsilon\to 0}\liminf_{n\to \infty}\frac{1}{n}\min_{Q_{Y^n}\in\mathcal{P}(\YY^n)}D^\epsilon_H(P_X^{\times n} Q_{Y^n}\|P_{XY}^{\times n})\\
&\geq \lim_{\epsilon\to 0}\liminf_{n\to \infty}\frac{1}{n}\min_{Q_{Y^n}\in\mathcal{P}(\YY^n)}\left(D_\alpha(P_X^{\times n} Q_{Y^n}\|P_{XY}^{\times n})+\frac{\alpha}{1-\alpha}\log\frac{1}{\epsilon}\right)\\
&=\liminf_{n\to \infty}\frac{1}{n}U_\alpha(X^n;Y^n)\\
&\eqt{(ii)} U_\alpha(X;Y),
\ee
where we used~\eqref{D_H_minimax_product_testing} in~(i), and the additivity of $U_\alpha$ (as in Theorem~\ref{thm:closed_alpha}) in~(ii). In particular,
\bb
\mathrm{Stein}\big(\FF^{P_X}\,\big\|\,P_{XY}\big) &\geq \lim_{\alpha\to 1^-} U_\alpha(X;Y) \\
&\eqt{(iii)} U(X;Y),
\ee
where in~(iii) we employed Lemma~\ref{lem:alpha_to_1}. For the upper bound we consider the ansatz 
\bb
Q_{Y^n}(y_1,\dots,y_n) = Q_Y^{\times n}(y_1,\ldots,y_n) = Q_Y(y_1)\cdots Q_Y(y_1),
\ee 
where $Q_Y\in\mathcal{P}(\YY)$ is an arbitrary fixed distribution. Then, 
\bb\label{eq:65}
\mathrm{Stein}^\dag\big(\FF^{P_X}\,\big\|\,P_{XY}\big) &\leq \lim_{\epsilon\to 1^-}\limsup_{n\to \infty}\frac{1}{n}D^\epsilon_H(P_X^nQ_Y^n\|P_{XY}^n)\\
&\eqt{(iv)}D(P_XQ_Y\|P_{XY})
\ee
where (iv)~follows from the strong converse of the Chernoff--Stein lemma~\cite{stein_unpublished, chernoff_1956}, here reported as~\eqref{eq:strong_converse_property_iid}. Minimising over $Q_Y\in\mathcal{P}(\YY)$ yields
\bb
\mathrm{Stein}^\dag\big(\FF^{P_X}\,\big\|\,P_{XY}\big) \leq \min_{Q_Y} D(P_XQ_Y\|P_{XY}) = U(X;Y)\, ,
\ee
concluding the proof.
\end{proof}

\subsection{Example: Joint Gaussian distributions}

The definition of the umlaut information (Definition~\ref{def:lautum}) also applies to continuous variables (see Appendix~\ref{app:cv}), and an instructive example is as follows.

\begin{prop}[(Umlaut information of joint Gaussian distributions)]
\label{prop:Gaussian-umlaut}
Let $x,m_X\in\XX=\mathbb{R}^{n}$ and $y,m_Y\in\YY=\mathbb{R}^{k}$; let $V\in\mathbb{R}^{(n+k)\times (n+k)}$ such that $V>0$. Let us introduce $r=(x,y)$, $m=(m_X,m_Y)$ and let us rewrite
\bb
    V=\begin{pmatrix}V_{XX} & V_{XY} \\ V_{XY}^\intercal & V_{YY}\end{pmatrix},
\ee
where $V_{XX}\in \mathbb{R}^{n\times n}$. Let $(X,Y)$ be a pair of random variables taking values in $\XX\times\YY$ with Gaussian probability distribution
\bb
    P_{XY}(x,y)=\frac{e^{-\frac{1}{2}(r-m)^\intercal V^{-1}(r-m)}}{(2\pi)^{(n+k)/2}\sqrt{\det V}}.
\ee
Then the umlaut-marginal of $P_{XY}$ on $\YY$ is given by
\bb
    \uml{P}_Y(y) = \frac{1}{(2\pi)^{k/2}\sqrt{\det (V/V_{XX})}}\,e^{-\frac{1}{2}(y-m_Y)^\intercal (V/V_{XX})^{-1}(y-m_Y)},
\ee
where
\bb
    V/V_{XX} &\coloneqq V_{YY}-ZV_{XX}^{-1}Z^\intercal
\ee
and the umlaut information between $X$ and $Y$ is
\bb
    U(X;Y)=\frac12 \log \frac{\det V}{\det \left(V_{XX}\oplus (V/V_{XX})\right)} + \frac12 \Tr \left[ V^{-1}\left(V_{XX}\oplus (V/V_{XX})\right)\right]-\frac{n+k}{2}.
\ee
\end{prop}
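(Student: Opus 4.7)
The plan is to invoke the closed-form expression from Proposition~\ref{thm:closed_lautum}, extended in the obvious way to the continuous-variable setting: the umlaut-marginal of $P_{XY}$ on $\R^k$ is
\[
    \uml{P}_Y(y) \propto \exp\!\left(\int_{\R^n} P_X(x)\, \log P_{XY}(x,y)\, \dd x\right),
\]
and $U(X;Y) = D(P_X\uml{P}_Y\,\|\,P_{XY})$. The Gibbs variational principle (Lemma~\ref{thm:Gibbs}) goes through verbatim with sums replaced by integrals, so the derivation of Proposition~\ref{thm:closed_lautum} applies in this setting.

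First I would expand $\log P_{XY}(x,y)$ as the quadratic form $-\tfrac12 (r-m)^\intercal V^{-1}(r-m)$ plus a normalisation constant. Partitioning $V^{-1}$ conformally with $r=(x,y)$, and denoting its $(X,X)$, $(X,Y)$, and $(Y,Y)$ blocks by $A$, $B$, and $C$ respectively, the quadratic form splits as $(x-m_X)^\intercal A (x-m_X) + 2(x-m_X)^\intercal B(y-m_Y) + (y-m_Y)^\intercal C(y-m_Y)$. Integrating against $P_X \sim \mathcal{N}(m_X, V_{XX})$ kills the cross term (since $\E_{P_X}[x-m_X]=0$), turns the first term into the $y$-independent constant $\Tr(A V_{XX})$, and leaves the last term untouched. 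Hence $\uml{P}_Y(y) \propto \exp\!\left(-\tfrac12 (y-m_Y)^\intercal C(y-m_Y)\right)$, so $\uml{P}_Y$ is Gaussian with mean $m_Y$ and covariance $C^{-1}$.

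To identify $C^{-1}$ with the Schur complement, I would appeal to the standard block-matrix inversion formula, which states that the lower-right block of $V^{-1}$ equals $(V_{YY} - V_{XY}^\intercal V_{XX}^{-1} V_{XY})^{-1} = (V/V_{XX})^{-1}$. This produces the claimed expression for $\uml{P}_Y$.

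Finally, I would compute $U(X;Y) = D(P_X\uml{P}_Y\,\|\,P_{XY})$ by noting that the two arguments are Gaussians with the \emph{same} mean $m = (m_X, m_Y)$ but covariances $\Sigma_0 = V_{XX} \oplus (V/V_{XX})$ and $\Sigma_1 = V$, respectively. The standard formula
\[
    D\bigl(\mathcal{N}(m,\Sigma_0)\,\big\|\,\mathcal{N}(m,\Sigma_1)\bigr) = \tfrac12 \Tr(\Sigma_1^{-1}\Sigma_0) + \tfrac12 \log\frac{\det \Sigma_1}{\det \Sigma_0} - \tfrac{n+k}{2}
\]
then yields the stated expression on direct substitution. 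No serious obstacle is foreseen: the whole argument is a calculation with multivariate Gaussians, the only real subtlety being the block-matrix bookkeeping and the appeal to the Schur-complement identity.
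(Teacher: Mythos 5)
Your proposal is correct and follows essentially the same route as the paper: extend the closed form $U(X;Y)=D(P_X\uml{P}_Y\|P_{XY})$ to the continuous setting, partition $V^{-1}$ into blocks, integrate the quadratic form against $P_X\sim\mathcal{N}(m_X,V_{XX})$ to kill the cross term and reduce the $xx$-term to $\Tr(A V_{XX})$, identify $\uml{P}_Y$ as Gaussian with covariance equal to the Schur complement $V/V_{XX}$ via block inversion, and finish with the standard KL formula for equal-mean Gaussians. If anything, your explicit naming of the blocks $A,B,C$ of $V^{-1}$ and direct appeal to the block-inversion identity is a touch cleaner than the paper's presentation, which parametrises the off-diagonal block of $V^{-1}$ by a symbol $Z$ and then (somewhat loosely) writes the Schur complement in terms of that $Z$.
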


It is interesting to notice that, for Gaussian distributions, the marginal of $P_{XY}$ on $\YY$ is a Gaussian distribution $P_Y=\mathcal{G}(m_B,V_{YY})$ having as a covariance matrix the reduced covariance matrix $V_{XX}$. The umlaut-marginal $\uml{P}_{Y}=\mathcal{G}(m_B,(V/V_{XX})^{-1})$ is a Gaussian distribution too.

\begin{proof}[Proof of Proposition~\ref{prop:Gaussian-umlaut}]
Let us start by inverting $V$ using the Schur complement:
\bb
    V^{-1}=\begin{pmatrix}(V/V_{YY})^{-1} & Z \\ Z^\intercal & (V/V_{XX})^{-1}\end{pmatrix},
\ee
where $Z$ is a matrix that we do not need to specify and
\bb
    V/V_{YY} &\coloneqq V_{XX}-ZV_{YY}^{-1}Z^\intercal,\\
    V/V_{XX} &\coloneqq V_{YY}-ZV_{XX}^{-1}Z^\intercal.
\ee
It is known that the marginal of $P_{XY}$ on $\XX$ is
\bb
    P_{X}(x)=\frac{e^{-\frac{1}{2}(x-m_X)^\intercal V_{XX}^{-1}(x-m_X)}}{(2\pi)^{n/2}\sqrt{\det V_{XX}}}.
\ee
Now we can compute
\bb
        \uml{P}_Y(y)&\propto \exp\left(\sum_{x\in\XX}P_X(x)\log P_{XY}(x,y)\right)\\
        &\propto \exp\left(-\sum_{x\in\XX}\frac{1}{2}(r-m)^\intercal V^{-1}(r-m)\frac{e^{-\frac{1}{2}(x-m_X)^\intercal V_{XX}^{-1}(x-m_X)}}{(2\pi)^{n/2}\sqrt{\det V_{XX}}}\right)\\
    \ee
In particular,
\bb
    \sum_{x\in\XX}\frac{1}{2}(r-m)^\intercal &V^{-1}(r-m)\frac{e^{-\frac{1}{2}(x-m_X)^\intercal V_{XX}^{-1}(x-m_X)}}{(2\pi)^{n/2}\sqrt{\det V_{XX}}}\\
    &=\frac{1}{2}\Tr\left[(V/V_{YY})^{-1}\,\mathbb{E}_{P_X}\left[(x-m_X)(x-m_X)^\intercal\right]\right]\\
    &\qquad +\Tr\left[Z^\intercal \mathbb{E}_{P_X}\left[(x-m_X)\right](y-m_Y)^\intercal\right]\\
    &\qquad +\frac{1}{2}\Tr\left[(V/V_{XX})^{-1}\,(y-m_Y)(y-m_Y)^\intercal\right]\\
    &=\frac{1}{2}\Tr\left[(V/V_{YY})^{-1}\,V_{XX}\right]+\frac{1}{2}(y-m_Y)^\intercal (V/V_{XX})^{-1} (y-m_Y),
\ee
whence
\bb
    \uml{P}_Y(y) = \frac{1}{(2\pi)^{k/2} \sqrt{\det (V/V_{XX})}}\,e^{-\frac{1}{2}(y-m_Y)^\intercal (V/V_{XX})^{-1}(y-m_Y)}.
\ee
The umlaut information is therefore given by
\bb
    U(X;Y)=D(P_X\uml{P}_Y\|P_{XY})
\ee
according to Proposition~\ref{thm:closed_lautum}. We notice that
\begin{itemize}
    \item $P_X\uml{P}_Y$ has mean $m$ and covariance $V_{XX}\oplus (V/V_{XX})$,
    \item $P_{XY}$ has mean $m$ and covariance $V$,
\end{itemize}
whence
\bb
    U(X;Y)=\frac12 \log \frac{\det V}{\det \left(V_{XX}\oplus (V/V_{XX})\right)} + \frac12 \Tr \left[ V^{-1}\left(V_{XX}\oplus (V/V_{XX})\right)\right]-\frac{n+k}{2},
\ee
and this concludes the proof.
\end{proof}


\section{Channel umlaut information}\label{sec:channels}

\subsection{Definition and basic properties}

\begin{figure}[h]
    \centering
    \includegraphics[width=0.3\linewidth]{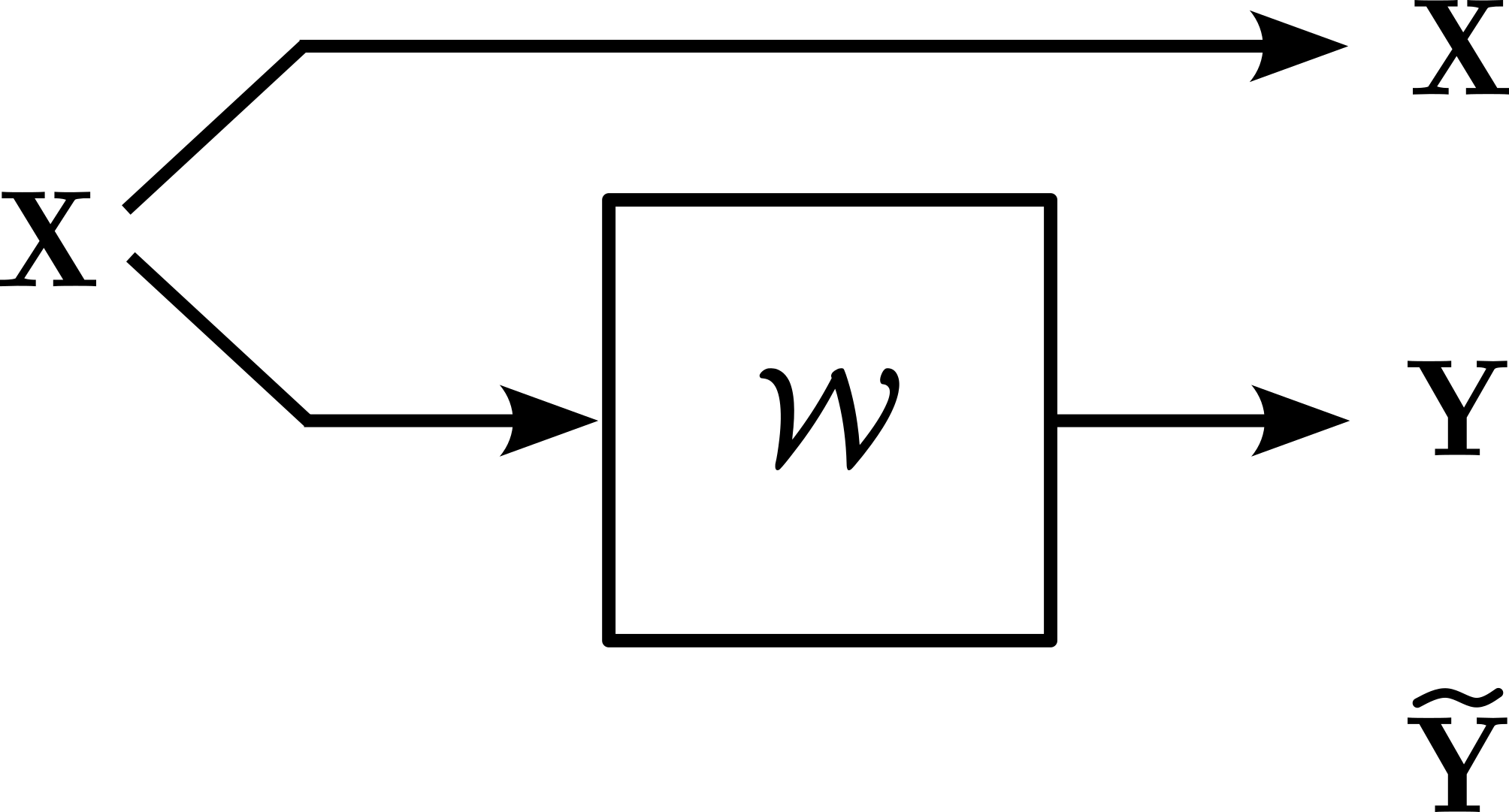}
    \caption{$X$ is the random variable representing a random input in the channel $\pazocal{W}$; $Y$ is the corresponding output, which is correlated to $X$; $\tilde Y$ is a random variable taking values in $\YY$ which is independent of $X$.}
    \label{fig:XYY}
\end{figure}

Let us introduce the notion of umlaut information of a discrete memoryless channel $\pazocal{W}$ from $\XX$ to $\YY$, i.e. a stochastic matrix $[\pazocal{W}(y|x)]_{x\in\XX,y\in\YY}$. Let us consider, as the input of the channel, a random variable $X$ taking values in $\XX$ and with distribution $P_X$. The corresponding output will therefore be a random variable $Y$ taking values in $\YY$ with distribution
\bb
    P_{Y}(y)=\sum_{x\in\XX}\pazocal{W}(y|x)P_{X}(x).
\ee
Let $P_{XY}$ be the joint distribution of $X$ and $Y$, namely,
\bb\label{eq:inp_outp}
    P_{XY}(x,y)=\pazocal{W}(y|x)P_{X}(x)\, .
\ee
Now we are interested in studying the relative entropy between the pair $(X,\tilde Y)$ and the pair $(X,Y)$, where $\tilde Y$ is a random variable (independent of $X$)  taking values in $\YY$ that minimises such divergence (see Figure~\ref{fig:XYY}). Then we maximise the result among the possible probability distributions of the input, as in the following definition.

\begin{Def}[(Umlaut information of a channel)]
\label{def:channel_lautum}
   Let $\pazocal{W}$ be a discrete memoryless channel $\pazocal{W}$ from $\XX$ to $\YY$ and let $X$ and $Y$ be random variables taking values in $\XX$ and $\YY$ with joint distribution
    \bb\label{eq:joint_prob_ch}
        P_{XY}(x,y)=\pazocal{W}(y|x)P_{X}(x).
    \ee
    Then we define the umlaut information of the channel $\pazocal{W}$ as 
\bb\label{eq:lautum_ch}
    U(\pazocal{W}) &\coloneqq \max_{P_X}U(X;Y)=    \max_{P_X}\min_{Q_Y}D(P_{X}Q_Y\|P_{XY}).
    \ee
\end{Def}

In the case where $P_X$ is restricted to be the uniform probability distribution, this quantity was considered for the classical-quantum setting in~\cite[Eq.~(5.133)]{Nuradha2024}, where it was called `oveloh information'. The maximisation on $P_X$, however, will prove important in order to obtain operational interpretations of $U(\pazocal{W})$.

\begin{prop}
    The functional $(P_X, Q_Y, \pazocal{W}) \mapsto D(P_X Q_Y\| P_{XY})$ where $P_{XY}(x,y) = P_X(x) \pazocal{W}(y|x)$ is linear in $P_X$ and jointly convex in $Q_Y$ and $\pazocal{W}$. Moreover,  $P_X \mapsto \min_{Q_Y} D(P_X Q_Y\| P_{XY})$ is concave.
\end{prop}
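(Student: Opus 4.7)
The plan is to simplify $D(P_X Q_Y \| P_{XY})$ by exploiting the structure $P_{XY}(x,y) = P_X(x)\pazocal{W}(y|x)$, so that the marginal $P_X$ cancels inside the logarithm. Concretely,
\bb
    D(P_X Q_Y \| P_{XY}) &= \sum_{x,y} P_X(x) Q_Y(y) \log \frac{Q_Y(y)}{\pazocal{W}(y|x)} \\
    &= \sum_x P_X(x)\, D\big(Q_Y\,\big\|\, \pazocal{W}(\cdot|x)\big),
\ee
with the usual convention that terms for which $P_X(x) = 0$ are dropped. This single identity is the workhorse for all three claims.

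From it, linearity in $P_X$ (with $Q_Y$ and $\pazocal{W}$ fixed) is immediate, since each summand $D(Q_Y \| \pazocal{W}(\cdot|x))$ depends only on $Q_Y$ and on the row $\pazocal{W}(\cdot|x)$. For joint convexity in $(Q_Y, \pazocal{W})$ with $P_X$ fixed, I would apply the standard joint convexity of $(P,Q) \mapsto D(P\|Q)$ to each pair $\big(Q_Y, \pazocal{W}(\cdot|x)\big)$ separately, using that the $x$-th row of $\lambda \pazocal{W}^{(0)} + (1-\lambda)\pazocal{W}^{(1)}$ is the convex combination of the individual rows; a nonnegative linear combination (with weights $P_X(x)$) of jointly convex functions is jointly convex.

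For concavity of $P_X \mapsto \min_{Q_Y} D(P_X Q_Y \| P_{XY})$, the key observation is that, for each fixed $Q_Y$, the map $P_X \mapsto D(P_X Q_Y \| P_{XY})$ is affine in $P_X$ by the displayed identity. Since the pointwise infimum of an arbitrary family of affine functions is concave, the minimum over $Q_Y$ is concave in $P_X$.

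The only subtlety I anticipate is handling support conditions: one must check that the cancellation leading to $\log(Q_Y(y)/\pazocal{W}(y|x))$ is consistent with the conventions $0\log 0 = 0$ and $D(P\|Q) = +\infty$ when $P\not\ll Q$. This is routine: any $x$ with $P_X(x) = 0$ contributes nothing to the outer sum, and for the remaining $x$ the absolute continuity condition $P_X Q_Y \ll P_{XY}$ reduces to $Q_Y \ll \pazocal{W}(\cdot|x)$, matching the convention under which the rewritten sum is taken.
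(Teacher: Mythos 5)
Your proof matches the paper's own argument: both rely on the identity $D(P_X Q_Y\| P_{XY}) = \sum_{x} P_X(x)\, D(Q_Y \| \pazocal{W}(\cdot|x))$ to read off linearity in $P_X$, invoke joint convexity of the relative entropy term by term to obtain joint convexity in $(Q_Y,\pazocal{W})$, and deduce concavity of $P_X \mapsto \min_{Q_Y} D(P_X Q_Y\| P_{XY})$ as a pointwise infimum of affine functions. Your extra remarks on support and absolute-continuity conventions are correct and only add rigor to what the paper states more tersely.
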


\begin{proof}
    Joint convexity in $Q_Y$ and $\pazocal{W}$ follows from joint convexity of the relative entropy. Linearity in $P_X$ is evident once one
    rewrites
    $D(P_X Q_Y\| P_{XY}) = \sum_{x \in \XX} P_X(x)\, D(Q_Y \| \pazocal{W}(\cdot|x) )$. Concavity of the second functional follows from the fact that a pointwise minimum of linear functions is concave.
\end{proof}
As a consequence of this, we can exchange the maximisation and minimisation using Sion's minimax theorem to find
\begin{align}
    U(\pazocal{W}) = \min_{Q_Y} \max_{x \in \XX} D( Q_Y \| \pazocal{W}(\cdot|x) ) \,. \label{eq:minimax2}
\end{align}

\begin{prop}
    The umlaut information of the channel $\pazocal{W}$ satisfies
    \begin{itemize}
        \item[(1)] Positive definiteness: $U(\pazocal{W}) \geq 0$ with equality if and only if $\pazocal{W}(\cdot|x)$ is independent of $x \in \XX$. 
        \item[(2)] Boundedness: $U(\pazocal{W}) < \infty \iff \exists y \in \YY$ such that $\forall x \in \XX: \pazocal{W}(y|x) > 0$, i.e., there exists an output symbol that cannot exclude any input symbol.
        \item[(3)] Convexity: The map $\pazocal{W} \mapsto U(\pazocal{W})$ is convex.
    \end{itemize}
\end{prop}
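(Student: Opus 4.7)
The plan is to deduce all three properties by lifting Proposition~\ref{prop:lautum_prop} through the outer maximisation over $P_X$, aided by the minimax identity~\eqref{eq:minimax2} and the joint convexity statement of the preceding proposition.

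\textbf{Positive definiteness.} Non-negativity is immediate, since $U(X;Y)\geq 0$ for every $P_X$ by Proposition~\ref{prop:lautum_prop}(1), and the supremum of non-negative quantities is non-negative. For the equality case, the ``if'' direction is trivial: when $\pazocal{W}(\cdot|x)$ does not depend on $x$, the joint distribution factorises as $P_{XY}=P_X\,\pazocal{W}(\cdot|x)$, so the choice $Q_Y=\pazocal{W}(\cdot|x)$ in~\eqref{eq:lautum_ch} annihilates the expression for every $P_X$. For the ``only if'' direction I would invoke the minimax form~\eqref{eq:minimax2}: compactness of $\mathcal{P}(\YY)$ and lower semicontinuity of the relative entropy guarantee a minimiser $Q_Y^\star$, and $U(\pazocal{W})=0$ forces $D(Q_Y^\star\|\pazocal{W}(\cdot|x))=0$ simultaneously for every $x\in\XX$, so $\pazocal{W}(\cdot|x)=Q_Y^\star$ is independent of $x$.

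\textbf{Boundedness.} The ``$\Leftarrow$'' direction is again cleanest from~\eqref{eq:minimax2}: given $y_0$ with $\pazocal{W}(y_0|x)>0$ for all $x$, plugging in the point mass $Q_Y=\delta_{y_0}$ yields the explicit bound
\begin{equation}
U(\pazocal{W}) \leq \max_{x\in\XX}\bigl(-\log\pazocal{W}(y_0|x)\bigr) < \infty.
\end{equation}
For ``$\Rightarrow$'' I would take $P_X$ to be any strictly positive distribution on $\XX$; under the hypothesis failure, the single-letter condition in Proposition~\ref{prop:lautum_prop}(2) is violated for every $y$, so $U(X;Y)=\infty$ and therefore $U(\pazocal{W})\geq U(X;Y)=\infty$.

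\textbf{Convexity.} Fix $P_X$. The preceding proposition tells us that $(Q_Y,\pazocal{W})\mapsto D(P_X Q_Y\|P_{XY})$ is jointly convex. Partial minimisation of a jointly convex function over one argument yields a convex function in the other, so $\pazocal{W}\mapsto\min_{Q_Y}D(P_X Q_Y\|P_{XY})$ is convex. Finally, a pointwise supremum of convex functions is convex, so $\pazocal{W}\mapsto U(\pazocal{W})=\sup_{P_X}\min_{Q_Y}D(P_X Q_Y\|P_{XY})$ is convex.

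The only subtle step is the equality case of positive definiteness, where the minimax reformulation~\eqref{eq:minimax2} does the real work by eliminating the $P_X$-dependence that would otherwise obscure how a single $Q_Y^\star$ can simultaneously match $\pazocal{W}(\cdot|x)$ for every $x$; the remaining arguments are routine applications of the machinery already developed earlier in the paper, so I do not anticipate any genuine obstacles.
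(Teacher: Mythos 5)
Your proof is correct and follows the same overall strategy as the paper: lift the single-distribution properties from Proposition~\ref{prop:lautum_prop} through the outer maximisation, with the minimax form~\eqref{eq:minimax2} handling the parts where the $P_X$-dependence would otherwise get in the way. There are two small stylistic deviations worth noting. For the equality case of (1), you invoke the minimax representation and compactness to extract a single $Q_Y^\star$, while the paper argues more tersely from the positive definiteness of the relative entropy directly; your route makes the elimination of the $P_X$-dependence more explicit, at the cost of needing the lower-semicontinuity/compactness argument. For (3), you work entirely with the original form $\sup_{P_X}\min_{Q_Y}D(P_XQ_Y\|P_{XY})$, applying ``partial minimisation of a jointly convex function is convex'' followed by ``a pointwise supremum of convex functions is convex,'' whereas the paper first passes to~\eqref{eq:minimax2} and applies the partial-minimisation fact to the jointly convex function $(Q_Y,\pazocal{W})\mapsto\max_x D(Q_Y\|\pazocal{W}(\cdot|x))$. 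Both decompositions are valid and cost the same; yours avoids relying on the minimax theorem for this item, which is arguably slightly cleaner since~\eqref{eq:minimax2} is not logically needed for convexity.
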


\begin{proof}
    For (1), the implication ``$\impliedby$'' follows by choosing $Q_Y = \pazocal{W}(\cdot|x)$. On the other hand, if $\pazocal{W}(\cdot|x)$ depends on $x \in\XX$ then the property follows form positive definiteness of the relative entropy.
    For (2), we note that this is essentially Property (2) of~Proposition~\ref{prop:lautum_prop}, except that now restricting to $P_X(x) > 0$ is no longer necessary as we are maximising over input distributions.

    For (3), we observe that the functional
    \begin{align}
        (Q_Y, \pazocal{W}) \mapsto \max_{x \in \XX} D( Q_Y \| \pazocal{W}(\cdot|x) )
    \end{align}
    in~\eqref{eq:minimax2} is a maximum of jointly convex function and thus inherits this property. Taking the minimum over $Q_Y$ then leaves $\pazocal{W} \mapsto U(\pazocal{W})$ convex.
\end{proof}

\subsection{An explicit formula and additivity}

\begin{conv}\label{conv:infty} According to the usual conventions $0\log 0 = 0$ and $\exp(-\infty)= 0$, we introduce the following notation. Let $\pazocal{W}$ be a discrete memoryless channel from $\XX$ to $\YY$ and let $P_X\in\mathcal{P}(\XX)$. Then we define
\bb\label{eq:notation}
\sum_{y\in\YY}\exp\left(\sum_{x\in\XX}P_X(x)\log\pazocal{W}(y|x)\right)\coloneqq \sum_{y\in\YY_{\pazocal{W},P_X}}\exp\left(\sum_{x\in{\rm supp}(P_X)}P_X(x)\log\pazocal{W}(y|x)\right),
\ee
where ${\rm supp}(P_X)\coloneqq\{x\in\XX:\, P_X(x)>0\}$ and $\YY_{\pazocal{W},P_X}\coloneqq \{y\in\YY:\ \pazocal{W}(y|x)>0\ \, \forall\, x\in {\rm supp}(P_X)\}$. We will always interpret the left-hand side of~\eqref{eq:notation} as specified on the right-hand side.
\end{conv}

\begin{prop}\label{prop:lautum_channel_formula} Let $\pazocal{W}$ be a discrete memoryless channel from $\XX$ to $\YY$. Then, its umlaut information can be written as
\bb\label{eq:looks_additive}
U(\pazocal{W})=-\log\min_{P_X}\sum_{y\in\YY}\exp\left(\sum_{x\in\XX}P_X(x)\log\pazocal{W}(y|x)\right)
\ee
according to Convention~\ref{conv:infty}, or, equivalently,
\bb\label{eq:alternative_W_0}
\exp\left[-U(\pazocal{W})\right] &=\min_{P_X}\sum_{y\in\YY}\prod_{x\in\XX}\pazocal{W}(y|x)^{P_X(x)},
\ee
where we set $0^0= 1$. It also holds that
\bb
U(\pazocal{W})=\min_{Q_Y}\max_{x\in\XX}D\left(Q_Y\middle\|\pazocal{W}(\,\cdot\,|x)\right) = \min_{Q_Y}\max_{x\in\XX} \sum_{y\in \YY} Q_Y(y) \log \frac{Q_Y(y)}{\pazocal{W}(y|x)}.
\ee
\end{prop}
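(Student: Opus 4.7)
The plan is to derive all three expressions by combining the single-shot closed-form of Proposition~\ref{thm:closed_lautum} with the max–min swap already established in~\eqref{eq:minimax2}, then cleaning up boundary issues with Convention~\ref{conv:infty}.

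First I would fix $P_X$ and apply Proposition~\ref{thm:closed_lautum} to the joint distribution $P_{XY}(x,y)=\pazocal{W}(y|x)P_X(x)$. This yields
\bb
U(X;Y) = -H(P_X) - \log\sum_{y\in\YY}\exp\!\Big(\sum_{x\in\XX} P_X(x)\log P_{XY}(x,y)\Big).
\ee
Splitting $\log P_{XY}(x,y)=\log\pazocal{W}(y|x)+\log P_X(x)$ and using $\sum_x P_X(x)\log P_X(x)=-H(P_X)$, the exponent separates into a $y$-independent factor $e^{-H(P_X)}$ and the $y$-dependent factor $\exp\!\big(\sum_x P_X(x)\log\pazocal{W}(y|x)\big)$. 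Pulling the first factor out of the sum contributes a $+H(P_X)$ under the logarithm, which cancels the leading $-H(P_X)$, leaving
\bb
U(X;Y)=-\log\sum_{y\in\YY}\exp\!\Big(\sum_{x\in\XX}P_X(x)\log\pazocal{W}(y|x)\Big).
\ee
Maximising over $P_X$ and using that $t\mapsto -\log t$ is strictly decreasing gives formula~\eqref{eq:looks_additive}. Exponentiating both sides and recognising $\exp\!\big(\sum_x P_X(x)\log\pazocal{W}(y|x)\big)=\prod_x \pazocal{W}(y|x)^{P_X(x)}$ (with $0^0=1$) yields the second equation~\eqref{eq:alternative_W_0}.

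For the minimax expression, I would go back to the definition and note that
\bb
D(P_X Q_Y\|P_{XY})=\sum_{x,y}P_X(x)Q_Y(y)\log\frac{Q_Y(y)}{\pazocal{W}(y|x)}=\sum_{x\in\XX}P_X(x)\,D(Q_Y\|\pazocal{W}(\cdot|x)),
\ee
which is linear in $P_X$. Maximising a linear function over the simplex $\mathcal{P}(\XX)$ picks out the pointwise maximum, so $\max_{P_X}D(P_X Q_Y\|P_{XY})=\max_{x\in\XX}D(Q_Y\|\pazocal{W}(\cdot|x))$. Combined with the minimax swap established just before this proposition (Sion's theorem, exploiting concavity in $P_X$ and convexity in $Q_Y$), this gives the third displayed equality.

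The main thing to be careful about is the support convention. If some $\pazocal{W}(y|x)$ vanish, the expression $\sum_x P_X(x)\log\pazocal{W}(y|x)$ may equal $-\infty$ for certain $y$, in which case the corresponding exponential is $0$ and that $y$ simply drops from the sum on the right-hand side of~\eqref{eq:notation}. Similarly, if $P_X(x)=0$ the factor $0\cdot\log 0$ must be read as $0$ and $\pazocal{W}(y|x)^0=1$. All of this is exactly what Convention~\ref{conv:infty} and the $0^0=1$ prescription encode, so once one checks that the algebraic manipulation above commutes with these conventions term by term, no separate argument is needed for channels with zeros in their transition matrix. This is the only mildly delicate point; the rest reduces to Proposition~\ref{thm:closed_lautum}, linearity of the divergence in $P_X$, and the minimax identity~\eqref{eq:minimax2}.
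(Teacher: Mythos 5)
Your proof is correct and mirrors the paper's argument: the derivation of~\eqref{eq:looks_additive} and~\eqref{eq:alternative_W_0} from Proposition~\ref{thm:closed_lautum}, with the two $H(P_X)$ terms cancelling, is the same, and your remarks on Convention~\ref{conv:infty} match the intent of the text. The only divergence is in the final minimax identity: the paper deliberately re-derives the $\max$--$\min$ exchange from scratch using the Farkas minimax lemma (Lemma~\ref{lem:minmax}), so that the third equality stands independently of the closed formula, whereas you simply reuse the exchange already recorded in~\eqref{eq:minimax2} together with the elementary fact that the maximum of a linear functional over $\mathcal{P}(\XX)$ is attained at a vertex. Both routes are valid; yours is a modest shortcut that trades self-containment of the third identity for brevity.
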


Before proving the proposition, we 
recall a minimax lemma.

\begin{lemma}[{\cite[Theorem~5.2]{Farkas2006}}]\label{lem:minmax} Let $A$ be a compact, convex subset of a Hausdorff topological vector space $U$, and let $B$ be a convex subset of the linear space $V$. Let $f: A \times B \to (-\infty, +\infty]$ be lower semicontinuous on $A$ for all fixed $y\in B$, concave in the first variable, and convex in the second. Then
\bb
    \sup_{x\in A}\inf_{y\in B} f(x,y)=\inf_{y\in B}\sup_{x\in A} f(x,y).
\ee
\end{lemma}

\begin{proof}[Proof of Proposition~\ref{prop:lautum_channel_formula}]
By plugging the joint probability~\eqref{eq:joint_prob_ch} of the pair input-output for the channel $\pazocal{W}$ in the Definition~\ref{def:channel_lautum} of channel umlaut information, and by means of the closed formula of Theorem~\ref{thm:closed_lautum}, we have
\bb\label{eq:conto_sopra}
    U(\pazocal{W})&=\max_{P_X}\left(-H(P_X)-\log\sum_{y\in\YY}\exp\left(\sum_{x\in\XX}P_X(x)\log\left(P_X(x)\pazocal{W}(y|x)\right)\right)\right)\\
    &=\max_{P_X}\left(-H(P_X)-\log\sum_{y\in\YY}\exp\left(-H(P_X)+\sum_{x\in\XX}P_X(x)\log\pazocal{W}(y|x)\right)\right)\\
    &=-\log\min_{P_X}\sum_{y\in\YY}\exp\left(\sum_{x\in\XX}P_X(x)\log\pazocal{W}(y|x)\right),
\ee
and~\eqref{eq:alternative_W_0} can be obtained by exponentiating~\eqref{eq:looks_additive}. Without using the closed formula for the umlaut information, let us now directly compute
\bb
    U(\pazocal{W})&=\max_{P_X}\min_{Q_Y}D(P_{X}Q_Y\|P_{XY})\\
    &=\max_{P_X}\min_{Q_Y}\left(-H(Q_Y)-\sum_{\substack{x\in\XX\\y\in\YY}}P_X(x)Q_Y(y)\log\pazocal{W}(y|x)\right).
\ee

The expression to be optimised is linear in $P_X$ and convex in $Q_Y$, since it is the sum of two convex functions. Furthermore, $Q_Y \mapsto D(P_{X}Q_Y\|P_{XY})$ is lower semicontinuous \cite[Theorem~15]{vanErven2014}. We can therefore apply the minimax result in Lemma~\ref{lem:minmax} to get
\bb
    U(\pazocal{W})
    &=\min_{Q_Y}\max_{P_X}\sum_{\substack{x\in\XX\\y\in\YY}}P_X(x)Q_Y(y)\log\frac{Q_Y(y)}{\pazocal{W}(y|x)}\\
    &=\min_{Q_Y}\max_{P_X}\sum_{x\in\XX}P_X(x)D\left(Q_Y\middle\|\pazocal{W}(\,\cdot\,|x)\right)\\
    &=\min_{Q_Y}\max_{x\in\XX}D\left(Q_Y\middle\|\pazocal{W}(\,\cdot\,|x)\right),
\ee   
which concludes the proof.
\end{proof}

\begin{cor}[(Additivity of the channel umlaut information)]\label{cor:add_chan}
Let $\pazocal{W}_1$ be a channel from $\XX_1$ to $\YY_1$ and let $\pazocal{W}_2$ be a channel from $\XX_2$ to $\YY_2$. Let $\pazocal{W}_1 \times \pazocal{W}_2$  be the product channel, defined as
    \bb
        \big(\pazocal{W}_1 \times \pazocal{W}_2)(y_1,y_2|x_1,x_2)\coloneqq \pazocal{W}_{1}(y_1|x_1)\pazocal{W}_{2}(y_2|x_2)
    \ee
     for any $x_1\in\XX_1$, $x_2\in\XX_2$, $y_1\in\YY_1$ and $y_2\in\YY_2$.
    Then, we have
    \bb
        U(\pazocal{W}_1 \times \pazocal{W}_2) = U(\pazocal{W}_{1})+U(\pazocal{W}_{2}).
    \ee
\end{cor}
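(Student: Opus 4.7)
The plan is to leverage the explicit formula for the channel umlaut information proved in Proposition~\ref{prop:lautum_channel_formula}, namely
\begin{equation*}
\exp[-U(\pazocal{W})] = \min_{P_X} \sum_{y \in \YY} \prod_{x \in \XX} \pazocal{W}(y|x)^{P_X(x)}.
\end{equation*}
The payoff of rewriting $U$ in this multiplicative form is that it turns the claim into a purely combinatorial statement about the product of two expressions of this shape.

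The key observation I would exploit is that the integrand factorises in a very strong way for a product channel. Writing out $(\pazocal{W}_1 \times \pazocal{W}_2)(y_1,y_2|x_1,x_2) = \pazocal{W}_1(y_1|x_1)\pazocal{W}_2(y_2|x_2)$ and splitting the exponent, I get
\begin{equation*}
\prod_{x_1,x_2} \bigl(\pazocal{W}_1 \times \pazocal{W}_2\bigr)(y_1,y_2|x_1,x_2)^{P_{X_1X_2}(x_1,x_2)} = \prod_{x_1} \pazocal{W}_1(y_1|x_1)^{P_{X_1}(x_1)} \prod_{x_2} \pazocal{W}_2(y_2|x_2)^{P_{X_2}(x_2)},
\end{equation*}
where $P_{X_1}$ and $P_{X_2}$ are the marginals of $P_{X_1X_2}$. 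Thus the double sum over $(y_1,y_2)$ decouples into a product of two single sums, and the whole expression depends on $P_{X_1X_2}$ only through its marginals.

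With this in hand, the argument becomes a one-liner: the minimum over all joint input distributions $P_{X_1X_2}$ equals the minimum over pairs of marginals $(P_{X_1},P_{X_2})$, since any pair of marginals is realised by the corresponding product distribution. This gives
\begin{equation*}
\exp[-U(\pazocal{W}_1\times\pazocal{W}_2)] = \Bigl(\min_{P_{X_1}} \sum_{y_1} \prod_{x_1} \pazocal{W}_1(y_1|x_1)^{P_{X_1}(x_1)}\Bigr) \Bigl(\min_{P_{X_2}} \sum_{y_2} \prod_{x_2} \pazocal{W}_2(y_2|x_2)^{P_{X_2}(x_2)}\Bigr),
\end{equation*}
which upon taking logarithms yields $U(\pazocal{W}_1\times\pazocal{W}_2) = U(\pazocal{W}_1)+U(\pazocal{W}_2)$.

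I do not anticipate any real obstacle; the only subtle point is the bookkeeping around Convention~\ref{conv:infty}, since the sum over $y$ is implicitly restricted to the set $\YY_{\pazocal{W},P_X}$ of outputs compatible with the support of $P_X$. Here one simply notes that for a product channel this compatibility set factorises, $\YY_{\pazocal{W}_1\times\pazocal{W}_2,P_{X_1X_2}} = \YY_{\pazocal{W}_1,P_{X_1}}\times \YY_{\pazocal{W}_2,P_{X_2}}$, so the convention is preserved under the factorisation above and the argument goes through without modification. As a sanity check, one could also derive the inequality $U(\pazocal{W}_1\times\pazocal{W}_2) \geq U(\pazocal{W}_1)+U(\pazocal{W}_2)$ directly by plugging the product of optimal input distributions into Definition~\ref{def:channel_lautum} and invoking additivity of $U(X;Y)$ from Corollary~\ref{cor:additivity}; the nontrivial direction is the reverse inequality, which is exactly what the factorisation argument above supplies.
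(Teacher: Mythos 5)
Your proof is correct and follows essentially the same route as the paper: the paper uses the logarithmic form~\eqref{eq:looks_additive} while you use the exponentiated version~\eqref{eq:alternative_W_0}, but both rely on the same key observation that the summand factorises for a product channel and depends on $P_{X_1X_2}$ only through its marginals, so the minimisation decouples. Your extra remark about the compatibility set $\YY_{\pazocal{W},P_X}$ factorising under Convention~\ref{conv:infty} is a worthwhile bit of bookkeeping the paper leaves implicit.
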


\begin{proof}
    By~\eqref{eq:looks_additive}, we immediately see that
    \bb
        U(\pazocal{W}_1 \times \pazocal{W}_2)&=-\min_{P_{X_1X_2}}\log\sum_{\substack{y_1\in\YY_1\\ y_2\in\YY_2}}\exp\left(\sum_{\substack{x_1\in\XX_1\\ x_2\in\XX_2}}P_{X_1X_2}(x_1,x_2)\left(\log\pazocal{W}_1(y_1|x_1)+\log\pazocal{W}_2(y_2|x_2)\right)\right)\\
        &=-\min_{P_{X_1X_2}}\log\sum_{\substack{y_1\in\YY_1\\ y_2\in\YY_2}}\exp\left(\sum_{x_1\in\XX_1}P_{X_1}(x_1)\log\pazocal{W}_1(y_1|x_1)\right)\exp\left(\sum_{x_2\in\XX_2}P_{X_2}(x_2)\log\pazocal{W}_2(y_2|x_2)\right)\\
        &\eqt{(i)}-\min_{P_{X_1}}\log\sum_{y_1\in\YY_1}\exp\left(\sum_{x_1\in\XX_1}P_{X_1}(x_1)\log\pazocal{W}_1(y_1|x_1)\right)\\
        &\qquad -\min_{P_{X_2}}\log\sum_{y_2\in\YY_2}\exp\left(\sum_{x_2\in\XX_2}P_{X_2}(x_2)\log\pazocal{W}_2(y_2|x_2)\right)\\
        &=U(\pazocal{W}_1)+U(\pazocal{W}_2)
    \ee
    where in (i) we have noticed that, since each of the two terms depends only on the respective marginal, we can reduce the minimisation over the joint probability distribution $P_{X_1X_2}$ to two independent minimisations over the marginals. This concludes the proof. 
\end{proof}

\subsection{Rényi $\alpha$-umlaut information of a channel}

It is also natural to extend the definition of the R\'enyi $\alpha$-umlaut information to channels.
\begin{Def}[(R\'enyi $\alpha$-umlaut information of a channel)] \label{def:renyi-umlaut_channel}
Let $\alpha\in(0,1)\cup (1,\infty)$, let $\pazocal{W}$ be a discrete memoryless channel $\pazocal{W}$ from $\XX$ to $\YY$, and let $X$ and $Y$ be random variables taking values in $\XX$ and $\YY$ with joint distribution $P_{XY}(x,y)=\pazocal{W}(y|x)P_{X}(x)$.
    We define the R\'enyi $\alpha$-umlaut information of the channel $\pazocal{W}$ as 
\bb\label{eq:renyi_umlaut_ch}
    U_\alpha(\pazocal{W}) &\coloneqq \max_{P_X}U_\alpha(X;Y)=    \max_{P_X}\min_{Q_Y}D_\alpha(P_{X}Q_Y\|P_{XY}).
    \ee
\end{Def}

\begin{prop}\label{prop:variational_Renyi} The minimax variational form for the channel umlaut information in~\eqref{eq:minimax2} can be generalised to the family of R\'enyi $\alpha$-umlaut information of channels for $\alpha\in(0,1)\cup (1,\infty)$. More precisely, let $\pazocal{W}$ be a discrete memoryless channel from $\XX$ to $\YY$. Then
\begin{align}
    U_\alpha(\pazocal{W}) = \min_{Q_Y} \max_{x \in \XX} D_\alpha( Q_Y \| \pazocal{W}(\,\cdot\,|x) ) \,. 
\end{align}
\end{prop}

\begin{proof} We start by explicitly writing
    \bb
        U_\alpha(\pazocal{W})&=\max_{P_X}\min_{Q_Y}\frac{1}{\alpha-1}\log\sum_{\substack{x\in\XX\\ y\in\YY}}P_{X}^\alpha(x)Q_Y^\alpha(y)\pazocal{W}^{1-\alpha}(y|x)P_X^{1-\alpha}(x)\\
        &=\max_{P_X}\min_{Q_Y}\frac{1}{\alpha-1}\log\sum_{\substack{x\in\XX\\ y\in\YY}}P_{X}(x)Q_Y^\alpha(y)\pazocal{W}^{1-\alpha}(y|x).
    \ee
    Let us first consider the case $\alpha\in(0,1)$. By monotonicity of the logarithm, we have
    \bb
        U_\alpha(\pazocal{W})&=\frac{1}{\alpha-1}\log\min_{P_X}\max_{Q_Y}\sum_{\substack{x\in\XX\\ y\in\YY}}P_{X}(x)Q_Y^\alpha(y)\pazocal{W}^{1-\alpha}(y|x).
    \ee
    By concavity of $x\mapsto x^\alpha$, the map $Q_Y\mapsto \sum_{x,y}P_{X}(x)Q_Y^\alpha(y)\pazocal{W}^{1-\alpha}$ is concave, and by linearity in $P_X$ we can apply Sion's minimax theorem to switch maximum and minimum:
    \bb
        U_\alpha(\pazocal{W})&=\frac{1}{\alpha-1}\log\max_{Q_Y}\min_{P_X}\sum_{\substack{x\in\XX\\ y\in\YY}}P_{X}(x)Q_Y^\alpha(y)\pazocal{W}^{1-\alpha}(y|x)\\
        &=\frac{1}{\alpha-1}\log\max_{Q_Y}\min_{x\in\XX}\sum_{y\in\YY}Q_Y^\alpha(y)\pazocal{W}^{1-\alpha}(y|x)\\
        &=\min_{Q_Y}\max_{x\in\XX}\frac{1}{\alpha-1}\log\sum_{y\in\YY}Q_Y^\alpha(y)\pazocal{W}^{1-\alpha}(y|x),
    \ee
    and this proves the claim for $\alpha\in(0,1)$. Analogously, for $\alpha>1$ we have
    \bb
        U_\alpha(\pazocal{W})&=\frac{1}{\alpha-1}\log\max_{P_X}\min_{Q_Y}\sum_{\substack{x\in\XX\\ y\in\YY}}P_{X}(x)Q_Y^\alpha(y)\pazocal{W}^{1-\alpha}(y|x)\\
        &\eqt{(i)}\frac{1}{\alpha-1}\log\min_{Q_Y}\max_{P_X}\sum_{\substack{x\in\XX\\ y\in\YY}}P_{X}(x)Q_Y^\alpha(y)\pazocal{W}^{1-\alpha}(y|x)\\
        &=\frac{1}{\alpha-1}\log\min_{Q_Y}\max_{x\in\XX}\sum_{y\in\YY}Q_Y^\alpha(y)\pazocal{W}^{1-\alpha}(y|x)\\
        &=\min_{Q_Y}\max_{x\in\XX}\frac{1}{\alpha-1}\log\sum_{y\in\YY}Q_Y^\alpha(y)\pazocal{W}^{1-\alpha}(y|x),
    \ee
    where in (i) we noticed that $x\mapsto x^\alpha$ is convex. This concludes the proof.
\end{proof}

\begin{cor}\label{cor:add_Renyi_channel} For all $\alpha\in(0,1)\cup (1,\infty)$, the R\'enyi $\alpha$-umlaut information is additive under the tensor product of channels, namely
\bb
    U_\alpha(\pazocal{W}_1\times\pazocal{W}_2)= U_\alpha(\pazocal{W}_1)+U_\alpha(\pazocal{W}_2) 
\ee
\end{cor}

\begin{proof}
    Using the variational form established in Proposition~\ref{prop:variational_Renyi}, we have
    \bb
        U_\alpha(\pazocal{W}_1\times\pazocal{W}_2) &= \min_{Q_{Y_1Y_2}} \max_{\substack{x_1\in \XX_1\\x_2\in\XX_2}} D_\alpha( Q_{Y_1Y_2} \| \pazocal{W}_1(\cdot|x_1)\times \pazocal{W}_2(\cdot|x_2) )\\
        &\leqt{(i)} \min_{Q_{Y_1}\times Q_{Y_2}} \max_{\substack{x_1\in \XX_1\\x_2\in\XX_2}} D_\alpha( Q_{Y_1}\times Q_{Y_2} \| \pazocal{W}_1(\cdot|x_1)\times \pazocal{W}_2(\cdot|x_2) )\\
        &\eqt{(ii)}\min_{Q_{Y_1}} \max_{x_1\in\XX_1} D_\alpha( Q_{Y_1} \| \pazocal{W}_1(\cdot|x_1))+\min_{Q_{Y_2}} \max_{x_2\in\XX_2} D_\alpha( Q_{Y_2} \| \pazocal{W}_2(\cdot|x_2))\\
        &=U_\alpha(\pazocal{W}_1)+U_\alpha(\pazocal{W}_2)\,, 
    \ee
    where in (i) we have chosen the ansatz $Q_{Y_1Y_2}=Q_{Y_1}Q_{Y_2}$ and in (ii) we have leveraged the additivity of the  R\'enyi $\alpha$-relative entropy. Conversely, 
    \bb
    U_\alpha(\pazocal{W}_1\times\pazocal{W}_2) &=    \max_{P_{X_1X_2}}U_\alpha(X_1,X_2 \,;Y_1,Y_2)_{(\pazocal{W}_1)_{Y_1|X_1}(\pazocal{W}_2)_{Y_2|X_2}P_{X_1X_2}}\\
    &\geqt{(iii)}\max_{P_{X_1}\times P_{X_2}}U_\alpha(X_1,X_2 \,;Y_1,Y_2)_{(\pazocal{W}_1)_{Y_1|X_1}P_{X_1}\times (\pazocal{W}_2)_{Y_2|X_2}P_{X_2}}\\
    &\eqt{(vi)} \max_{P_{X_1}\times P_{X_2}}\left(U_\alpha(X_1 \,;Y_1)_{(\pazocal{W}_1)_{Y_1|X_1}P_{X_1}}+U_\alpha(X_2 \,;Y_2)_{(\pazocal{W}_2)_{Y_2|X_2}P_{X_2}}\right)\\
    &=U_\alpha(\pazocal{W}_1)+U_\alpha(\pazocal{W}_2),
    \ee
    where in (iii) we have chosen the ansatz $P_{X_1X_2}=P_{X_1}P_{X_2}$ and in (vi) we have used the additivity of the  R\'enyi $\alpha$-umlaut information for probability distributions (Proposition~\ref{thm:closed_alpha}). 
\end{proof}

Finally, as an immediate consequence of Lemma~\ref{lem:alpha_to_1} we can see that the $\alpha$-R\'enyi umlaut information of a channel converges to the umlaut information.
\begin{cor}\label{cor:aloha_to_1_channel}
For any discrete memoryless channel $\pazocal{W}$ it holds that

\bb
    \lim_{\alpha\to 1} U_\alpha(\pazocal{W})=U(\pazocal{W})\, .
\ee
\end{cor}

\begin{proof} 
Using Proposition~\ref{prop:variational_Renyi}, we have
   \bb
        \lim_{\alpha\to 1^+} U_\alpha(\pazocal{W})=\min_{Q_Y}\inf_{\alpha>1} \max_{x \in \XX} D_\alpha( Q_Y \| \pazocal{W}(\,\cdot\,|x) )\eqt{(i)}\min_{Q_Y} \max_{x \in \XX}\inf_{\alpha>1} D_\alpha( Q_Y \| \pazocal{W}(\,\cdot\,|x) )\eqt{(ii)} U(\pazocal{W}),
   \ee
   where in (i) we noticed that $\XX$ is finite and that, by the monotonicity in $\alpha$ of the R\'enyi divergences and by Lemma~\ref{lem:alpha_to_1}, we have
   \bb
   \inf_{\alpha>1}D_\alpha( Q_Y \| \pazocal{W}(\,\cdot\,|x) )=\lim_{\alpha\to 1^+}D_\alpha( Q_Y \| \pazocal{W}(\,\cdot\,|x) )=D( Q_Y \| \pazocal{W}(\,\cdot\,|x) ),
   \ee and in (ii) we used~\eqref{eq:minimax2}. Analogously, using Lemma~\ref{lem:alpha_to_1} we 
obtain the complementary claim, i.e.
\bb
\lim_{\alpha\to 1^-}    U_\alpha(\pazocal{W})=\sup_{\alpha<1}\max_{P_X}U_\alpha(X;Y)_{\pazocal{W}_{Y|X}P_X}=\max_{P_X}\sup_{\alpha<1}U_\alpha(X;Y)_{\pazocal{W}_{Y|X}P_X}=U(\pazocal{W}).
\ee
\end{proof}


\subsection{Zero-rate limit of the sphere-packing bound}

For a discrete memoryless channel $\pazocal{W}$ from $\XX$ to $\YY$ and $r>0$ the sphere-packing bound can be written as
\begin{align}
E_{\mathrm{sp}}(r,\pazocal{W}) & =\sup_{\alpha\in(0,1]}\max_{P_X}\min_{Q_Y}\;\frac{1-\alpha}{\alpha}\left(D_{\alpha}\left(P_{XY}\middle\|P_X\times Q_Y\right)-r\right) \\
&= \sup_{\alpha\in(0,1]} \Big(  U_{1-\alpha}(\pazocal{W}) - \frac{1-\alpha}{\alpha} r \Big), \label{eq:sp}
\end{align}
with $P_{XY}(x,y)=\pazocal{W}(y|x)P_X(x)$ and the Rényi divergences and R\'enyi umlaut information as featured in~\eqref{eq:Renyi} and Definition~\ref{def:renyi-umlaut}. While $E_{\mathrm{sp}}(r,\pazocal{W})$ for $r$ larger than some critical value has an operational interpretation in noisy channel coding first discussed in \cite{SHANNON196765}, we are here interested in computing the zero-rate limit $r\to0$, denoted by $E_{\mathrm{sp}}(0^+,\pazocal{W})$.

\begin{prop}\label{prop:sphere-packing-limit}
For a discrete memoryless channel $\pazocal{W}$ from $\XX$ to $\YY$, we have
\begin{align}
E_{\mathrm{sp}}(0^+,\pazocal{W})=U(\pazocal{W}). 
\end{align}
\end{prop}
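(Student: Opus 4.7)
The plan is to start directly from the closed-form expression of the sphere-packing bound given in \eqref{eq:sp} and show that all the heavy lifting has already been done by Corollary~\ref{cor:aloha_to_1_channel}. Writing $E_{\mathrm{sp}}(r,\pazocal{W}) = \sup_{\alpha\in(0,1]} \bigl( U_{1-\alpha}(\pazocal{W}) - \tfrac{1-\alpha}{\alpha}r\bigr)$, observe that for each fixed $\alpha\in(0,1]$ the quantity in brackets is monotonically non-increasing in $r$; consequently $E_{\mathrm{sp}}(\cdot,\pazocal{W})$ itself is non-increasing in $r$, and therefore
\begin{equation*}
    E_{\mathrm{sp}}(0^+,\pazocal{W}) \;=\; \lim_{r\to 0^+} E_{\mathrm{sp}}(r,\pazocal{W}) \;=\; \sup_{r>0} E_{\mathrm{sp}}(r,\pazocal{W}).
\end{equation*}

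Next I would swap the double supremum over $r>0$ and $\alpha\in(0,1]$, evaluate the inner supremum $\sup_{r>0}\bigl(U_{1-\alpha}(\pazocal{W}) - \tfrac{1-\alpha}{\alpha}r\bigr)$ (which equals $U_{1-\alpha}(\pazocal{W})$ attained in the limit $r\to 0^+$), and then substitute $\beta = 1-\alpha$ to obtain
\begin{equation*}
    E_{\mathrm{sp}}(0^+,\pazocal{W}) \;=\; \sup_{\alpha\in(0,1]} U_{1-\alpha}(\pazocal{W}) \;=\; \sup_{\beta\in[0,1)} U_\beta(\pazocal{W}).
\end{equation*}

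To evaluate this last supremum I would invoke monotonicity of the R\'enyi divergence $\alpha\mapsto D_\alpha$, which is inherited by $U_\alpha(\pazocal{W})$ through both the minimisation over $Q_Y$ and the maximisation over $P_X$ (taking pointwise minima and maxima of a monotone family preserves monotonicity in the parameter). Hence $\beta\mapsto U_\beta(\pazocal{W})$ is non-decreasing on $[0,1)$, and the supremum coincides with the left limit $\lim_{\beta\to 1^-} U_\beta(\pazocal{W})$. Applying Corollary~\ref{cor:aloha_to_1_channel} converts this left limit into $U(\pazocal{W})$, completing the proof.

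There is no real obstacle here: everything reduces to the already-established fact that the R\'enyi $\alpha$-umlaut information converges to the umlaut information as $\alpha\to 1^-$ (Lemma~\ref{lem:alpha_to_1}, lifted to channels). The only thing to be mildly careful about is justifying the interchange of the two suprema and the evaluation of the inner supremum, both of which are elementary given that the objective is linear in $r$ with non-positive coefficient.
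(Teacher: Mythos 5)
Your proof is correct and follows essentially the same route as the paper: both start from the expression~\eqref{eq:sp}, use antimonotonicity in $r$ to replace the one-sided limit by a supremum over $r>0$, interchange the two suprema to arrive at $\sup_{\alpha\in(0,1]} U_{1-\alpha}(\pazocal{W})$, and conclude via Corollary~\ref{cor:aloha_to_1_channel}. You merely make explicit the two steps the paper leaves implicit (evaluation of the inner supremum over $r$, and the monotonicity of $\beta\mapsto U_\beta(\pazocal{W})$ that turns the supremum over $\beta\in[0,1)$ into the left limit at~$1$), so no new idea is involved.
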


\begin{proof}
From~\eqref{eq:sp} we immediately see that
$E_{\mathrm{sp}}(r,\pazocal{W})$ is antimonotone in $r$, so that we can write
\begin{align}
    E_{\mathrm{sp}}(0^+,\pazocal{W}) &= \sup_{r > 0} \sup_{\alpha\in(0,1]} \Big( U_{1-\alpha}(\pazocal{W}) - \frac{1-\alpha}{\alpha} r \Big) \\
    &= \sup_{\alpha\in(0,1]}  U_{1-\alpha}(\pazocal{W}) \,,
\end{align}
and the desired result then follows from Corollary~\ref{cor:aloha_to_1_channel}. 
\end{proof}

\subsection{Operational interpretation in non-signalling--assisted communication}\label{ns-assisted-interpretation}

\begin{figure}[h]
    \centering
    \includegraphics[width=0.45\linewidth]{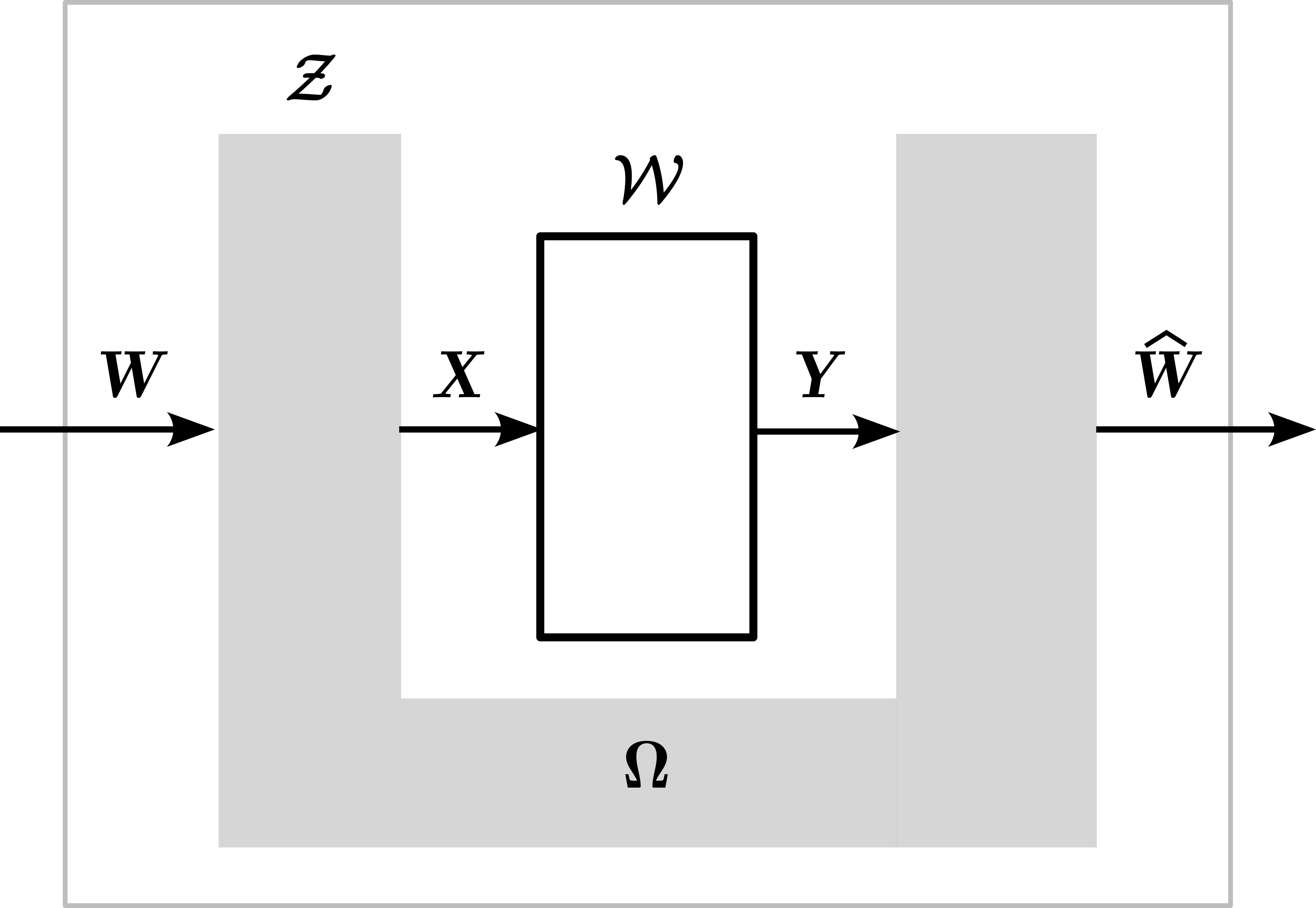}
    \caption{The composition of a $\Omega$-assisted code $\pazocal{Z}$ with a channel $\pazocal{W}$.}
    \label{fig:omega_assisted}
\end{figure}

Consider a source of messages represented by a random variable $W$ taking values in an alphabet $\mathcal{M}=\{1,\dots, M\}$ of size $M$. A code $\pazocal{Z}$ is defined by a joint probability distribution
\bb
    \pazocal{Z}(x,\hat w|y,w)=\mathbb{P}\left[X=x,\,\hat W=\hat w\,|\, Y=y,\,W=w\right],
\ee
where $\hat W$ is a random variable taking values in $\mathcal{M}$ interpreted as an estimator of the original message $W$, and $X$ and $Y$ are the input and the output of a channel $\pazocal{W}$ from $\XX$ to $\YY$, as in Figure~\ref{fig:omega_assisted}.

\begin{Def} A code $\pazocal{Z}$ is said to be $\Omega$-assisted, with $\Omega\in\{\emptyset, NS\}$, if
\begin{itemize}
    \item $\Omega=\emptyset$ (unassisted codes) $\pazocal{Z}(x,\hat w|y,w)=\pazocal{E}(x|w)\pazocal{D}(\hat w|y)$ for some conditional probability distributions $\pazocal{E}$ and $\pazocal{D}$: the code is made of an encoder $\pazocal{E}$ and a decoder $\pazocal{D}$ without correlations;
    \item $\Omega=NS$ (non-signalling--assisted codes) the estimator $\hat W$ is conditionally independent of the source $W$ and the input of the channel $X$ is conditionally independent of the output of the channel $Y$ as
    \bb
        \mathbb{P}\left[\hat W=\hat w\,|\, W=w, \, Y=y\right]&=\mathbb{P}\left[\hat W=\hat w\,|\, Y=y\right],\\
        \mathbb{P}\left[X= x\,|\, W=w, \, Y=y\right]&=\mathbb{P}\left[ X=x\,|\, W=w\right].
    \ee
\end{itemize}
\end{Def}

We refer to \cite{Matthews2012} for further interpretations of non-signalling--assisted codes for communication, originally studied in the setting of quantum non-locality \cite{Popescu94}. There might be intermediate settings, like the physically meaningful case of the entanglement-assisted codes. However, we will not discuss these here (cf.~Section~\ref{sec:outlook}), since the case of non-signalling assistance is the one providing an operati\"onal interpretation to the umlaut information.

\begin{Def}\label{def:assisted-codes}
Given a channel $\pazocal{W}$ and a source $W$ taking values in a set of messages of size $M$ with uniform probability, the minimum average error probability that can be achieved by a $\Omega$-assisted code is defined as
\bb
    \epsilon^\Omega(M, \pazocal{W})\coloneqq \min_{\pazocal{Z}\in\{\Omega\text{-assisted codes}\}}\left\{\mathbb{P}\left[ W\neq \hat W\right]\quad\text{ with }\quad Y|X \sim \pazocal{W},\quad X\hat{W}|YW\sim \pazocal{Z}\right\}
\ee
and the largest size $M$ of the set of messages that can be transmitted with error probability at most $\epsilon\in (0,1)$ using a $\Omega$-assisted code is
\bb
    M^\Omega(\epsilon,\pazocal{W}) \coloneqq \max_{\pazocal{Z}\in\{\Omega\text{-assisted codes}\}} \left\{M\,:\, \mathbb{P}[\hat W \neq W]\leq \epsilon\quad\text{ with }\quad Y|X \sim \pazocal{W},\quad X\hat{W}|YW\sim \pazocal{Z}\right\}.
\ee
\end{Def}

In particular, by the inclusion $\{\emptyset\text{-assisted codes}\}\subseteq \{NS\text{-assisted codes}\}$, it holds that
\bb\label{eq:NSvsPl}
    \epsilon^{\rm NS}(M, \pazocal{W})\leq \epsilon^\emptyset(M, \pazocal{W}),\qquad M^\emptyset(\epsilon,\pazocal{W})\leq M^{\rm NS}(\epsilon,\pazocal{W}).
\ee

The corresponding error exponents are then defined as follows.

\begin{Def}[($\Omega$-assisted zero-rate error exponent)]
Given a channel $\pazocal{W}$ from $\XX$ to $\YY$, its 
$\Omega$-assisted error exponent with communication rate $r$ is defined as
\bb\label{eq:defR}
E^\Omega(r,\pazocal{W})\coloneqq \liminf_{n\to\infty} - \frac{1}{n} \log \epsilon^\Omega(\, \exp({rn}) ,\pazocal{W}^{\times n})\, .
\ee
Furthermore, the $\Omega$-assisted zero-rate error exponent of $\pazocal{W}$ is defined as
    \bb\label{eq:def_thoroughness}
        \errom{\pazocal{W}}&\coloneqq \liminf_{r\to 0^+}E^\Omega(r,\pazocal{W})\, .
    \ee
\end{Def}

As an alternative quantifier for the $\Omega$-assisted zero-rate error exponent from~\eqref{eq:def_thoroughness} we also consider the quantity
\bb\label{eq:opom_def}
\opom{\pazocal{W}}\coloneqq\liminf_{M\to\infty}\liminf_{n\to \infty} -\frac{1}{n} \log \epsilon^\Omega(M,\pazocal{W}^{\times n}),
\ee
which is to be understood as the error exponent with constant, but arbitrarily large message size $M\to\infty$ (similar to what is considered in \cite[Equation (1)]{lami2024asymptotic}).
We will shortly see, in fact, that the asymptotic results do not depend on the message size $M$ and hence we do not need to take $M$ going to infinity: it will suffice to fix an arbitrary $M \geq 2$.
Note that for any fixed $M\geq 1$ and $r>0$,
\bb
     \epsilon^\Omega(\,\exp(rn) ,\pazocal{W}^{\times n})\geq  \epsilon^\Omega(\,M ,\pazocal{W}^{\times n})\qquad \forall\ n\geq \frac{\log M}{r}.
\ee
Therefore, it follows immediately from the definitions that
\bb
    \liminf_{n\to \infty} -\frac1n \log \epsilon^\Omega(\exp({rn}),\pazocal{W}^{\times n})\leq \liminf_{n\to \infty}  -\frac1n \epsilon^\Omega(\,M ,\pazocal{W}^{\times n}),
\ee
whence, by arbitrariness of $M\geq 1$ and $r>0$,
\bb\label{eq:Tast}
    \errom{\pazocal{W}}\leq \opom{\pazocal{W}}.
\ee
Further, it follows from \cite[Equation (1.55)]{SHANNON1967522} that in fact $E^{\emptyset}(0^+,\pazocal{W})=E^{\emptyset}_0(\pazocal{W})$, and we will prove as part of the forthcoming Theorem~\ref{thm:exact_tho} that also $E^{\mathrm{\rm NS}}(0^+,\pazocal{W})=E^{\mathrm{\rm NS}}_0(\pazocal{W})$. As such, $\errom{\pazocal{W}}$ and $\opom{\pazocal{W}}$ are identical quantifiers for the zero-rate error exponents considered in this work.

A one-shot upper bound on $M^\Omega_\epsilon$ was established by a seminal work of Polyanskiy, Poor and Verd\'u~\cite{PPV}, which was later shown to exactly correspond to NS-assisted codes~\cite{Matthews2012}. For our purposes this can be stated in terms of the error exponent as follows.

\begin{prop}[(Achievability of the meta-converse~\cite{Matthews2012})]\label{thm:rev_meta}
For any channel $\pazocal{W}$ from $\XX$ to $\YY$ and any $M \in \N$ we have
    \bb\label{eq:reversedmc1}
        -\log \epsilon^{\rm NS}(M,\pazocal{W})=\max_{P_X}\min_{Q_Y} D^{1/M}_H(P_XQ_Y\|P_{XY}),
    \ee
    where $P_{XY}(x,y)=\pazocal{W}(y|x)P_X(x)$.
    In particular,
    \bb\label{eq:reversedmc2}
        \opns{\pazocal{W}}&=\lim_{\delta\to 0}\liminf_{n\to \infty}\frac{1}{n}\max_{P_{X^n}}\min_{Q_{Y^n}} D^{\delta}_H(P_{X^n}Q_{Y^n}\|P_{X^nY^n})
    \ee
    where $P_{X^nY^n}(x_1,\dots,x_n,y_1,\dots,y_n)= P_{X^n}(x_1,\dots,x_n)\prod_{i=1}^n\pazocal{W}(y_i|x_i)$.
\end{prop}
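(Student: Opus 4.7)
The plan is to establish the one-shot identity \eqref{eq:reversedmc1} as a direct consequence of Matthews' theorem \cite{Matthews2012}, and then to deduce the asymptotic formula \eqref{eq:reversedmc2} by applying the one-shot identity to $\pazocal{W}^{\times n}$ and passing to the iterated limit.

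For the one-shot identity, the key fact to invoke is that the minimum error probability of NS-assisted codes with $M$ messages (and uniform source $W$) is the value of a linear program in the code $\pazocal{Z}$, whose feasible set is the polytope cut out by the two NS constraints. The LP dual (Matthews' analysis) produces a saddle-point expression in $P_X$ (sender's input marginal) and $Q_Y$ (an auxiliary output marginal). Concretely, Matthews shows that this saddle-point value equals the minimum type-II error of the binary hypothesis test discriminating $P_X Q_Y$ (null) from $P_{XY}$ (alternative) with type-I error at most $1/M$, optimised as $\min_{P_X}\max_{Q_Y}$. Taking $-\log$ converts the inner minimum type-II error into $D^{1/M}_H(P_X Q_Y\|P_{XY})$ by definition~\eqref{D_H_iid} and exchanges the outer $\min/\max$ ordering, producing \eqref{eq:reversedmc1} as stated. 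A small sanity check is required to confirm the ordering of arguments of $D_H$ matches the convention used throughout the excerpt.

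For the asymptotic statement, apply \eqref{eq:reversedmc1} to the product channel $\pazocal{W}^{\times n}$ and divide by $n$ to obtain
\bb
-\frac{1}{n}\log \epsilon^{\rm NS}(M, \pazocal{W}^{\times n}) \;=\; \frac{1}{n} \max_{P_{X^n}} \min_{Q_{Y^n}} D_H^{1/M}\!\left(P_{X^n} Q_{Y^n}\,\big\|\,P_{X^n Y^n}\right).
\ee
Taking $\liminf_{n\to\infty}$ followed by $\liminf_{M\to\infty}$ makes the left-hand side coincide with $\opns{\pazocal{W}}$ by definition. For the right-hand side, set $\delta = 1/M$ and use that $\delta\mapsto D_H^\delta$ is monotone non-increasing (loosening the type-I constraint cannot increase the optimal type-II error, hence cannot decrease $-\log$ of it). The max-min over distributions inherits this monotonicity, so $\liminf_{M\to\infty}$ along the discrete sequence $(1/M)_{M\in\N}$ coincides with $\lim_{\delta\to 0^+}$ along the continuous variable, yielding \eqref{eq:reversedmc2}.

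The main obstacle is \eqref{eq:reversedmc1}, whose non-trivial content --- that the PPV meta-converse bound is exactly achievable by NS-assisted codes --- is precisely the theorem of Matthews being cited, and is therefore invoked rather than re-derived here. The asymptotic step is then essentially bookkeeping, with the only subtle point being the monotonicity argument used to pass from the discrete $M$-limit to the continuous $\delta$-limit.
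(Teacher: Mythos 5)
Your proposal is correct and follows essentially the same two-step structure as the paper: invoke Matthews' achievability of the meta-converse for the one-shot identity \eqref{eq:reversedmc1}, then apply it to $\pazocal{W}^{\times n}$ and take the iterated limits to get \eqref{eq:reversedmc2}. The one difference worth noting is that you treat the one-shot identity as a black-box citation, whereas the paper's Appendix~\ref{app:missing} gives a self-contained derivation: it starts from Matthews' explicit linear program for $\epsilon^{\rm NS}(M,\pazocal{W})$, changes variables (first $R'_{xy}=P_X(x)-R_{xy}$, then $R'_{xy}=P_X(x)T_{xy}$ to introduce a test $T$), rewrites the per-$y$ constraint as a minimization over $Q_Y\in\mathcal{P}(\YY)$, and finally applies von Neumann's minimax to swap $\max_{Q_Y}$ with $\min_T$, at which point the inner problem is literally $\exp\!\left(-D_H^{1/M}(P_XQ_Y\|P_{XY})\right)$. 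Since the paper itself attributes the one-shot result to~\cite{Matthews2012} and~\cite[Proposition~B.1]{Siddharth18}, your citation-based treatment is consistent with how the proposition is framed; your additional care in matching the $D_H$ argument convention and in passing from the discrete $M\to\infty$ limit to the continuous $\delta\to0^+$ limit via monotonicity of $\delta\mapsto D_H^\delta$ is correct (the paper treats that step as a tacit change of variable).
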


This essentially follows from the considerations in \cite{Matthews2012} and the one-shot result has been previously stated as~\cite[Proposition B.1]{Siddharth18}. We give a self-contained proof in our notation in Appendix~\ref{app:missing}. The following theorem is our main result in this section.

\begin{boxed}{}
\begin{thm}[(Non-signalling--assisted zero-rate error exponent of $\pazocal{W}$ and umlaut information)]
\label{thm:exact_tho}
    Given a discrete, memoryless channel $\pazocal{W}$ from $\XX$ to $\YY$, it holds that
    \bb
         \opns{\pazocal{W}} = \errns{\pazocal{W}}=E_{\mathrm{sp}}(0^+,\pazocal{W})=U(\pazocal{W}).
    \ee
\end{thm}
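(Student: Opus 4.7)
The plan rests on facts already established in the paper: the zero-rate sphere-packing identification $E_{\mathrm{sp}}(0^+,\pazocal{W}) = U(\pazocal{W})$ from Proposition~\ref{prop:sphere-packing-limit}; the hypothesis-testing meta-converse of Proposition~\ref{thm:rev_meta}, which provides the exact identity $-\log \epsilon^{\rm NS}(M,\pazocal{W}^{\times n}) = \max_{P_{X^n}}\min_{Q_{Y^n}} D_H^{1/M}(P_{X^n}Q_{Y^n}\|P_{X^n}\pazocal{W}^{\times n})$; and the additivity of the umlaut information (Corollary~\ref{cor:add_chan}) and of its R\'enyi variant (Proposition~\ref{thm:closed_alpha}). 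Since \eqref{eq:Tast} already yields $\errns{\pazocal{W}} \leq \opns{\pazocal{W}}$, it suffices to prove the sandwich $U(\pazocal{W}) \leq \errns{\pazocal{W}} \leq \opns{\pazocal{W}} \leq U(\pazocal{W})$ and then invoke Proposition~\ref{prop:sphere-packing-limit}.

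For the upper bound $\opns{\pazocal{W}} \leq U(\pazocal{W})$, I would apply the standard weak-converse inequality $D_H^\epsilon(P\|Q) \leq \tfrac{1}{1-\epsilon}(D(P\|Q) + h_2(\epsilon))$ inside the meta-converse formula. The inner max-min that appears after this step is by definition $U(\pazocal{W}^{\times n})$, which by the additivity of Corollary~\ref{cor:add_chan} equals $n\, U(\pazocal{W})$; crucially, this additivity holds over arbitrary (possibly correlated) input distributions $P_{X^n}$ on $\XX^n$, so no restriction to product inputs is needed. This yields
\begin{equation*}
-\log \epsilon^{\rm NS}(M,\pazocal{W}^{\times n}) \leq \tfrac{M}{M-1}\bigl(n\, U(\pazocal{W}) + h_2(1/M)\bigr),
\end{equation*}
and dividing by $n$, taking $\liminf_{n\to\infty}$, and then letting $M\to\infty$ gives the claim.

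For the achievability $\errns{\pazocal{W}} \geq U(\pazocal{W})$, I would again start from the meta-converse, but now restrict the outer maximisation to product inputs $P_{X^n} = P_X^{\times n}$ and apply the R\'enyi lower bound \eqref{eq:Dalpha} with $\alpha \in (0,1)$ and $\epsilon = \exp(-rn)$. Combining this with the additivity of $U_\alpha$ from Proposition~\ref{thm:closed_alpha} (i.e.\ $U_\alpha(X^n;Y^n) = n\, U_\alpha(X;Y)$ for the product joint distribution) yields
\begin{equation*}
-\log \epsilon^{\rm NS}(\exp(rn),\pazocal{W}^{\times n}) \geq n\, U_\alpha(\pazocal{W}) + \tfrac{\alpha}{1-\alpha}\, rn,
\end{equation*}
so that dividing by $n$ and taking $\liminf_n$ produces $E^{\rm NS}(r,\pazocal{W}) \geq U_\alpha(\pazocal{W}) + \tfrac{\alpha r}{1-\alpha}$. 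Sending $r\to 0^+$ yields $\errns{\pazocal{W}} \geq U_\alpha(\pazocal{W})$ for every $\alpha \in (0,1)$, and then $\alpha \to 1^-$ combined with Corollary~\ref{cor:aloha_to_1_channel} delivers $\errns{\pazocal{W}} \geq U(\pazocal{W})$.

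The main obstacle is the careful ordering of limits in the achievability step: because the R\'enyi bound \eqref{eq:Dalpha} is applied with the $n$-dependent value $\epsilon = \exp(-rn)$, it produces a genuine $\Theta(n)$ correction $\tfrac{\alpha rn}{1-\alpha}$ that, upon normalisation by $n$, leaves a residual $\tfrac{\alpha r}{1-\alpha}$ which must be neutralised by sending $r\to 0^+$ \emph{before} the limit $\alpha \to 1^-$, so as to avoid the divergence $\tfrac{1}{1-\alpha}\to\infty$. Chaining the three inequalities derived above with Proposition~\ref{prop:sphere-packing-limit} then collapses the theorem to the four-way equality $\opns{\pazocal{W}} = \errns{\pazocal{W}} = E_{\mathrm{sp}}(0^+,\pazocal{W}) = U(\pazocal{W})$.
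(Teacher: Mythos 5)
Your proposal reproduces the paper's own proof almost exactly: you establish the same sandwich $U(\pazocal{W}) \leq \errns{\pazocal{W}} \leq \opns{\pazocal{W}} \leq U(\pazocal{W})$ using \eqref{eq:Tast}, the meta-converse from Proposition~\ref{thm:rev_meta}, the additivity statements (Corollary~\ref{cor:add_chan} for $U$ and Proposition~\ref{thm:closed_alpha} for $U_\alpha$), the R\'enyi lower bound \eqref{eq:Dalpha}, a weak-converse upper bound on $D_H^\epsilon$, and Corollary~\ref{cor:aloha_to_1_channel} to push $\alpha\to 1^-$ at the end; the only cosmetic deviation is that you cite $D_H^\epsilon(P\|Q)\leq\tfrac{1}{1-\epsilon}(D(P\|Q)+h_2(\epsilon))$ whereas the paper uses $D_H^\epsilon(P\|Q)\leq\tfrac{1}{1-\epsilon}(1+D(P\|Q))$, which are interchangeable here since both corrections vanish after dividing by $n$. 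Your observation about the ordering of $r\to 0^+$ before $\alpha\to 1^-$ is exactly what the paper's proof implements, and you correctly note that the additivity $U(\pazocal{W}^{\times n})=nU(\pazocal{W})$ obviates any need to restrict to product inputs on the upper-bound side.
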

\end{boxed}

\begin{proof}
The last equality $E_{\mathrm{sp}}(0^+,\pazocal{W})=U(\pazocal{W})$ was shown in Proposition~\ref{prop:sphere-packing-limit}. The second equality $\errns{\pazocal{W}}=E_{\mathrm{sp}}(0^+,\pazocal{W})$ can then also be deduced from taking the $r\to0$ limit of $E^{\mathrm{NS}}(r,\pazocal{W})=E_{\mathrm{sp}}(r,\pazocal{W})$, which is known from \cite[Theorem~24]{Polyanskiy13} (also see the discussion in \cite[Theorem~4.1]{Oufkir2024Oct}). We will show that $\opns{\pazocal{W}}$ and $\errns{\pazocal{W}}$ are both equal to $U(\pazocal{W})$.

In the rest of the proof, we will always refer to $P_{XY}$ (or similarly to $P_{X^nY^n}$) as the joint distribution of an input $X$ ($X^n$) and the corresponding output $Y$ ($Y^n$) using the channel $\pazocal{W}$ ($\pazocal{W}^{\times n}$), as in~\eqref{eq:inp_outp}. The proof is divided into a lower and upper bound.\\

\noindent\textbf{Lower bound.} Using Proposition~\ref{thm:rev_meta} and taking the particular ansatz $P_{X^n}=P^{\times n}_X$ defined as
    \bb
        P^{\times n}_X(x_1,\dots,x_n)=P_X(x_1)\cdots P_X(x_n)
    \ee
    for any $P_X\in\mathcal{P}(\XX)$, we get for any fixed $\alpha \in (0,1)$ that
    \bb
        \errns{\pazocal{W}}&=\lim_{r\to 0}\liminf_{n\to \infty}\frac{1}{n}\max_{P_{X^n}} \min_{Q_{Y^n}} D^{\exp(-rn)}_H(P_{X}^{\times n}Q_{Y^n}\|P_{X^nY^n})\\
        &\geq\lim_{r\to 0}\liminf_{n\to \infty}\frac{1}{n}\min_{Q_{Y^n}} D^{\exp(-rn)}_H(P_{X}^{\times n}Q_{Y^n}\|P_{XY}^n)\\
        &\geqt{(i)}\lim_{r\to 0}\liminf_{n\to \infty}\frac{1}{n}\left(\min_{Q_{Y^n}}  D_\alpha(P_{X}^{\times n}Q_{Y^n}\|P_{X^nY^n})+\frac{\alpha}{1-\alpha}\,rn\right)\\
        &\eqt{(ii)} \liminf_{n\to \infty}\frac{1}{n}U_\alpha(X^n;Y^n)\\
        &\eqt{(iii)}U_\alpha(X;Y)
    \ee
    where in~(i) we have used~\eqref{eq:Dalpha}, namely
    \bb
        D^\epsilon_H(P\|Q)\geq D_\alpha(P\|Q)+\frac{\alpha}{1-\alpha}\log\frac{1}{\epsilon}
    \ee
    for any $\alpha\in(0,1)$.
    In~(ii) we have introduced the umlaut information of $(X^n,Y^n)\sim P_{XY}^n$. In (iii) we have leveraged the additivity of $U_\alpha$, according to Theorem~\ref{thm:closed_alpha}. By arbitrariness of $P_X$ and $\alpha$,  we get
    \bb
        \errns{\pazocal{W}}&\geq \max_{P_X}\limsup_{\alpha<1}U_\alpha(X;Y)\, \eqt{(iv)}\, \max_{P_X}U(X;Y)=U(\pazocal{W})
    \ee
    where in (iv) we have used Lemma~\ref{lem:alpha_to_1}.\\

\noindent\textbf{Upper bound.} Now, we notice that
    \bb
        \opns{\pazocal{W}}&\leqt{(v)} \lim_{\delta\to 0}\liminf_{n\to \infty}\max_{P_{X^n}}\min_{Q_{Y^n}}\frac{1}{n}\frac{1}{1-\delta}\left(1+ D(P_{X^n}Q_{Y^n}\|P_{X^nY^n})\right)\\
        & =\liminf_{n\to\infty}\frac{1}{n}\max_{P_{X^n}}\min_{Q_{Y^n}}D(P_{X^n}Q_{Y^n}\|P_{X^nY^n})\\
        &=\liminf_{n\to\infty}\frac{1}{n}U(\pazocal{W}^{\times n})\\
        &\eqt{(vi)}U(\pazocal{W}),
    \ee
    where in~(v) we have again used the achievability of the meta-converse in Proposition~\ref{thm:rev_meta} as well as the upper bound (see e.g.~\cite[Equation (2.251)]{PolyanskiyPhD})
    \bb
        D_H^{\epsilon}(P\|Q)\leq \frac{1}{1-\epsilon}(1+D(P\|Q)),
    \ee
    and in~(vi) we have recalled that the channel umlaut information is additive (Corollary~\ref{cor:add_chan}).
Hence, we have found that
\bb
    U(\pazocal{W})\leq \errns{\pazocal{W}}\leqt{(vii)}\opns{\pazocal{W}}\leq U(\pazocal{W}),
\ee
where (vii) is~\eqref{eq:Tast}. This concludes the proof.
\end{proof}

A `strong converse' extension of the operational interpretation established in Theorem~\ref{thm:exact_tho} can be formulated as follows. For a fixed message size $M$, let
\bb
\opom{\pazocal{W},M}\coloneqq\liminf_{n\to \infty} -\frac{1}{n} \log \epsilon^\Omega(M,\pazocal{W}^{\times n}).
\ee
By definition of $\opom{\pazocal{W}}$ in~\eqref{eq:opom_def}, this means that
\bb
    \opom{\pazocal{W}}=\liminf_{M\to \infty}\opom{\pazocal{W},M}.
\ee
Then, we can prove the following statement.
\begin{thm}[(Strong converse for the non-signalling--assisted zero-rate error exponent)]\label{thm:strong_channels}
    Given a discrete, memoryless channel $\pazocal{W}$ from $\XX$ to $\YY$, for any fixed $M\geq 2$ it holds that
    \bb
         \opns{\pazocal{W},M} =U(\pazocal{W}).
    \ee
\end{thm}
\begin{proof}
    By the very definition of \opns{\pazocal{W},M}, we have that
    \bb
        \opns{\pazocal{W},M}\geq \opns{\pazocal{W}} \eqt{(i)} U(\pazocal{W}),
    \ee
    where in (i) we have used Theorem~\ref{thm:exact_tho}.
    Using the achievability of the meta-converse (Proposition~\ref{thm:rev_meta}), we 
    now see that
    \bb
        -\log \epsilon^{\rm NS}(M,\pazocal{W})=\max_{P_X}\min_{Q_Y} D^{1/M}_H(P_XQ_Y\|P_{XY}),
    \ee
    so that
    \bb
     \opns{\pazocal{W},M}=\liminf_{n\to \infty} \max_{P_{X^n}}\min_{Q_{Y^n}} \frac{1}{n} D^{1/M}_H(P_{X^n}Q_{Y^n}\|P_{X^nY^n}).
    \ee
    Hence
    \bb
        \opns{\pazocal{W},M}&\leqt{(ii)}\liminf_{n\to \infty} \max_{P_{X^n}}\min_{Q_{Y^n}} \frac{1}{n} \left(D_\alpha(P_{X^n}Q_{Y^n}\|P_{X^nY^n})+\frac{\alpha}{1-\alpha}\log\frac{1}{1-1/M}\right)\\
        &=\liminf_{n\to\infty}\frac{1}{n}U_\alpha(\pazocal{W}^{\times n})\\
        &\eqt{(iii)}U_\alpha(\pazocal{W}),
    \ee
    where in (ii) we have used the upper bound (see e.g.~\cite{mosonyi_2015})
    \bb
        D_H^{\epsilon}(p\|q)\leq D_\alpha(p\|q)+\frac{\alpha}{\alpha-1}\log\frac{1}{1-\epsilon},
    \ee
    which holds for $0<\epsilon<1$ and $\alpha\in(1,\infty)$,
    and in (iii) we used the additivity of the R\'enyi $\alpha$-umlaut information of a channel (Corollary~\ref{cor:add_Renyi_channel}). By arbitrariness of $\alpha>1$, we take the limit
    \bb
        \opns{\pazocal{W},M}&\leq 
        \lim_{\alpha\to 1^+} U_\alpha(\pazocal{W})\eqt{(iv)}U(\pazocal{W}),
    \ee
    where in (iv) we have used Corollary~\ref{cor:aloha_to_1_channel}. This concludes the proof.
\end{proof}


\subsection{Operational interpretation in unassisted communication}

The unassisted zero-rate error exponent for a discrete memoryless channel $\pazocal{W}$ is determined as~\cite[Theorem 3]{SHANNON196765}
\bb\label{eq:plain}
    \errpl{\pazocal{W}}=-\min_{P_X}\sum_{x,x'\in\XX}P_X(x)P_X(x')\log\sum_{y\in\YY}\sqrt{\pazocal{W}(y|x)\pazocal{W}(y|x')}.
\ee
where $P_X$ belongs to $\mathcal{P}(\XX)$ (also see~\cite{SHANNON1967522,Berlekamp1964BlockCW,Gallager1965} for further discussions). By the operational interpretation of  the unassisted and NS-assisted quantities, it follows that
\bb
    \errpl{\pazocal{W}}\leq \errns{\pazocal{W}}.
\ee
However, by comparing the two explicit expressions without having in mind their meaning, it is not immediately clear why one should be smaller than the other one. A direct proof will turn out to be insightful in order to provide an operational interpretation to the umlaut information in terms of list decoding (as discussed in the following).

\begin{prop}\label{prop:inequality}
For any $P_X\in\mathcal{P}(\XX)$, it holds that
    \bb
        -\sum_{x,x'\in\XX}P_X(x)P_X(x')\log\sum_{y\in\YY}\sqrt{\pazocal{W}(y|x)\pazocal{W}(y|x')}\leq -\log\sum_{y\in\YY}\exp\left(\sum_{x\in\XX}P_X(x)\log\pazocal{W}(y|x)\right).
    \ee
\end{prop}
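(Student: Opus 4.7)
The plan is to recognize both sides as values (or averages of values) of a single real-valued functional on probability measures over $\XX$, and then conclude by Jensen's inequality. Concretely, I would define
\bb
F(\mu) \coloneqq -\log\sum_{y\in\YY}\exp\!\left(\sum_{x\in\XX}\mu(x)\log\pazocal{W}(y|x)\right)
\ee
for any $\mu\in\mathcal{P}(\XX)$ (interpreted according to Convention~\ref{conv:infty}). Then the right-hand side of the claim is simply $F(P_X)$ by definition.

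Next, I would rewrite the Bhattacharyya-type term on the left using the identity $\sqrt{\pazocal{W}(y|x)\pazocal{W}(y|x')} = \exp\!\bigl(\tfrac12\log \pazocal{W}(y|x)+\tfrac12\log\pazocal{W}(y|x')\bigr)$, which shows that
\bb
-\log \sum_{y\in\YY}\sqrt{\pazocal{W}(y|x)\pazocal{W}(y|x')} = F\!\left(\tfrac{\delta_x+\delta_{x'}}{2}\right),
\ee
so that the left-hand side is $\mathbb{E}_{X,X'\sim P_X \text{ i.i.d.}}\!\left[F\!\left(\tfrac{\delta_X+\delta_{X'}}{2}\right)\right]$.

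The key step is then to verify that $F$ is \emph{concave} on $\mathcal{P}(\XX)$. This follows because $\mu \mapsto \sum_x \mu(x)\log\pazocal{W}(y|x)$ is linear in $\mu$ for each fixed $y$, and the log-sum-exp function $v \mapsto \log\sum_y \exp(v(y))$ is convex, so its negative, composed with a linear map, is concave. Since $\mathbb{E}\!\left[\tfrac{\delta_X+\delta_{X'}}{2}\right] = P_X$, Jensen's inequality applied to $F$ at this random barycentre gives
\bb
\text{LHS} = \mathbb{E}\!\left[F\!\left(\tfrac{\delta_X+\delta_{X'}}{2}\right)\right] \leq F(P_X) = \text{RHS},
\ee
which is the claim.

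I do not anticipate a real obstacle here: the only delicate point is handling summands where $\pazocal{W}(y|x)=0$, but this is absorbed by Convention~\ref{conv:infty} (and in any case, if the right-hand side is $+\infty$ the inequality is trivial). The conceptual content of the statement is exactly the concavity of $F$, which is a one-line consequence of the convexity of log-sum-exp.
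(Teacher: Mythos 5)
Your proof is correct, and it takes a cleaner, more conceptual route than the paper's. The paper uses the Gibbs variational principle twice: first to rewrite the right-hand side as $\min_{Q_Y}\sum_x P_X(x)\,D\big(Q_Y\,\|\,\pazocal{W}(\cdot|x)\big)$, then, after symmetrising the sum over $x$ and interchanging $\min$ with the expectation over pairs $(x,x')$, a second application of Gibbs computes the inner minimum and recovers the Bhattacharyya term $-\log\sum_y\sqrt{\pazocal{W}(y|x)\pazocal{W}(y|x')}$. Your argument collapses all of this into a single Jensen step applied to the concave functional $F(\mu)=-\log\sum_y\exp\big(\sum_x\mu(x)\log\pazocal{W}(y|x)\big)$. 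The two are closely related: the paper's Gibbs representation $F(\mu)=\min_{Q_Y}\sum_x\mu(x)D\big(Q_Y\,\|\,\pazocal{W}(\cdot|x)\big)$ exhibits $F$ as an infimum of linear functionals, which is precisely one way to witness its concavity, and the interchange of $\min$ with the expectation is the corresponding form of Jensen. So the underlying mechanism is the same, but yours is shorter and makes the concavity structure explicit rather than leaving it implicit. One small point to tighten: when you claim that concavity of $F$ is a "one-line consequence of the convexity of log-sum-exp," this is clean when all $\pazocal{W}(y|x)>0$, but with vanishing transition probabilities the map $\mu\mapsto\sum_x\mu(x)\log\pazocal{W}(y|x)$ takes extended real values, so the affine-precomposition argument needs the conventions $0\cdot(-\infty)=0$ and $\exp(-\infty)=0$ to be propagated with care. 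The cleanest fix is to prove the needed inequality $\sum_y u_y^{\lambda}v_y^{1-\lambda}\le\big(\sum_y u_y\big)^{\lambda}\big(\sum_y v_y\big)^{1-\lambda}$ for nonnegative $u,v$ directly from H\"older, with $u_y=\prod_x\pazocal{W}(y|x)^{\mu(x)}$ and $v_y=\prod_x\pazocal{W}(y|x)^{\nu(x)}$ using $0^0=1$; this establishes concavity of $F$ on all of $\mathcal{P}(\XX)$ without any case analysis.
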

\begin{proof}
    We will call, as usual, $P_{XY}(x,y)\coloneqq\pazocal{W}(y|x)P_X(x)$. By the Gibbs variational principle (Lemma~\ref{thm:Gibbs}), we reintroduce the minimization over $Q_Y\in\mathcal{P}(\XX)$:
    \bb\label{eq:first}
        -\log\sum_{y\in\YY}\exp\left(\sum_{x\in\XX}P_X(x)\log\pazocal{W}(y|x)\right)
        &=\min_{Q_Y}D(P_XQ_Y\|P_{XY})\\
        &=\min_{Q_Y}\sum_{\substack{x\in\XX\\y\in\YY}}P_X(x)Q_Y(y)\log\frac{P_{X}(x)Q_Y(y)}{P_X(x)\pazocal{W}(y|x)}\\
        &=\min_{Q_Y}\sum_{x\in\XX}P_X(x)D\big(Q_Y\|\pazocal{W}(\,\cdot\,|x)\big).
    \ee
    Now, we can fictitiously double the sum over $x\in\XX$
    \bb\label{eq:second}
        \min_{Q_Y}\sum_{x\in\XX}&P_X(x)D\big(Q_Y\|\pazocal{W}(\,\cdot\,|x)\big)\\
        &=\min_{Q_Y}\sum_{x,x'\in\XX}\frac{1}{2}\left(P_X(x)P_X(x')D\big(Q_Y\|\pazocal{W}(\,\cdot\,|x)\big)+P_X(x)P_X(x')D\big(Q_Y\|\pazocal{W}(\,\cdot\,|x')\big)\right)\\
        &= \min_{Q_Y}\sum_{x,x'\in\XX}P_X(x)P_X(x')\left(\frac{1}{2}D\big(Q_Y\|\pazocal{W}(\,\cdot\,|x)\big)+\frac{1}{2}D\big(Q_Y\|\pazocal{W}(\,\cdot\,|x')\big)\right)\\
        &\geq \sum_{x,x'\in\XX}P_X(x)P_X(x')\min_{Q_Y}\left(\frac{1}{2}D\big(Q_Y\|\pazocal{W}(\,\cdot\,|x)\big)+\frac{1}{2}D\big(Q_Y\|\pazocal{W}(\,\cdot\,|x')\big)\right)\\
    \ee
    The minimum can be explicitly computed by Gibbs variational principle:
    \bb\label{eq:third}
        \min_{Q_Y}\Bigg(\frac{1}{2}&D\big(Q_Y\|\pazocal{W}(\,\cdot\,|x)\big) + \frac{1}{2}D\big(Q_Y\|\pazocal{W}(\,\cdot\,|x')\big)\Bigg) \\
        &=\min_{Q_Y}\left(-H(Q_Y)-\sum_{y\in\YY}Q_Y(y)\left(\frac{1}{2}\log \pazocal{W}(y|x)+\frac{1}{2}\log\pazocal{W}(y|x')\right)\right)\\
        &=-\log\sum_{y\in\YY}\exp\left(\frac{1}{2}\log \pazocal{W}(y|x)+\frac{1}{2}\log\pazocal{W}(y|x')\right)\\
        &=-\log\sum_{y\in\YY}\sqrt{ \pazocal{W}(y|x)\pazocal{W}(y|x')}.
    \ee
    By concatenating~\eqref{eq:first} with~\eqref{eq:second} and by plugging~\eqref{eq:third} in the final expression, we have concluded the proof.
\end{proof}

We extend the notion of channel coding to list decoding in order to derive the second operational interpretation of the umlaut information. Given a set of messages $\mathcal{M}=\{1,\dots, M\}$ and a discrete memoryless channel $\pazocal{W}$ from $\XX$ to $\YY$, an  $L$-list unassisted code $\pazocal{Z}$ is given by the composition of an encoder $\pazocal{E}$ from $\mathcal{M}$ to $\XX$ with a decoder $\pazocal{D}$ from $\YY$ to $[\mathcal{M}]^L$, defined as the family of subsets of $\mathcal{M}$ with cardinality $L$. Calling $W$ the source of messages, i.e.~a random variable taking values in $\mathcal{M}$, which is assumed to be uniform, let $\hat W_L$ be the random output of the code taking values in $[\mathcal{M}]^L$. Then the transmission is considered to be successful if $W\in \hat W_L$.

The standard coding setting that we have discussed previously then corresponds to $L=1$ and the corresponding definitions can be suitably generalised to $L>1$ as follows. Given a channel $\pazocal{W}$ and a source $W$ taking values in a set of messages of size $M$ with uniform probability, the minimum average error probability that can be achieved by an $L$-list unassisted code is (borrowing notation from Definition~\ref{def:assisted-codes})
\bb
    \epsilon_L^{\emptyset}(M, \pazocal{W})\coloneqq \min_{\pazocal{Z}\in\{L\text{-list unassisted codes}\}}\left\{\mathbb{P}\left[ W\notin \hat W_L\right]\quad\text{ with }\quad Y|X \sim \pazocal{W},\quad X\hat{W}_L|YW\sim \pazocal{Z}\right\}. 
\ee
The $L$-list unassisted error exponent with communication rate $r$ is defined as
\bb
E_L^\emptyset(r,\pazocal{W})\coloneqq \liminf_{n\to\infty}-\frac{1}{n}\log\epsilon_L^{\emptyset}(L\exp(rn), \pazocal{W})\, .
\ee
Note that we require a message size of size $L\exp(rn)$ rather than simply $\exp(rn)$, as we want to interpret $r$ as a communication rate, and it is common to define rates as the logarithm of the message size divided by $L$ when considering list decoding schemes. See for example~\cite{SHANNON196765}. However, since we anyway keep the list size finite, this has no impact on the value of $E_L^\emptyset(r,\pazocal{W})$ (see also~\cite[footnote on p.~1]{bondaschi2021}).

Now, the $L$-list unassisted zero-rate error exponent of $\pazocal{W}$ is defined as
\bb\label{eq:zero-rate-list}
    \errL{\pazocal{W}}&\coloneqq \lim_{r\to 0^+} E_L^\emptyset(r,\pazocal{W}).
\ee
This expression in fact features a general, closed formula~\cite[Theorem]{Blinovsky2001Oct} (also see~\cite{bondaschi2021} for further discussions).

\begin{prop}[{\cite[Theorem on p.~279]{Blinovsky2001Oct}}]
\label{thm:L_thoroughness}
Given a discrete, memoryless channel $\pazocal{W}$ from $\XX$ to $\YY$, it holds that
    \bb
        \errL{\pazocal{W}} = \max_{P_X}\sum_{x_1,\dots,x_{L+1}}P_X(x_1)\cdots P_X(x_{L+1})\left(-\log \sum_{y\in\YY}\sqrt[L+1]{\pazocal{W}(y|x_1)\cdots\pazocal{W}(y|x_{L+1})}\right),
    \ee
    for any $L\geq 1$, where $P_X\in\mathcal{P}(\XX)$.
\end{prop}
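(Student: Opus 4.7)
The plan is to establish matching achievability (lower bound on the error exponent) and converse (upper bound) that both equal the stated expression, which I denote
\bb
F_L(\pazocal{W}) \coloneqq \max_{P_X}\sum_{x_1,\dots,x_{L+1}}P_X(x_1)\cdots P_X(x_{L+1})\left(-\log \sum_{y\in\YY}\sqrt[L+1]{\pazocal{W}(y|x_1)\cdots\pazocal{W}(y|x_{L+1})}\right).
\ee

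For the achievability $\errL{\pazocal{W}} \geq F_L(\pazocal{W})$, I would use random coding with expurgation in the spirit of Gallager. Fix an input distribution $P_X$ attaining the maximum in $F_L$ and draw $N = 2L\lceil e^{rn}\rceil$ codewords iid from $P_X^{\times n}$; decode by maximum likelihood and output the top-$L$ list. If the transmitted codeword $c_1$ fails to appear in the list upon receiving $y^n$, then there must exist $L$ distinct other codewords $c_{i_2},\dots,c_{i_{L+1}}$ whose likelihood is at least $\pazocal{W}^{\times n}(y^n|c_1)$. For each such tuple one has the symmetric $(L+1)$-wise Bhattacharyya-type bound
\bb
\mathbbm{1}\!\left[\pazocal{W}^{\times n}(y^n|c_{i_j})\geq \pazocal{W}^{\times n}(y^n|c_1)\ \forall j\right] \leq \prod_{j=2}^{L+1}\frac{\pazocal{W}^{\times n}(y^n|c_{i_j})^{1/(L+1)}}{\pazocal{W}^{\times n}(y^n|c_1)^{1/(L+1)}}.
\ee
Summing the indicator of the error event over all $\binom{N-1}{L}$ tuples, averaging $y^n$ under $\pazocal{W}^{\times n}(\,\cdot\,|c_1)$, and then averaging over the iid ensemble, memorylessness factorises the bound into $n$ single-letter contributions. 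This yields $\bar\epsilon \leq \binom{N-1}{L}\exp(-n\,\Psi_L(P_X,\pazocal{W}))$, with $\Psi_L$ the single-letter quantity inside the maximum in $F_L$. A standard Markov-style expurgation removing the worst half of codewords converts the ensemble bound into the existence of a deterministic code with the same exponent; taking $r\to 0^+$ eliminates the $\frac{1}{n}\log \binom{N-1}{L}$ rate loss and optimising over $P_X$ delivers $F_L(\pazocal{W})$.

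For the converse $\errL{\pazocal{W}} \leq F_L(\pazocal{W})$ I would adapt the Plotkin-type argument of Blinovsky. The starting point is a combinatorial lower bound showing that any code $\{c_1,\dots,c_M\}$ with small $L$-list error has a sufficiently spread $(L+1)$-wise empirical composition: one lower bounds the average error probability by the probability that an adversarial output can confound some $L$ of any chosen $(L+1)$-subset of codewords, and via a Cauchy--Schwarz-type symmetrisation this leads to the single-letter $(L+1)$-wise Bhattacharyya factor $\sum_{y}\sqrt[L+1]{\pazocal{W}(y|x_1)\cdots \pazocal{W}(y|x_{L+1})}$ attached to each $(L+1)$-subset. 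Averaging this bound uniformly over all $(L+1)$-subsets of codewords and extracting the exponent bounds $\errL{\pazocal{W}}$ by $-\tfrac{1}{n}\log$ of an average indexed by the empirical $(L+1)$-wise type of the code. In the zero-rate limit $M\to \infty$, a compactness argument on the simplex of empirical $(L+1)$-types shows that this type can be replaced, without loss of exponent, by a product distribution $P_X^{\times(L+1)}$, and optimising over $P_X$ matches $F_L(\pazocal{W})$.

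The main obstacle is the converse. For $L=1$ this reduces to the classical Plotkin--Shannon--Gallager--Berlekamp zero-rate converse, which is already delicate; for general $L$ the subtlety is Blinovsky's combinatorial symmetrisation that makes the $(L+1)$-wise Bhattacharyya bound uniform over arbitrary codes with vanishing rate but unbounded $M$. By contrast, once this step is granted, the achievability part is a routine application of expurgated random coding generalised to list decoding.
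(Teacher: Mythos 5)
The paper does not prove Proposition~\ref{thm:L_thoroughness}; it cites it as \cite[Theorem on p.~279]{Blinovsky2001Oct} and uses it as an external ingredient, so there is no in-paper proof to compare against. Your sketch is a reconstruction of Blinovsky's result, and it has genuine gaps in both directions. On achievability, the ensemble-averaged bound you write down single-letterises to the exponent $-\log\mathbb{E}_{X_1,\dots,X_{L+1}\sim P_X}\big[\sum_y\sqrt[L+1]{\pazocal{W}(y|X_1)\cdots\pazocal{W}(y|X_{L+1})}\big]$, \emph{not} the claimed $\Psi_L(P_X,\pazocal{W})=-\mathbb{E}_{X_1,\dots,X_{L+1}}\log\big[\sum_y\sqrt[L+1]{\cdots}\big]$; by Jensen the former is smaller, so naive averaging over the i.i.d.\ ensemble only yields a strictly weaker exponent. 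Reaching $\Psi_L$ requires the full expurgated-coding machinery (Gallager's $\rho$-moment trick generalised to $(L+1)$-tuples, followed by $\rho\to\infty$ at zero rate), which is substantially more than ``removing the worst half of codewords.''

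On the converse, the step you describe as a ``compactness argument on the simplex of empirical $(L+1)$-types'' that replaces an arbitrary type by the product $P_X^{\times(L+1)}$ is precisely the nontrivial content of Blinovsky's theorem. A zero-rate code can have unbounded size and an arbitrary $(L+1)$-wise joint composition, and forcing that composition to behave like an i.i.d.\ one without exponent loss is a delicate combinatorial argument, not a soft compactness step. You acknowledge the obstacle, but the sketch does not supply the argument; as written, the converse amounts to re-citing Blinovsky, which is what the paper already does explicitly.
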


The main result of the section is the following theorem, which shows that the umlaut information becomes equal to the zero-rate error exponent of list decoding in the large list limit.

\begin{boxed}{}
\begin{thm}[(Achievability of the umlaut information via $E^\emptyset_L(0^+,\pazocal{W})$)]
\label{thm:interpr_upper}
Given a discrete, memoryless channel $\pazocal{W}$ from $\XX$ to $\YY$, the umlaut information of $\pazocal{W}$ is an upper bound to its $L$-list unassisted zero-rate error exponent 
    \bb
        \errL{\pazocal{W}} \leq U(\pazocal{W}),
    \ee
    with equality in the large list limit:
   \bb
   \sup_{L\geq1}\errL{\pazocal{W}}=U(\pazocal{W}).
    \ee
\end{thm}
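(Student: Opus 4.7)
By the closed form of Proposition~\ref{thm:L_thoroughness}, one has $\errL{\pazocal{W}}=\max_{P_X} A_L(P_X)$ with
$A_L(P_X)=\mathbb{E}_{X_1,\dots,X_{L+1}\sim P_X}\big[-\log\sum_y \big(\pazocal{W}(y|X_1)\cdots\pazocal{W}(y|X_{L+1})\big)^{1/(L+1)}\big]$,
while Proposition~\ref{prop:lautum_channel_formula} gives $U(\pazocal{W})=\max_{P_X} B(P_X)$ with $B(P_X)=-\log\sum_y \exp\big(\sum_x P_X(x)\log\pazocal{W}(y|x)\big)$. The theorem then reduces to two claims about a single fixed $P_X$: (i) $A_L(P_X)\leq B(P_X)$ for every $L\geq 1$, and (ii) $\lim_{L\to\infty} A_L(P_X)=B(P_X)$.

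For (i), I would generalise the Gibbs-variational argument behind Proposition~\ref{prop:inequality} from ``doubling'' to $(L{+}1)$-fold symmetrisation. Rewrite $B(P_X)=\min_{Q_Y}\sum_x P_X(x) D(Q_Y\|\pazocal{W}(\,\cdot\,|x))$ and, for i.i.d.\ $X_i\sim P_X$, express the same quantity as $\min_{Q_Y}\mathbb{E}\big[\tfrac{1}{L+1}\sum_{i=1}^{L+1} D(Q_Y\|\pazocal{W}(\,\cdot\,|X_i))\big]$. Pulling $\min_{Q_Y}$ inside the expectation (only an inequality) and evaluating the innermost minimum by Lemma~\ref{thm:Gibbs} produces exactly $-\log\sum_y\sqrt[L+1]{\pazocal{W}(y|X_1)\cdots\pazocal{W}(y|X_{L+1})}$, whose expectation is $A_L(P_X)$. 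Maximising over $P_X$ yields $\errL{\pazocal{W}}\leq U(\pazocal{W})$.

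For (ii), realise $X_1,X_2,\ldots$ as a single infinite i.i.d.\ sequence and set $S_L(y)\coloneqq\tfrac{1}{L+1}\sum_{i=1}^{L+1}\log\pazocal{W}(y|X_i)$. For $y\in\YY_{\pazocal{W},P_X}$ the summands are a.s.\ finite and the strong law of large numbers gives $S_L(y)\to \sum_x P_X(x)\log\pazocal{W}(y|x)$ almost surely; for any other $y$, a witness $x^*\in\mathrm{supp}(P_X)$ with $\pazocal{W}(y|x^*)=0$ is visited infinitely often a.s., so $e^{S_L(y)}$ vanishes eventually. Hence $\sum_y e^{S_L(y)}\to e^{-B(P_X)}$ a.s. To lift this to convergence of expectations, I would combine the AM--GM upper bound $e^{S_L(y)}\leq \tfrac{1}{L+1}\sum_i \pazocal{W}(y|X_i)$, which implies the deterministic bound $\sum_y e^{S_L(y)}\leq 1$, with the deterministic lower bound $\sum_y e^{S_L(y)}\geq \sum_{y\in\YY_{\pazocal{W},P_X}}\min_{x\in\mathrm{supp}(P_X)}\pazocal{W}(y|x)=:c$, which is positive when $U(\pazocal{W})<\infty$. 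This confines $-\log\sum_y e^{S_L(y)}$ to the compact interval $[0,-\log c]$, so dominated convergence yields $A_L(P_X)\to B(P_X)$; the degenerate case $U(\pazocal{W})=+\infty$ is handled by Fatou along the same lines.

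Combining (i) and (ii), $\sup_{L\geq 1}\errL{\pazocal{W}}\geq \lim_{L\to\infty} A_L(P_X)=B(P_X)$ for every $P_X$, so the supremum is at least $U(\pazocal{W})$, matching the upper bound from (i). The main obstacle is the careful handling of outputs $y\notin\YY_{\pazocal{W},P_X}$ that contribute $-\infty$ terms under Convention~\ref{conv:infty}; the AM--GM bound $\sum_y e^{S_L(y)}\leq 1$ is the key analytic ingredient that sidesteps this issue uniformly in $L$ and makes the dominated-convergence step work without any case analysis on $\pazocal{W}$.
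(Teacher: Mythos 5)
Your decomposition into (i) the inequality $\errL{\pazocal{W}}\leq U(\pazocal{W})$ and (ii) the large-$L$ limit matches the paper's structure, with identical ingredients cited (Proposition~\ref{thm:L_thoroughness} for the closed form of $\errL{\pazocal{W}}$, Proposition~\ref{prop:lautum_channel_formula} for $U(\pazocal{W})$). Your step (i) is exactly the paper's Proposition~\ref{thm:inequality} specialised to the uniform weight $u_{L+1}$: symmetrise the integrand, pull $\min_{Q_Y}$ inside the expectation, and evaluate by the Gibbs principle. For step (ii), however, you take a genuinely different route. The paper works with the weak law of large numbers for the empirical frequencies $\hat{N}_k^x$, splits the expectation on the deviation event $\mathcal{E}_k^{(\epsilon)}$, and invokes an auxiliary Lemma~\ref{lem:fv} establishing monotonicity, lower semi-continuity, and the bound $f\geq-\log|\YY|$ for the map $f(v)=-\log\sum_y\exp(\sum_x v(x)\log\pazocal{W}(y|x))$; one then sends $\epsilon\to0^+$ via lower semi-continuity. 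You instead realise the $X_i$ on a single probability space, use the strong law for almost-sure convergence of $S_L(y)$, handle the outputs $y\notin\YY_{\pazocal{W},P_X}$ via second Borel--Cantelli, and pass to expectations by bounded convergence, with the two explicit bounds $c\leq\sum_y e^{S_L(y)}\leq 1$ supplying the required a.s.\ compactness of the integrand. Both arguments are correct. Your version is shorter and dispenses with Lemma~\ref{lem:fv} entirely; the trade-off is that the paper's quantitative approach (WLLN via Chebyshev, then optimisation over $\epsilon$) carries over directly to the rate estimate $\mathcal{O}(L^{-1/3})$ of Proposition~\ref{thm:quant_est}, whereas the strong-law-plus-bounded-convergence argument as written gives only qualitative convergence. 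Two small remarks on the self-assessment: the ``$-\infty$ terms'' under Convention~\ref{conv:infty} actually drive $\sum_y e^{S_L(y)}$ \emph{down}, not up, so it is your lower bound $c$ (not the AM--GM bound) that controls that direction; and $c>0$ should be phrased as a consequence of $B(P_X)<\infty$ for the fixed $P_X$, not of $U(\pazocal{W})<\infty$, although the latter does imply the former.
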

\end{boxed}
We remark that in our work, the size of the list is fixed when the asymptotic limit of a large number of uses of the channel is taken. In the literature, another setting was also considered: $L$ could be taken dependent on the block length $n$ as $L=\exp(\lambda n)$. This case is considered for instance in \cite[Theorem~3]{Merhav2017}, where it is proved that
\bb
    E_L^\emptyset(r,\pazocal{W})=E_{\rm sp}(r-\lambda,\pazocal{W}).
\ee
This result can be connected to the umlaut information of $\pazocal{W}$ as the error exponent converges to $U(\pazocal{W})$ if we consider the limit $r\to 0^+$ with the constraint $\lambda=\lambda_r<r$:
\bb
    E_L^\emptyset(0^+,\pazocal{W})=\lim_{r\to 0^+}E_{\rm sp}(r-\lambda_r,\pazocal{W})=E_{\rm sp}(0^+,\pazocal{W})=U(\pazocal{W}).
\ee
In our discussion, however, we will focus only on the fixed list size regime.
Random coding bounds on the list decoding error exponent in this setting have also been considered in~\cite{nakiboglu_2019}, from which connections with the sphere packing bound in the large list limit can also be deduced.

To work towards a proof of Theorem~\ref{thm:interpr_upper}, we first generalise the lower bounds from Proposition~\ref{prop:inequality} to the following quantities related to $E^\emptyset_L(0^+,\pazocal{W})$.

\begin{Def}\label{def:ell} Let $k\geq 1$ and let $q\in \mathbb{R}^k$ be any vector such that $\sum_{i=1}^kq_i=1$. Let $\pazocal{W}$ be a discrete memoryless channel from $\XX$ to $\YY$ and let $P_X\in\mathcal{P}(\XX)$. Then we define the $(k,q)$-\textit{lower umlaut information} of the channel $\pazocal{W}$ with input distribution $P_X$ as
\bb
    \ell_{k,q}(\pazocal{W},P_X)\coloneqq \sum_{x_1,\dots,x_k\in \XX}P_X(x_1)\cdots P_X(x_k)\left(-\log \sum_{y\in\YY}\exp\left(\sum_{i=1}^kq_i\log\pazocal{W}(y|x_i)\right)\right).
\ee
\end{Def}

\begin{prop}[(Lower bound to the umlaut information)]
\label{thm:inequality}
Let $k\geq 1$ and let $q\in \mathbb{R}^k$ be any vector such that $\sum_{i=1}^kq_i=1$. Let $\pazocal{W}$ be a discrete memoryless channel from $\XX$ to $\YY$ and let $P_X\in\mathcal{P}(\XX)$. Then, whenever we consider $P_{XY}(x,y)\coloneqq\pazocal{W}(y|x)P_X(x)$ as the joint distribution of $(X,Y)$, taking values in $\XX\times \YY$, we have
    \bb\label{eq:ineq_lautum}
        U(X;Y)\geq \ell_{k,q}(\pazocal{W},P_X),
    \ee
    and consequently
    \bb\label{eq:ineq_lautum_channel}
        U(\pazocal{W})\geq \max_{P_X}\ell_{k,q}(\pazocal{W},P_X).
    \ee
\end{prop}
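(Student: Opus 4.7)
The plan is to mimic the argument of Proposition~\ref{prop:inequality}, with the two-element ``fictitious averaging'' replaced by a $k$-element one weighted by $q$. First I would re-express the umlaut information as a minimisation over $Q_Y \in \mathcal{P}(\YY)$, namely
\begin{align*}
U(X;Y) = \min_{Q_Y} \sum_{x\in\XX} P_X(x)\, D\bigl(Q_Y\,\big\|\, \pazocal{W}(\,\cdot\,|x)\bigr),
\end{align*}
exactly as in the intermediate step of the proof of Proposition~\ref{prop:lautum_channel_formula}.

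The key move is then to exploit $\sum_{i=1}^k q_i = 1$ together with the fact that marginalising the product $P_X(x_1)\cdots P_X(x_k)$ over all but one variable returns $P_X$; this lets me rewrite
\begin{align*}
\sum_{x} P_X(x)\, D\bigl(Q_Y\,\|\,\pazocal{W}(\,\cdot\,|x)\bigr) = \sum_{x_1,\dots,x_k} P_X(x_1)\cdots P_X(x_k) \sum_{i=1}^k q_i\, D\bigl(Q_Y\,\|\,\pazocal{W}(\,\cdot\,|x_i)\bigr).
\end{align*}
Pushing the minimisation over $Q_Y$ inside the outer sum (legitimate because the coefficients $P_X(x_1)\cdots P_X(x_k)$ are non-negative) yields the lower bound
\begin{align*}
U(X;Y) \geq \sum_{x_1,\dots,x_k} P_X(x_1)\cdots P_X(x_k) \min_{Q_Y} \sum_{i=1}^k q_i\, D\bigl(Q_Y\,\|\,\pazocal{W}(\,\cdot\,|x_i)\bigr).
\end{align*}
For each fixed tuple the inner problem can be solved in closed form via the Gibbs variational principle (Lemma~\ref{thm:Gibbs}) applied to the function $A(y) = -\sum_i q_i \log \pazocal{W}(y|x_i)$, using once more $\sum_i q_i = 1$ to collapse the $-H(Q_Y)\sum_i q_i$ term, and it evaluates to $-\log\sum_y \exp\!\bigl(\sum_i q_i \log \pazocal{W}(y|x_i)\bigr)$. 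Reassembling produces exactly $\ell_{k,q}(\pazocal{W}, P_X)$, proving~\eqref{eq:ineq_lautum}, and~\eqref{eq:ineq_lautum_channel} then follows by maximising over $P_X$ on both sides and invoking $U(\pazocal{W}) = \max_{P_X} U(X;Y)$.

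The main point of care I anticipate is the bookkeeping dictated by Convention~\ref{conv:infty} when some $\pazocal{W}(y|x_i)$ vanish: terms with $\pazocal{W}(y|x_i) = 0$ for some $i$ drop out of the $y$-sum on the right-hand side under $\exp(-\infty) = 0$, while on the left-hand side such $y$ force $Q_Y(y) = 0$ in any feasible minimiser. This is precisely the mechanism already handled in the proof of Proposition~\ref{prop:lautum_channel_formula}, so no new argument is needed beyond invoking the same convention already implicit in Definition~\ref{def:ell}.
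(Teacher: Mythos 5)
Your proposal is correct and follows essentially the same route as the paper: rewrite $U(X;Y)=\min_{Q_Y}\sum_x P_X(x)\,D(Q_Y\|\pazocal{W}(\cdot|x))$, use $\sum_i q_i=1$ and marginalisation of the product $P_X(x_1)\cdots P_X(x_k)$ to re-express the objective as a $q$-weighted average over $k$-tuples, push the minimum over $Q_Y$ inside the non-negative outer sum, and evaluate the inner minimisation via the Gibbs variational principle (Lemma~\ref{thm:Gibbs}) to obtain $\ell_{k,q}(\pazocal{W},P_X)$. The closing remark about Convention~\ref{conv:infty} is appropriate, with the minor caveat that for $q_i<0$ a vanishing $\pazocal{W}(y|x_i)$ sends $\ell_{k,q}$ to $-\infty$ rather than dropping the term, which only makes the inequality trivially true and so does not affect the argument.
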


\begin{proof} The strategy is very similar to the proof of Proposition~\ref{prop:inequality}. We estimate
{\allowdisplaybreaks
    \begin{align}
        U(X;Y)&=\min_{Q_Y}D(P_XQ_Y\|P_{XY})\nonumber\\
        &=\min_{Q_Y}\sum_{x\in\XX}P_X(x)D\big(Q_Y\|\pazocal{W}(\,\cdot\,|x)\big)\nonumber\\
        &=\min_{Q_Y}\sum_{x\in\XX}P_X(x)\sum_{i=1}^kq_iD\big(Q_Y\|\pazocal{W}(\,\cdot\,|x)\big)\nonumber\\
        &=\min_{Q_Y}\sum_{x_1,\dots,x_k\in\XX}P_X(x_1)\cdots P_X(x_k)\sum_{i=1}^kq_iD\big(Q_Y\|\pazocal{W}(\,\cdot\,|x_i)\big)\nonumber\\
        &\geq \sum_{x_1,\dots,x_k\in\XX}P_X(x_1)\cdots P_X(x_k)\min_{Q_Y}\sum_{i=1}^kq_iD\big(Q_Y\|\pazocal{W}(\,\cdot\,|x_i)\big)\\
        &=\sum_{x_1,\dots,x_k\in\XX}P_X(x_1)\cdots P_X(x_k)\min_{Q_Y}\left(-H(Q_Y)-\sum_{y\in\YY}Q_Y(y)\sum_{i=1}^kq_i\log \pazocal{W}(y|x_i)\right)\nonumber\\
        &\eqt{(i)}\sum_{x_1,\dots,x_k\in\XX}P_X(x_1)\cdots P_X(x_k)\left(-\log \sum_{y\in\YY}\exp\left(\sum_{i=1}^kq_i\log\pazocal{W}(y|x_i)\right)\right),\nonumber
    \end{align}
    }
    where in (i) we have used Gibbs variational principle (Lemma~\ref{thm:Gibbs}). This proves~\eqref{eq:ineq_lautum}. To get~\eqref{eq:ineq_lautum_channel}, we just take the maximum over $P_X\in\mathcal{P}(\XX)$.
\end{proof}

The main technical workhorse for the achievability proof of Theorem~\ref{thm:interpr_upper} is then given by the following proposition.

\begin{prop}[(Achievability of the umlaut information via $\ell_{k,q}$)]
\label{thm:achievability}
For any $k\geq 1$, let $\Delta_1^k$ be the set of vectors $q\in \mathbb{R}^k$ such that $\sum_{i=1}^kq_i=1$. Let $\pazocal{W}$ be a discrete memoryless channel from $\XX$ to $\YY$ and let $P_X\in\mathcal{P}(\XX)$. Then, whenever we consider $P_{XY}(x,y)\coloneqq\pazocal{W}(y|x)P_X(x)$ as the joint distribution of $(X,Y)$, taking values in $\XX\times \YY$, we have
    \bb
        U(X;Y)= \sup_{k\geq 1}\sup_{q\in\Delta_1^k}\ell_{k,q}(\pazocal{W},P_X),
    \ee
    and for the uniform vector $u_k=(1/k,\dots,1/k)$ that
    \bb\label{eq:achievability}
        U(X;Y)= \sup_{k\geq1}\ell_{k,u_k}(\pazocal{W},P_X)=\lim_{k\to \infty }\ell_{k,u_k}(\pazocal{W},P_X).
    \ee
\end{prop}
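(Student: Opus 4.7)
The plan is to combine the already-established upper bound from Proposition~\ref{thm:inequality} with a matching asymptotic lower bound produced by the uniform weights $q=u_k$. Since $\ell_{k,q}(\pazocal{W},P_X)\leq U(X;Y)$ holds for every $k\geq 1$ and every $q\in\Delta_1^k$, both $\sup_{k,q}\ell_{k,q}$ and $\sup_k\ell_{k,u_k}$ are bounded above by $U(X;Y)$, so it suffices to show that $\liminf_{k\to\infty}\ell_{k,u_k}(\pazocal{W},P_X)\geq U(X;Y)$; the three claimed equalities will then follow by a squeeze.

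I would first rewrite $\ell_{k,u_k}$ probabilistically: letting $X_1,\dots,X_k$ be i.i.d.\ with law $P_X$ and setting
\begin{equation*}
A_k\coloneqq\sum_{y\in\YY}\prod_{i=1}^{k}\pazocal{W}(y|X_i)^{1/k},
\end{equation*}
we have $\ell_{k,u_k}(\pazocal{W},P_X)=\mathbb{E}[-\log A_k]$. Two ingredients then drive the argument. First, the AM--GM inequality $\prod_i\pazocal{W}(y|X_i)^{1/k}\leq \frac{1}{k}\sum_i\pazocal{W}(y|X_i)$, summed over $y\in\YY$, gives $A_k\leq 1$, hence $-\log A_k\geq 0$. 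Second, by the strong law of large numbers applied (in its extended-real-valued form) to the nonpositive i.i.d.\ sequence $(\log\pazocal{W}(y|X_i))_i$, one has $\frac{1}{k}\sum_i\log\pazocal{W}(y|X_i)\to\mathbb{E}_{P_X}[\log\pazocal{W}(y|X)]$ almost surely for each fixed $y$. Combining these observations with the finiteness of $\YY$, continuity of the exponential, and Convention~\ref{conv:infty}, we obtain $A_k\to L\coloneqq\sum_y\exp(\sum_x P_X(x)\log\pazocal{W}(y|x))$ almost surely, so $-\log A_k\to-\log L=U(X;Y)$ almost surely (the latter identity following from the closed form of Proposition~\ref{thm:closed_lautum}). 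Fatou's lemma, applicable thanks to $-\log A_k\geq 0$, then yields $\liminf_k\mathbb{E}[-\log A_k]\geq U(X;Y)$, which closes the squeeze.

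The main obstacle is the degenerate case in which $\pazocal{W}(y|x)=0$ for some $x\in\mathrm{supp}(P_X)$, where individual log-terms are $-\infty$ and $U(X;Y)$ may itself be infinite. Here one needs the extended SLLN for i.i.d.\ sequences with expectation $-\infty$ (alternatively, a direct Borel--Cantelli argument applied to $\{\pazocal{W}(y|X_i)=0\}$ shows that almost surely some $X_i$ with $i\leq k$ eventually forces the $y$-th summand of $A_k$ to vanish). With Convention~\ref{conv:infty} identifying $\exp(-\infty)=0$, this ensures $L=0$ precisely when $U(X;Y)=+\infty$, in which regime $-\log A_k\to +\infty$ almost surely and Fatou's inequality goes through in the extended real line. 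The conclusion is $\lim_k\ell_{k,u_k}(\pazocal{W},P_X)=\sup_k\ell_{k,u_k}(\pazocal{W},P_X)=\sup_{k\geq 1}\sup_{q\in\Delta_1^k}\ell_{k,q}(\pazocal{W},P_X)=U(X;Y)$.
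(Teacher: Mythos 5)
Your proof is correct, and it takes a genuinely cleaner route than the paper's. The paper proves the lower bound $\liminf_k \ell_{k,u_k}(\pazocal{W},P_X)\geq U(X;Y)$ via the weak law of large numbers for the empirical frequencies $\hat N_k^x$, conditioning on the good event $\{|\hat N_k^x-P_X(x)|\leq\epsilon\ \forall x\}$ and its complement, and then invoking an auxiliary Lemma~\ref{lem:fv} establishing (i) a uniform lower bound $f(v)\geq-\log|\YY|$, (ii) monotonicity of $f$, and (iii) lower semicontinuity of $f$, before sending $k\to\infty$ and then $\epsilon\to0^+$. Your version replaces all three of these with two observations: the AM--GM inequality gives the sharper pointwise bound $A_k\leq 1$, hence $-\log A_k\geq 0$ (which is what makes Fatou applicable, and is strictly stronger than the paper's $-\log|\YY|$ floor since $u_k$ is a probability vector); and the strong law of large numbers, applied coordinatewise over the finite alphabet $\YY$, gives $A_k\to L$ almost surely directly, obviating both the $\epsilon$-conditioning and the lower-semicontinuity step. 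Your treatment of the degenerate case $U(X;Y)=+\infty$ via Borel--Cantelli is also correct, and the identity $-\log L = U(X;Y)$ follows from Proposition~\ref{thm:closed_lautum} by the same cancellation of $H(P_X)$ that appears in the proof of Proposition~\ref{prop:lautum_channel_formula}. The one thing the paper's heavier machinery buys that yours does not is the quantitative convergence rate of Proposition~\ref{thm:quant_est}: the Chebyshev-based probability bound and the explicit Lipschitz-type control from Lemma~\ref{lem:fv} are what produce the $\mathcal{O}(L^{-1/3})$ gap estimate, whereas the SLLN--Fatou argument is intrinsically non-quantitative. If you only need the qualitative achievability statement, your proof is preferable; if you also want Proposition~\ref{thm:quant_est}, you would still need something like Lemma~\ref{lem:fv}.
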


By inspection, Propositions~\ref{thm:inequality} and~\ref{thm:achievability} taken together immediately imply the sought-after Theorem~\ref{thm:interpr_upper}.

\begin{proof}[Proof of Theorem~\ref{thm:interpr_upper}]
    Let $u_{L+1}=(\frac{1}{L+1},\dots, \frac{1}{L+1})$ and $P_X\in\mathcal{P}(\XX)$. Then, by Definition~\ref{def:ell},
    \bb
        \ell_{L+1,u_{L+1}}(P_X,\pazocal{W})
        &=\sum_{x_1,\dots,x_{L+1}\in \XX}P_X(x_1)\cdots P_X(x_{L+1})\left(-\log \sum_{y\in\YY}\exp\left(\sum_{i=1}^{L+1}\frac{1}{L+1}\log\pazocal{W}(y|x_i)\right)\right)\\
        &=\sum_{x_1,\dots,x_{L+1}}P_X(x_1)\cdots P_X(x_{L+1})\left(-\log \sum_{y\in\YY}\sqrt[L+1]{\pazocal{W}(y|x_1)\cdots\pazocal{W}(y|x_{L+1})}\right).
    \ee
    Taking the maximum over $P_X\in\mathcal{P}(\XX)$ and by Propositions~\ref{thm:L_thoroughness} and~\ref{thm:inequality}, we have
    \bb
        \errL{\pazocal{W}} = \max_{P_X}\ell_{L+1,u_{L+1}}(P_X,\pazocal{W})\leq U(\pazocal{W}).
    \ee
    Taking the supremum over $L\geq1$ establishes via Proposition~\ref{thm:achievability} the equality in the large list limit.
\end{proof}

For the proof of the main technical Proposition~\ref{thm:achievability}, we need the following auxiliary lemma.

\begin{lemma}\label{lem:fv} Let us consider a fixed a discrete memoryless channel $\pazocal{W}$ from $\XX$ to $\YY$. Then the map $f:[0,1]^{\XX}\to \mathbb{R}\cup\{+\infty\}$, defined in terms of Convention~\ref{conv:infty} as
\bb\label{eq:f(v)}
    f(v)\coloneqq -\log \sum_{y\in\YY}\exp\left(\sum_{x\in\XX}v(x)\log\pazocal{W}(y|x)\right),
\ee 
is monotone increasing, in the sense that, if $v(x)\geq u(x)$ for all $x\in\XX$ 
, then $f(v)\geq f(u)$. Furthermore, it is lower semi-continuous and, if $\|v\|_1=1$, non-negative: 
\bb\label{eq:unif_low_b}
    f(v) \geq  0\, .
\ee
\end{lemma}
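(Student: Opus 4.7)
The three assertions are proved by separate direct arguments, with the only real care needed in handling the boundary behaviour imposed by Convention~\ref{conv:infty}.

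For monotonicity, I would observe that $\pazocal{W}(y|x) \in [0,1]$ implies $\log \pazocal{W}(y|x) \leq 0$ (possibly $-\infty$). Hence, if $v \geq u$ pointwise on $\XX$, then $\sum_x v(x) \log \pazocal{W}(y|x) \leq \sum_x u(x) \log \pazocal{W}(y|x)$ for every $y \in \YY$, so the corresponding exponentials satisfy the same inequality, and summing over $y$ followed by $-\log$ reverses it to give $f(v) \geq f(u)$.

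For the lower bound $f \geq -\log|\YY|$, I would rewrite each summand of the inner sum as $\exp\bigl(\sum_x v(x) \log \pazocal{W}(y|x)\bigr) = \prod_{x \in \XX} \pazocal{W}(y|x)^{v(x)}$ with $0^0 \coloneqq 1$. Since $v(x) \geq 0$ and $\pazocal{W}(y|x) \leq 1$, each factor lies in $[0,1]$, so each term is at most $1$, the sum over $y$ is at most $|\YY|$, and taking $-\log$ yields the uniform bound.

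The nontrivial claim is the lower semi-continuity, which I would prove by writing $f = -\log g$ with $g(v) \coloneqq \sum_{y \in \YY} \prod_{x \in \XX} \pazocal{W}(y|x)^{v(x)}$, and showing that $g$ is upper semi-continuous on $[0,1]^{\XX}$. Each one-variable factor $t \mapsto \pazocal{W}(y|x)^{t}$ is continuous on $[0,1]$ when $\pazocal{W}(y|x) > 0$, while for $\pazocal{W}(y|x) = 0$ it equals $1$ at $t=0$ and $0$ for $t > 0$; consequently $v \mapsto \prod_{x} \pazocal{W}(y|x)^{v(x)}$ is USC on $[0,1]^{\XX}$, because the only possible jumps at the boundary (approaching some $v(x_0)=0$ with $\pazocal{W}(y|x_0)=0$ along $v_n(x_0) > 0$) are downward. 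A finite sum of USC functions is USC, and composing with the continuous nonincreasing map $-\log \colon [0,|\YY|] \to [-\log|\YY|, +\infty]$ (with $-\log 0 \coloneqq +\infty$) then shows $f$ is lower semi-continuous.

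The main obstacle is exactly this USC step, where the boundary discontinuity of $t \mapsto \pazocal{W}(y|x)^{t}$ at $t = 0$ arising from the convention $0^0 = 1$ (equivalently $0 \cdot (-\infty) = 0$) must be tracked carefully; monotonicity and the lower bound are then essentially immediate.
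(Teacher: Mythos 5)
Your proof is correct, and it takes a genuinely different route from the paper for the main nontrivial part, the lower semi-continuity. The paper proves lower semi-continuity indirectly: it first establishes that the restriction of $f$ to the slice $V_{\XX^\ast} = \{v : \supp(v) = \XX^\ast\}$ is continuous for each fixed support set $\XX^\ast$, and then, given $v_k \to v_\infty$, it truncates $v_k$ to $u_k$ supported on $\supp(v_\infty)$, invokes the monotonicity clause to get $f(v_k)\geq f(u_k)$, and passes to the limit along the fixed-support slice. Your argument is more structural: you observe that $g(v) = \sum_y \prod_x \pazocal{W}(y|x)^{v(x)}$ decomposes into one-variable factors each of which is upper semi-continuous (the only discontinuity, a downward jump at $v(x)=0$ when $\pazocal{W}(y|x)=0$, is exactly the USC-compatible one), appeal to the facts that products of non-negative bounded USC functions and finite sums of USC functions remain USC, and then compose with the continuous decreasing map $-\log$ on $[0,|\YY|]$ extended by $-\log 0 = +\infty$. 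This is shorter and cleaner, at the cost of relying on the (standard but unproved here) closure of USC functions under these operations; the paper's route is more self-contained and has the side benefit of reusing the monotonicity lemma it just proved. Your treatment of the other two claims is also fine: your direct lower bound (each summand is at most $1$, hence $g \leq |\YY|$) is marginally more elementary than the paper's appeal to monotonicity with $v_0 \equiv 0$, and the monotonicity argument itself matches the paper's modulo bookkeeping with the convention. One small imprecision, present in the paper's statement as well as implicitly in yours, is that $f$ can take the value $+\infty$ (whenever for every $y$ some $x$ with $v(x)>0$ has $\pazocal{W}(y|x)=0$), so the codomain is really $(-\infty,+\infty]$ rather than $\mathbb{R}$; your handling of $-\log 0 = +\infty$ shows you are aware of this, and it does not affect the validity of any of the three claims.
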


\begin{proof}
    According to Convention~\ref{conv:infty}, $f$ is meant to be defined as
    \bb
        f(v)= -\log \sum_{y\in\YY_{\pazocal{W},v}}\exp\left(\sum_{x\in{\rm supp}(v)}v(x)\log\pazocal{W}(y|x)\right),
    \ee
    where  ${\rm supp}(v)\coloneqq\{x\in\XX\,:\, v(x)>0\}$ and $\YY_{\pazocal{W},v}\coloneqq \{y\in\YY\,:\,\pazocal{W}(y|x)>0\;\forall\, x\in {\rm supp}(v)\}$. Let us suppose that $v(x)\geq u(x)$ for all $x\in\XX$. Then, since $ \log\pazocal{W}(y|x)\leq 0$ for any $x\in\XX$ and $y\in\YY$ and ${\rm supp}(v)\supseteq {\rm supp}(u)$, we have
    \bb\label{eq:ineq_u_v}
        \sum_{x\in{\rm supp}(v)}v(x)\log\pazocal{W}(y|x)\leq \sum_{x\in{\rm supp}(u)}u(x)\log\pazocal{W}(y|x)
    \ee
    for any $y\in\YY_{\pazocal{W},v}\cap \YY_{\pazocal{W},u}$. Using again that ${\rm supp}(v)\supseteq {\rm supp}(u)$, we see that $\YY_{\pazocal{W},v}\subseteq \YY_{\pazocal{W},u}$. By the monotonicity and positivity of the exponential, we get
    \bb
        \sum_{y\in\YY_{\pazocal{W},v}}\exp\left(\sum_{x\in{\rm supp}(v)}v(x)\log\pazocal{W}(y|x)\right)&\leqt{(i)}\sum_{y\in\YY_{\pazocal{W},v}}\exp\left(\sum_{x\in{\rm supp}(v)}u(x)\log\pazocal{W}(y|x)\right)\\
        &\leqt{(ii)} \sum_{y\in\YY_{\pazocal{W},u}}\exp\left(\sum_{x\in{\rm supp}(v)}u(x)\log\pazocal{W}(y|x)\right),
    \ee
    where in (i) we have used~\eqref{eq:ineq_u_v}; in (ii) we have noticed that extending the sum over $\YY_{\pazocal{W},v}$ to a sum over $\YY_{\pazocal{W},u}$ introduces further positive terms.
    By anti-monotonicity of $-\log(\,\cdot\,)$ we conclude that $f(v)\geq f(u)$. Let $v\in[0,1]^\XX$ such that $\|v\|_1=1$. Then 
    \bb
        f(v)=-H(v) -\log \sum_{y\in\YY_{\pazocal{W},v}}\exp\left(\sum_{x\in{\rm supp}(v)}v(x)\log\left(v(x)\pazocal{W}(y|x)\right)\right)=\min_{Q_Y} D(vQ_y\|P^{v}_{\pazocal{W}}),
    \ee
    where we have leveraged the Gibbs variational principle (Lemma~\ref{thm:Gibbs}) in the last identity to introduce the minimization over $Q_Y\in\mathcal{P}(\YY)$ and we have defined $P^{v}_{\pazocal{W}}(x,y)\coloneqq v(x)\pazocal{W}(y|x)$. Then~\eqref{eq:unif_low_b} follows from the positivity of the relative entropy between probability distributions.
    The remaining part to be proved is the lower semicontinuity. For any fixed nonempty subset $\XX^\ast\subseteq\XX$ let us define $V_{\XX^\ast}\coloneqq \{v\in [0,1]^{\XX}\,:\, {\rm supp}(v)=\XX^\ast \}$. Then we claim that the restriction $f|_{V_{\XX^\ast}}:V_{\XX^\ast}\to \mathbb{R}$ is continuous. Indeed, let us first rewrite
    \bb
        f|_{V_{\XX^\ast}}(v)= -\log \sum_{y\in\YY_{\pazocal{W},\XX^\ast}}\exp\left(\sum_{x\in\XX^\ast}v(x)\log\pazocal{W}(y|x)\right),
    \ee
    where we have noticed that the support of $v\in (0,1]^{\XX^\ast}$ is always $\XX^\ast$ and we have introduced $\YY_{\pazocal{W},\XX^\ast}\coloneqq \{y\in\YY\,:\,\pazocal{W}(y|x)>0,\;\forall\, x\in \XX^\ast\}$. By noticing that $\XX$ and $\YY$ are finite, $\YY_{\pazocal{W},\XX^\ast}$ does not depend on $v\in V_{\XX^\ast}$, and $f|_{V_{\XX^\ast}}(v)$ is written as a composition of continuous functions, we conclude that $v\mapsto f|_{V_{\XX^\ast}}(v)$ is a continuous function. Let us consider now a generic sequence $\{v_k\}_{k\in\mathbb{N}}$ in $[0,1]^\XX$ that converges to $v_\infty\in [0,1]^\XX$. Let $\{u_k\}_{k\in\mathbb{N}}$ be the sequence defined as
    \bb
        u_k(x)=\begin{cases}
            v_k(x) & x\in{\rm supp}(v_\infty)\\
            0 & x\notin{\rm supp}(v_\infty)
        \end{cases}
    \ee
    for any $x\in\XX$ and $k\in\mathbb{N}$. It is clear that $u_k\to v_\infty$ as $k\to \infty$. Since $v_k(\,\cdot\,)\geq u_k(\,\cdot\,)$, we have $f(v_k)\geq f(u_k)$ by monotonicity of $f$. Furthermore, by the very definition of the convergence $v_k\to v_\infty$, there exists a $k^\ast\in\mathbb{N}$ such that $v_k(x)>0$ for any $x\in{\rm supp}(v_\infty)$ and any $k\geq k^\ast$. This means that, for $k\geq k^\ast$, ${\rm supp}(u_k)={\rm supp}(v_\infty)$, whence
    \bb
        f(v_k)\geq f(u_k) = f|_{V_{{\rm supp}(v_\infty)}}(u_k)\qquad \forall\, k\geq k^\ast.
    \ee
    By taking the liminf, and leveraging the continuity of $f|_{V_{{\rm supp}(v_\infty)}}$, we get
    \bb
        \liminf_{k\to \infty} f(v_k)\geq \liminf_{k\to\infty} f|_{V_{{\rm supp}(v_\infty)}}(u_k) = f|_{V_{{\rm supp}(v_\infty)}}(v_\infty) = f(v_\infty),
    \ee
    and this concludes the proof.
\end{proof}

We now have all the ingredients available to prove Proposition~\ref{thm:achievability}.

\begin{proof}[Proof of Proposition~\ref{thm:achievability}]
    Due to the bound~\eqref{eq:ineq_lautum} of Proposition~\ref{thm:inequality}, it is sufficient to prove~\eqref{eq:achievability}. We can represent the lower umlaut information $\ell_{k,q}$ as the expectation value
     \bb
        \ell_{k,u_k}(\pazocal{W},P_X)= \mathbb{E}_{X^k\sim P_X^k}\left[-\log \sum_{y\in\YY}\exp\left(\sum_{i=1}^k\frac{1}{k}\log\pazocal{W}(y|X_i)\right)\right]
    \ee
   where $P_X^k(x_1,\dots,x_k)=P_X(x_1)\cdots P_X(x_1)$. It is easy to see that the result of the sum $\sum_{i=1}^k\frac{1}{k}\log\pazocal{W}(y|X_i)$ depends only on the number of occurrences of each symbol $x\in\XX$ in the string $X^n$, namely
    \bb
        \sum_{i=1}^k\frac{1}{k}\log\pazocal{W}(y|X_i)=\sum_{x\in\XX}\frac{N(x|X^k)}{k}\log\pazocal{W}(y|x)
    \ee
    where $N(x|X^k)\coloneqq\sum_{i=1}^k\id_{X_i=x}$ is the number of occurrences of $x$ in $X^k$. Let us introduce
    \bb\label{def:def_Nx}
        \hat N_k^x\coloneqq \frac{N(x|X^k)}{k}=\frac{1}{k}\sum_{i=1}^k\id_{X_i=x}
    \ee
    which, by the weak law of the large numbers, converges in probability to its expectation value $\mathbb{E}\left[\hat N_k^x\right]=P_X(x)$. More precisely, let us fix $0<\epsilon<\min\{P_X(x)\,:\,x\in{\rm supp}(P_X) \}$, and let us call $\mathcal{E}_k^{(\epsilon)}$ the event $\left\{\exists\, x\in\XX:\big|\hat{N}_k^x-P_X(x)\big|>\epsilon \right\}$ and $\big(\mathcal{E}_k^{(\epsilon)}\big)^c$ its complement. Then
    \bb
        \lim_{k\to \infty}\mathbb{P}\left(\mathcal{E}_k^{(\epsilon)}\right)=0
    \ee If $x\notin {\rm supp} (P_X)$, then $\mathbb{P}(\hat N^x_k\neq 0 )=0$. We can therefore rewrite
    \bb
        \ell_{k,u_k}(\pazocal{W},P_X)&= \mathbb{E}_{X^k\sim P_X^k}\left[-\log \sum_{y\in\YY}\exp\left(\sum_{x\in{\rm supp}(P_X)}\hat N_k^x\log\pazocal{W}(y|x)\right)\right]
    \ee
    and lower bound
    \bb\label{eq:quant_est}
        \ell_{k,u_k}(\pazocal{W},P_X)
        &\eqt{(i)}\mathbb{E}_{X^k\sim P_X^k}\left[-\log \sum_{y\in\YY}\exp\left(\sum_{x\in{\rm supp}(P_X)}\hat N_k^x\log\pazocal{W}(y|x)\right)\,\middle|\,\mathcal{E}_k^{(\epsilon)}\,\right]\mathbb{P}\left(\mathcal{E}_k^{(\epsilon)}\right)\\
        &\quad+\mathbb{E}_{X^k\sim P_X^k}\left[-\log \sum_{y\in\YY}\exp\left(\sum_{x\in{\rm supp}(P_X)}\hat N_k^x\log\pazocal{W}(y|x)\right)\,\middle|\,(\mathcal{E}_k^{(\epsilon)})^c\,\right]\mathbb{P}\left((\mathcal{E}_k^{(\epsilon)})^c\right)\\
        &\geqt{(ii)}\mathbb{P}\left(\mathcal{E}_k^{(\epsilon)}\right)\cdot 0+\mathbb{P}\left(\big(\mathcal{E}_k^{(\epsilon)}\big)^c\right)\left(-\log \sum_{y\in\YY}\exp\left(\sum_{x\in{\rm supp}(P_X)}(P_X(x)-\epsilon)\log\pazocal{W}(y|x)\right)\right),
    \ee
    where 
    \begin{itemize}
        \item in (i) we condition on $\mathcal{E}_k^{(\epsilon)}$ and on its complement;
        \item in (ii) we
        introduce two lower bounds: since $v(x)\coloneqq \hat N^x_k$ is a probability distribution on $\XX$ (i.e. $\|v\|_1=1$) we use~\eqref{eq:unif_low_b} of Lemma~\ref{lem:fv} in order to lower bound the expectation value conditioned on $\mathcal{E}_k^{(\epsilon)}$; then, we notice that not only $v(x)$, but also $u(x)\coloneqq P_X(x)-\epsilon$ belongs to $[0,1]^\XX$ since we chose $0<\epsilon<\min\{P_X(x)\,:\,x\in{\rm supp}(P_X) \}$, and $v(x)\geq u(x)$ in $\big(\mathcal{E}_k^{(\epsilon)}\big)^c$, as $\big|\hat{N}_k^x-P_X(x)\big|\leq \epsilon$, hence we can use the monotonicity property of Lemma~\ref{lem:fv}.
    \end{itemize}
    Taking the liminf, we get
    \bb
        \liminf_{k\to\infty}\ell_{k,u_k}(\pazocal{W},P_X)\geq -\log \sum_{y\in\YY}\exp\left(\sum_{x\in{\rm supp}(P_X)}(P_X(x)-\epsilon)\log\pazocal{W}(y|x)\right);
    \ee
    by arbitrariness of $0<\epsilon<\min\{P_X(x)\,:\,x\in{\rm supp}(P_X) \}$, we let $\epsilon\to 0^+$:
    \bb
        \liminf_{k\to\infty}\ell_{k,u_k}(\pazocal{W},P_X)&\geq \liminf_{\epsilon\to 0^+} \left(-\log \sum_{y\in\YY}\exp\left(\sum_{x\in{\rm supp}(P_X)}(P_X(x)-\epsilon)\log\pazocal{W}(y|x)\right)\right)\\
        &\geqt{(iii)} -\log \sum_{y\in\YY}\exp\left(\sum_{x\in\XX}P_X(x)\log\pazocal{W}(y|x)\right)\\
        &=U(X;Y),
    \ee
    where (iii) holds because of the lower semi-continuity property proved in Lemma~\ref{lem:fv}.
    Combining this result with Proposition~\ref{thm:inequality}, we conclude that
    \bb
        U(X;Y)\geq \liminf_{k\to\infty}\ell_{k,u_k}(\pazocal{W},P_X) \geq U(X;Y),
    \ee
    and this completes the proof.
\end{proof}

Having established the achievability of the umlaut information via $\errL{\pazocal{W}}$ in the large list limit (Theorem~\ref{thm:interpr_upper}), we can further upper bound the gap for finite list sizes $L$ as $O\big(\sqrt{\log L/L}\big)$.

\begin{prop}[(Quantitative estimate of the gap)]
\label{thm:quant_est} 
    Let $\pazocal{W}$ be a discrete memoryless channel from $\XX$ to $\YY$. Let $\bar{P}_X\in\mathcal{P}(\XX)$ be the probability distribution achieving the maximum in the definition of channel umlaut information, set $\bar{p}_{\min}\coloneqq\min_{x\in{\rm supp}\left(\bar{P}_X\right)}\bar{P}_X(x)$, define
    $\pazocal{W}_{\min}\coloneqq \min\{\pazocal{W}(y|x): x\in\XX,\, y\in\YY, \, \pazocal{W}(y|x)\neq 0\}$ and pick some $\epsilon\in(0,1)$.
Then, we have
    \bb\label{eq:quant_est2}
    0\leq U(\pazocal{W})-\errL{\pazocal{W}}\leq |\XX|\exp\left(-\frac{\epsilon^2}{2}\bar p_{\min}(L+1)\right)-\epsilon  \log\pazocal{W}_{\min}.
\ee
    In particular, calling
\bb
    \epsilon(L)=\sqrt{\frac{1}{\bar p_{\min}}\frac{\log(L+1)}{L+1}},
\ee
we can set $\epsilon=\min\{\epsilon(L),1/2\}$, getting 
\bb
|U(\pazocal{W})-\errL{\pazocal{W}}|=O\left(\left(\frac{\log L}{L}\right)^{1/2}\right)
\ee
for $L\to \infty$.
\end{prop}

\begin{proof}[Proof of Proposition~\ref{thm:quant_est}]
We proceed similarly to the proof of Theorem~\ref{thm:interpr_upper}. After fixing $P_X\in\mathcal{P}(\XX)$, let $\epsilon\in(0,1)$ and, for every $x\in\mathrm{supp}(P_X),$ let $\mathcal{F}_{k,x}^{(\epsilon)}$ be the event $\hat N_k^x\leq P_X(x)(1-\epsilon)$. Since $k\hat N^k_x$, by the very definition of $\hat N_k^x$, has a binomial distribution (see eq.~\eqref{def:def_Nx}), by the Chernoff bound on the lower tail, we have 
\bb
    \mathbb{P}\left(\mathcal{F}_{k,x}^{(\epsilon)}\right)=\mathbb{P}\left(\hat N_k^x\leq P_X(x)(1-\epsilon)\right)
    \leq \exp\left(-\frac{\epsilon^2}{2}P_X(x)k\right).
\ee
Therefore, the event
\bb
    \mathcal{F}_{k}^{(\epsilon)}=\left\{\exists x\in\mathrm{supp}(P_X)\,:\, \hat N_k^x\leq P_X(x)(1-\epsilon)\right\}=\bigcup_{x\in\mathrm{supp}(P_X)}\mathcal{F}_{k,x}^{(\epsilon)}
\ee
has probability at most
\bb
    \mathbb{P}\left(\mathcal{F}_{k}^{(\epsilon)}\right)\leq \sum_{x\in\mathrm{supp}(P_X)} \exp\left(-\frac{\epsilon^2}{2}P_X(x)k\right)\leq |\XX|\exp\left(-\frac{\epsilon^2}{2}p_{\min}k\right),
\ee
where $p_{\min}=\min_{x\in{\rm supp}(P_X)}P_X(x)$. We call $p_\epsilon\coloneqq \mathbb{P}\left(\mathcal{F}_k^{(\epsilon)}\right)$ to unburden the notation. 
Similarly to~\eqref{eq:quant_est}:
\bb
        \ell_{k,u_k}(\pazocal{W},P_X)&= \mathbb{E}_{X^k\sim P_X^k}\left[-\log \sum_{y\in\YY}\exp\left(\sum_{x\in{\rm supp}(P_X)}\hat N_k^x\log\pazocal{W}(y|x)\right)\right]\\
        &\eqt{(i)}\mathbb{E}_{X^k\sim P_X^k}\left[-\log \sum_{y\in\YY}\exp\left(\sum_{x\in{\rm supp}(P_X)}\hat N_k^x\log\pazocal{W}(y|x)\right)\,\middle|\,\mathcal{E}_k^{(\epsilon)}\,\right]p_\epsilon\\
        &\quad+\mathbb{E}_{X^k\sim P_X^k}\left[-\log \sum_{y\in\YY}\exp\left(\sum_{x\in{\rm supp}(P_X)}\hat N_k^x\log\pazocal{W}(y|x)\right)\,\middle|\,(\mathcal{E}_k^{(\epsilon)})^c\,\right](1-p_\epsilon)\\
        &\geqt{(ii)}\left(-\log \sum_{y\in\YY}\exp\left(\sum_{x\in{\rm supp}(P_X)}P_X(x)(1-\epsilon)\log\pazocal{W}(y|x)\right)\right)(1-p_\epsilon),
    \ee
    where 
    \begin{itemize}
        \item in (i) we condition on $\mathcal{F}_k^{(\epsilon)}$ and on its complement;
        \item in (ii) we make two lower bounds: since $v(x)\coloneqq \hat N^x_k$ is a probability distribution on $\XX$ (i.e. $\|v\|_1=1$) we use~\eqref{eq:unif_low_b} of Lemma~\ref{lem:fv} in order to lower bound with zero the expectation value conditioned on $\mathcal{F}_k^{(\epsilon)}$; then, we notice that $[0,1]^\XX$ $v(x)\geq u(x)\coloneqq P_X(x)(1-\epsilon)$ in $\big(\mathcal{F}_k^{(\epsilon)}\big)^c$, hence we can use the monotonicity property of Lemma~\ref{lem:fv}.
    \end{itemize}
Now, we can go further:
\bb\label{eq:first_estimate}
    \ell_{k,u_k}&(\pazocal{W},P_X)
    \\
    &\geq (1-p_\epsilon)\left(-\log \sum_{y\in\YY_{\pazocal{W},P_X}}\exp\left(\sum_{x\in{\rm supp}(P_X)}P_X(x)(1-\epsilon)\log\pazocal{W}(y|x)\right)\right)\\
    &=
    (1-p_\epsilon)\left(-\log \sum_{y\in\YY_{\pazocal{W},P_X}}\exp\left(-\epsilon\sum_{x\in {\rm supp}(P_X)}P_X(x)\log\pazocal{W}(y|x)\right)\exp\left(\sum_{x\in\XX}P_X(x)\log\pazocal{W}(y|x)\right)\right)\\
    &\geq
   \epsilon(1-p_\epsilon)\min_{y\in\YY_{\pazocal{W},P_X}}\sum_{x\in {\rm supp}(P_X)}P_X(x)\log\pazocal{W}(y|x)\\
    &\qquad+(1-p_\epsilon)\left(-\log \sum_{y\in\YY_{\pazocal{W},P_X}}\exp\left(\sum_{x\in\XX}P_X(x)\log\pazocal{W}(y|x)\right)\right)\\
    &\geq
    \epsilon\min_{y\in\YY_{\pazocal{W},P_X}}\sum_{x\in \XX}P_X(x)\log\pazocal{W}(y|x)+(1-p_\epsilon)U(X;Y),
\ee
where $X$ and $Y$ are random variables taking values in $\XX$ and $\YY$, respectively, with joint probability distribution $P_{XY}(x,y)=\pazocal{W}(y|x)P_X(x)$. Let $\bar{P}_{X}$ be the optimiser in the definition of channel umlaut information, namely $U(\pazocal{W})=U(\bar X;\bar Y)$, with $\mathbb{P}(\bar X=x, \bar Y=y)=\pazocal{W}(y|x)\bar{P}_X(x)$. Then, by~\eqref{eq:first_estimate},
\bb
    \max_{P_X}\ell_{k,u_k}(\pazocal{W},P_X)-U(\bar X;\bar Y)\geq \epsilon\min_{y\in\YY_{\pazocal{W},\bar{P}_X}}\sum_{x\in {\rm supp}(\bar{P}_X)}\bar P_X(x)\log\pazocal{W}(y|x)-p_\epsilon U(\bar X;\bar Y),
\ee
whence, setting $k=L+1$, we get
\bb\label{eq:bound_U_l}
    U(\pazocal{W})-\errL{\pazocal{W}}\leq p_\epsilon U(\pazocal{W})+\epsilon K_{\pazocal{W},\bar{P}_X},
\ee
where
\bb
    K_{\pazocal{W},\bar{P}_X}&\coloneqq \max_{y\in\YY_{\pazocal{W},\bar{P}_X}}\sum_{x\in {\rm supp}(\bar{P}_X)}-\bar P_X(x)\log\pazocal{W}(y|x)
    \leq -\log \pazocal{W}_{\min},
    \ee
    with $\pazocal{W}_{\min}\coloneqq \min\{\pazocal{W}(y|x): x\in\XX,\, y\in\YY, \, \pazocal{W}(y|x)\neq 0\}$.
Using the estimate on $p_\epsilon$ and calling $\bar{p}_{\min}\coloneqq \min_{x\in{\rm supp}(\bar{P}_X)}\bar{P}_X(x)$,~\eqref{eq:bound_U_l} can be upper bounded as 
\bb\label{eq:bound_U_l2}
    U(\pazocal{W})-\errL{\pazocal{W}}\leq |\XX|\exp\left(-\frac{\epsilon^2}{2}\bar p_{\min}(L+1)\right)-\epsilon\log\pazocal{W}_{\min},
\ee
and this concludes the proof.
\end{proof}


\subsection{An SDP bound on the unassisted zero-rate error exponent}

Let us rewrite~\eqref{eq:plain} as
\bb
    \errpl{\pazocal{W}}&=-\min_{P_X}\sum_{x,x'\in\XX}P_X(x)P_X(x')\log\sum_{y\in\YY}\sqrt{\pazocal{W}(y|x)\pazocal{W}(y|x')}\\
    &=\max_{P_X}\sum_{x,x'\in\XX}P_X(x)P_X(x')A_{xx'}=\max_{p}p^\intercal Ap=\max_{p}\Tr[B_pA]
\ee
where we have introduced $A_{xx'}\coloneqq -\log\sum_{y\in\YY}\sqrt{\pazocal{W}(y|x)\pazocal{W}(y|x')}$, $p$ as the vector of the probability distribution $P_X$ and $B_p\coloneqq pp^\intercal$.
$B$ belongs to the set of \textit{completely positive matrices} $\cp_{|\XX|}$, defined as
\bb
    \cp_d\coloneqq\left\{B\in\mathbb{R}^{d\times d}\,:\, B=\sum_{i=1}^kv_iv_i^\intercal,\,v_i\in\mathbb{R}^d_+,\, k\in\mathbb{N}\right\},
\ee
where $\mathbb{R}^d_+$ is the cone of vectors in $\mathbb{R}^d$ with non-negative entries. Let $u_d$ be the uniform vector in $\mathbb{R}^d$ given by $u_d=(1,\dots,1)^\intercal$. It holds that $u_{|\XX|}^\intercal B_p u_{|\XX|}=1$. Conversely, let us suppose that $B\in cp_{|\XX|}$ and $u_{|\XX|}^\intercal B u_{|\XX|}=1$. To unburden the notation, from now on the dimension of vectors and matrices will be implied. On the one hand, for some $k\in\mathbb{N}$,
\bb
    1 = u^\intercal B u=\sum_{i=1}^k (u^\intercal v_i)^2 \eqt{(i)} \sum_{i=1}^kq_i,
\ee
where in (i) we have introduced $q_i\coloneqq (u^\intercal v_i)^2$. On the other hand,
\bb
    B = \sum_{i=1}^k q_i\frac{v_i}{u^\intercal v_i}\frac{v_i^\intercal}{u^\intercal v_i} = \sum_{i=1}^k q_ip_ip_i^\intercal,
\ee
where $p_i\in\mathbb{R}^{|\XX|}$ are probability vectors,
whence
\bb
    \Tr[BA]=\sum_{i=1}^k q_i p_i^\intercal A p_i\leq  \max_p p^\intercal A p = \max_p \Tr[B_p A].
\ee
We can therefore rewrite
\bb
    \errpl{\pazocal{W}}=\max_{\substack{B\in \cp\\u^\intercal B u=1}}\Tr[AB].
\ee
Let $\mathrm{NN}_d$ be the cone of $d\times d$ matrices with non-negative entries and let $\mathrm{PSD}_d$ be the cone of $d\times d$ positive semi-definite matrices. It is easy to see that
\bb
    \cp_d\subseteq \mathrm{NN}_d \cap \mathrm{PSD}_d,
\ee
while it is non-trivial that
\bb
    \cp_d = \mathrm{NN}_d \cap \mathrm{PSD}_d,
\ee
if and only if $d\leq 4$~\cite[Theorem 2.4, Example 2.7]{compl_pos}.
We have therefore proved the following statement.
\begin{prop}[(An SDP for the plain zero-rate error exponent)] Let $\pazocal{W}$ be a discrete memoryless channel from $\XX$ to $\YY$. Then, the plain zero-rate error exponent can be written as
\bb
    \errpl{\pazocal{W}}=\max_{\substack{B\in \cp\\u^\intercal B u=1}}\Tr[A^{(\pazocal{W})}B],
\ee
    where $A^{(\pazocal{W})}$ is the $|\XX|\times|\XX|$ matrix
\bb
    A_{xx'}^{(\pazocal{W})}\coloneqq -\log\sum_{y\in\YY}\sqrt{\pazocal{W}(y|x)\pazocal{W}(y|x')}.
\ee
Furthermore, the plain zero-rate error exponent is upper bounded by the following SDP:
\bb
    \errpl{\pazocal{W}}\leq \max_{\substack{B\in \mathrm{PSD}\cap \mathrm{NN}\\u^\intercal B u=1}}\Tr[A^{(\pazocal{W})}B]
\ee
and the inequality becomes an equality for $|\XX|\leq 4$.
\end{prop}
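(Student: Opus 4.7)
The plan is to reformulate $\errpl{\pazocal{W}}$ from~\eqref{eq:plain} as a linear functional maximised over a specific convex cone, then relax to a tractable semidefinite cone. Writing $A^{(\pazocal{W})}_{xx'} := -\log\sum_{y\in\YY}\sqrt{\pazocal{W}(y|x)\pazocal{W}(y|x')}$ and identifying $P_X$ with a probability vector $p\in\mathbb{R}^{|\XX|}_+$, expression~\eqref{eq:plain} reads $\errpl{\pazocal{W}} = \max_p p^\intercal A^{(\pazocal{W})} p = \max_p \Tr[(pp^\intercal)\,A^{(\pazocal{W})}]$. Any rank-one matrix $B_p := pp^\intercal$ with $p$ a probability vector is completely positive by definition and satisfies $u^\intercal B_p u = (u^\intercal p)^2 = 1$, so this directly gives ``$\leq$'' in the first claimed equality.

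For the matching ``$\geq$'' direction, I would show that every feasible $B\in\cp_{|\XX|}$ with $u^\intercal B u = 1$ is a convex combination of such rank-one probability matrices. Starting from any CP decomposition $B = \sum_i v_i v_i^\intercal$ with $v_i\in\mathbb{R}^{|\XX|}_+$, I would discard zero summands, set $q_i := (u^\intercal v_i)^2 > 0$ and $p_i := v_i/(u^\intercal v_i)$: each $p_i$ is then a probability vector, $\sum_i q_i = u^\intercal B u = 1$, and linearity of the trace yields $\Tr[A^{(\pazocal{W})} B] = \sum_i q_i\, p_i^\intercal A^{(\pazocal{W})} p_i \leq \max_p p^\intercal A^{(\pazocal{W})} p = \errpl{\pazocal{W}}$. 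Maximising over $B$ closes the equality.

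For the SDP relaxation, the elementary inclusion $\cp_d \subseteq \mathrm{PSD}_d \cap \mathrm{NN}_d$, which follows because each outer product $v_iv_i^\intercal$ with $v_i\geq 0$ is both positive semidefinite and entrywise nonnegative and since both cones are closed under addition, immediately gives the stated upper bound by enlarging the feasible set. Tightness for $|\XX|\leq 4$ then follows from the classical characterisation $\cp_d = \mathrm{PSD}_d \cap \mathrm{NN}_d \iff d\leq 4$ of~\cite[Theorem~2.4, Example~2.7]{compl_pos}. There is essentially no hard step here: the content is a reformulation of a quadratic maximisation over the probability simplex as a linear maximisation over the CP cone via the extreme-ray structure $\{pp^\intercal\}$, followed by citing the known boundary between CP and doubly nonnegative matrices; the only minor care needed is handling the edge case $u^\intercal v_i = 0$ in the decomposition, which forces $v_i = 0$ and is therefore harmless.
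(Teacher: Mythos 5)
Your proposal is correct and follows essentially the same route as the paper: rewriting $\errpl{\pazocal{W}}$ as a quadratic maximisation over the probability simplex, recasting it as a linear functional over the CP cone with constraint $u^\intercal Bu=1$, showing every feasible $B$ decomposes as a convex combination of rank-one probability outer products, and then relaxing to $\mathrm{PSD}\cap\mathrm{NN}$ with tightness for $|\XX|\leq 4$ via the cited characterisation. The edge-case remark about $u^\intercal v_i=0$ is a minor extra nicety the paper leaves implicit.
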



\subsection{Examples}

In this subsection, we will compare the performance of some channel in the regime of zero rate with and without non-signalling assistance.

\bigskip

\textbf{Binary symmetric channel}. Let $q\in (0,1)$, $\XX=\YY=\{0,1\}$ and let
\bb
    \pazocal{W}_q(y|x)=\begin{cases}
        1-q & y=x\\
        q & y\neq x
    \end{cases}
\ee
be the binary symmetric channel. Then, by Proposition~\ref{prop:lautum_channel_formula}, we have
\bb
    U(\pazocal{W}_q)=-\log\min_{0\leq p\leq 1}\Big(\exp\left({p\log(1-q)+(1-p)\log q}\right)+\exp\left({p\log q+(1-p)\log (1-q)}\right)\Big),
\ee
which can be computed by solving
\bb
    \exp\big({p\log(1-q)+(1-p)\log q}\big)\log\frac{1-q}{q}+\exp\big(p\log q+(1-p)\log (1-q)\big)\log\frac{q}{1-q}=0,
\ee
i.e.
\bb
    p\log(1-q)+(1-p)\log q=p\log q+(1-p)\log (1-q),
\ee
which yields $p=1/2$ for every $q\in(0,1)$, whence
\bb\label{eq:BSC}
    U(\pazocal{W}_q)=-\log\left(2\exp\left(\frac{1}{2}\log q(1-q)\right)\right)=-\log \sqrt{q(1-q)}-\log 2.
\ee
Let us compare this result with the plain zero rate error exponent for the binary symmetric channel provided by~\eqref{eq:plain}.
\bb
    \errpl{\pazocal{W}_q}&=-\min_{0\leq p\leq 1}\left(p^2\log(1-q+q)+2p(1-p)\log(2\sqrt{q(1-q)})+(1-p)\log(q+1-q)\right)\\
    &\eqt{(i)}-2\min_{0\leq p\leq 1}p(1-p)\log(2\sqrt{q(1-q)})\\
    &=-\frac{1}{2}\left(\log \sqrt{q(1-q)}+\log 2\right),
\ee
where in (i) we have used that $-cp(1-p)$ with $c>0$ is minimised when $p=1/2$.
Hence,
\bb
    \errns{\pazocal{W}_q} = U(\pazocal{W}_q) = 2\errpl{\pazocal{W}_q}.
\ee

\bigskip

\textbf{Binary erasure channel}. Let $q\in (0,1)$, $\XX=\{0,1\}$ and $\YY=\{0,1,\epsilon\}$; let
\bb
    \pazocal{W}_q(y|x)=\begin{cases}
        1-q & y=x\\
        q & y= \epsilon\\
        0 &\text{otherwise}
    \end{cases}
\ee
be the binary erasure channel. Again by Proposition~\ref{prop:lautum_channel_formula}, using~\eqref{eq:alternative_W_0} we have
\bb
    \exp\left({-U(\pazocal{W}_q)}\right)&=\min_{0\leq p\leq 1}q^{p}q^{1-p}=q
\ee
whence
\bb
    U(\pazocal{W}_q)=-\log q
\ee
Comparing this umlaut information with the plain zero rate error exponent given by~\eqref{eq:plain} we get
\bb
    \errpl{\pazocal{W}_q}=-\min_{0\leq p\leq 1}2p(1-p)\log q=-\frac{1}{2}\log q,
\ee
so
\bb
    \errns{\pazocal{W}_q} = U(\pazocal{W}_q) = 2\errpl{\pazocal{W}_q}.
\ee

\bigskip

\textbf{The Gaussian channel.} Let $X$ be a generic random variable taking values in $\XX=\mathbb{R}^n$ having a fixed covariance matrix $C\in\mathbb{R}^{n\times n}$, and let $Y$ be the random output in $\YY=\mathbb{R}^k$ given by
\bb
Y=HX+N
\ee
where $H$ is a $k \times n$ deterministic matrix and $N$ is a random Gaussian vector in $\YY$ independent of $X$, with mean $m\in\mathbb{R}^k$ and covariance matrix $V\in\mathbb{R}^{k\times k}$. In this case, the conditional probability characterising the Gaussian channel $\pazocal{W}_{H,m,V}$ is
\bb
    \pazocal{W}_{H,m,V}(y|x) = \mathcal{G}(m+Hx,V)(y)\qquad\forall \,x\in\mathbb{R}^n,\,y\in\mathbb{R}^k.
\ee
Then, by a generalization of Proposition~\ref{prop:lautum_channel_formula} to the continuous variable setting, when $X$ is constrained to have covariance matrix $C$ the umlaut information is given by
\bb
        U(\pazocal{W}_{H,m,V})&=-\log\min_{p_X}\int_{\mathbb{R}^k}dy\,\exp\left(\int_{\mathbb{R}^n}dx\,p_X(x)\log\mathcal{G}(m+Hx,V)(y)\right)\\
        &=-\log\min_{p_X}\int_{\mathbb{R}^k}dy\,\exp\left(-\frac{1}{2}\int_{\mathbb{R}^n}dx\,p_X(x)(y-m-Hx)^\intercal V^{-1}(y-m-Hx) \right)\\
        &\qquad +\frac{1}{2}\log \left((2\pi)^n\det V\right).
\ee
We need to evaulate the following integral
\bb
    \int_{\mathbb{R}^n}dx&\,p_X(x)(y-m-Hx)^\intercal V^{-1}(y-m-Hx)\\
    &=\int_{\mathbb{R}^n}dx\,p_X(x)\Tr[V^{-1}(y-m-Hx)(y-m-Hx)^\intercal]\\
    &=(y-m-H\mu)^\intercal V^{-1}(y-m-H\mu)+\Tr[V^{-1}HCH^\intercal]
\ee
where $\mu \coloneqq \mathbb{E}_{p_{X}}[X]\in\mathbb{R}^n$ is the mean of $X$ and $C$ is the covariance of $X$, as defined above. Therefore
\bb
    U(\pazocal{W}_{H,m,V})
    &=-\log\min_{\mu}\frac{1}{\sqrt{(2\pi)^n\det V}}\int_{\mathbb{R}^k}dy\,\exp\left(-\frac{1}{2}(y-m-H\mu)^\intercal V^{-1}(y-m-H\mu)\right)\\
    &\qquad +\frac{1}{2}\Tr[CH^\intercal V^{-1}H]\\
    &=\frac{1}{2}\Tr[CH^\intercal V^{-1}H]\, ,
\ee
provided that $k\leq n$ and $H$ is of rank $k$. None of the previous expressions turned out to depend on the other degrees of freedom of $p_X$ apart from the covariance matrix $C$, i.e. the minimum appearing in the formula for the differential umlaut information for the Gaussian channel is achieved by any distribution with covariance matrix $C$ (and finite second moments).


\subsection{Comparison with channel lautum information}

Let $\pazocal{W}$ be a discrete memoryless channel $\pazocal{W}$ from $\XX$ to $\YY$ and let $X$ and $Y$ be random variables taking values in $\XX$ and $\YY$ with joint distribution
    \bb
        P_{XY}(x,y)=\pazocal{W}(y|x)P_{X}(x).
    \ee
Palomar and Verd\'{u} \cite{Lautum_08} considered the lautum information of the joint probability distribution of the input and the output of $\pazocal{W}$. They also study its maximization over the probability distributions of the input $X$, which we will denote by convenience
\bb
    L(\pazocal{W}) &\coloneqq \max_{P_X}L(X\!:\!Y)=    \max_{P_X}D(P_{X}P_Y\|P_{XY}),
\ee
where $P_Y$ is the marginal of $P_{XY}$ on $\YY$. As usual, when we consider $n$ uses of the channel $\pazocal{W}$, the joint probability distribution of $X^n$ and $Y^n$ is given by
\bb\label{eq:many_uses}
        P_{X^nY^n}(x_1,\dots,x_n,y_1,\dots,y_n)=P_{X}(x_1,\dots,x_n)\prod_{i=1}^n\pazocal{W}(y_i|x_i).
    \ee
for every $x_1,\dots,x_n\in\XX$ and $y_1,\dots,y_n\in\YY$

We are going to briefly recall a few results which provide some insights about the difference between $U(\pazocal{W})$ and $L(\pazocal{W})$.

\begin{prop}[({\cite[Theorem~3]{Lautum_08}})]\label{prop:ineqPV}
Let $\pazocal{W}$ be a discrete memoryless channel $\pazocal{W}$ from $\XX$ to $\YY$ and let $X^n$ and $Y^n$ as in~\eqref{eq:many_uses}. Then
    \bb
        L(X^n\!:\!Y^n)\geq \sum_{i=1}^nL(X_i\!:\!Y_i)
    \ee
    with equality if and only if $(Y_1,\dots, Y_n)$ are independent.
\end{prop}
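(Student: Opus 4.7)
The plan is to decompose $L(X^n:Y^n)$ algebraically as the sum $\sum_{i=1}^n L(X_i:Y_i)$ plus a residual non-negative term that measures the dependence among the $Y_i$'s, namely $D(P_{Y^n}\|\prod_i P_{Y_i})$. This would immediately prove both the inequality and the equality condition in one stroke.

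First I would exploit the memoryless structure in~\eqref{eq:many_uses}, which gives $P_{Y^n|X^n}(y^n|x^n)=\prod_{i=1}^n \pazocal{W}(y_i|x_i)$. Plugging this into the definition $L(X^n:Y^n)=D(P_{X^n}P_{Y^n}\|P_{X^nY^n})$ makes the $P_{X^n}$ factor cancel inside the logarithm, so
\[
L(X^n:Y^n)=\sum_{x^n,y^n} P_{X^n}(x^n)\,P_{Y^n}(y^n)\log\frac{P_{Y^n}(y^n)}{\prod_{i=1}^n \pazocal{W}(y_i|x_i)}.
\]
Next I would add and subtract $\log\prod_{i=1}^n P_{Y_i}(y_i)$ inside the logarithm to split the right-hand side into two pieces. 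The first piece is
\[
\sum_{y^n} P_{Y^n}(y^n)\log\frac{P_{Y^n}(y^n)}{\prod_{i=1}^n P_{Y_i}(y_i)}=D\Bigl(P_{Y^n}\,\Big\|\,\textstyle\prod_{i=1}^n P_{Y_i}\Bigr),
\]
which is non-negative by positive definiteness of the relative entropy and vanishes if and only if $(Y_1,\dots,Y_n)$ are independent. In the second piece, the $i$-th summand depends only on $x_i$ and $y_i$, so marginalising $P_{X^n}$ and $P_{Y^n}$ down to their $i$-th marginals reduces it to
\[
\sum_{i=1}^n \sum_{x_i,y_i} P_{X_i}(x_i)\,P_{Y_i}(y_i)\log\frac{P_{Y_i}(y_i)}{\pazocal{W}(y_i|x_i)} = \sum_{i=1}^n L(X_i:Y_i),
\]
where I have used $P_{X_iY_i}(x_i,y_i)=P_{X_i}(x_i)\pazocal{W}(y_i|x_i)$ once again.

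Combining the two pieces yields the chain-rule--type identity $L(X^n:Y^n)=\sum_{i=1}^n L(X_i:Y_i)+D(P_{Y^n}\|\prod_{i=1}^n P_{Y_i})$, from which both the inequality and the stated equality characterisation are immediate. I do not foresee any serious obstacle in this plan: the whole argument is a single algebraic manipulation of the logarithm followed by one invocation of positive definiteness. The only real subtlety is to keep the marginalisations clean when collapsing $P_{X^n}P_{Y^n}$ into products of $P_{X_i}P_{Y_i}$, which works precisely because each term in the second piece depends on a single coordinate $i$ only.
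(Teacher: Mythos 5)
Your decomposition is correct: starting from $L(X^n\!:\!Y^n)=D(P_{X^n}P_{Y^n}\|P_{X^nY^n})$, the memoryless structure $P_{X^nY^n}=P_{X^n}\prod_i\pazocal{W}(y_i|x_i)$ cancels the $P_{X^n}$ factor inside the logarithm, and inserting $\prod_i P_{Y_i}(y_i)$ yields the exact chain-rule identity
\begin{equation*}
L(X^n\!:\!Y^n)=\sum_{i=1}^n L(X_i\!:\!Y_i)+D\Bigl(P_{Y^n}\,\Big\|\,\textstyle\prod_{i=1}^n P_{Y_i}\Bigr),
\end{equation*}
where the second piece reduces to $\sum_i L(X_i\!:\!Y_i)$ because $P_{X_iY_i}(x_i,y_i)=P_{X_i}(x_i)\pazocal{W}(y_i|x_i)$ holds for each marginal even when $P_{X^n}$ is not a product. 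The paper itself gives no proof and simply cites Theorem~3 of Palomar and Verd\'u, whose argument is precisely this residual-relative-entropy decomposition; so your approach matches the standard one, and the equality characterisation follows immediately from positive definiteness of $D(\cdot\|\cdot)$.
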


On the contrary, since in the proof of~\eqref{eq:looks_additive} in Proposition~\ref{prop:lautum_channel_formula} the maximization over $P_X$ does not play any specific role --- namely, it is immediate to see that the same identity holds without the maximisation --- we can proceed as in Corollary~\ref{cor:add_chan} to immediately have the following statement.

\begin{prop} Let $\pazocal{W}$ be a discrete memoryless channel $\pazocal{W}$ from $\XX$ to $\YY$ and let $X^n$ and $Y^n$ as in~\eqref{eq:many_uses}. Then
    \bb
        U(X^n ; Y^n)= \sum_{i=1}^nU(X_i ; Y_i).
    \ee
\end{prop}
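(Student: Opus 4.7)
The plan is to mirror the derivation of Proposition~\ref{prop:lautum_channel_formula} and Corollary~\ref{cor:add_chan}, now applied to a general (possibly non-product) input distribution $P_{X^n}$. The key observation is that the maximisation over $P_X$ in the channel setting was not actually used in obtaining the closed formula~\eqref{eq:looks_additive}; it is the product structure of the channel $\pazocal{W}^{\times n}$ (not of the input) that drives additivity. Therefore, I would apply the closed-form expression from Proposition~\ref{thm:closed_lautum} directly to $P_{X^nY^n}$ as defined in~\eqref{eq:many_uses} and exploit the fact that $\log \pazocal{W}^{\times n}(y^n|x^n) = \sum_{i=1}^n \log \pazocal{W}(y_i|x_i)$ factorises across tensor factors.

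Concretely, I would first write, using Proposition~\ref{thm:closed_lautum},
\begin{equation*}
U(X^n;Y^n) = -H(P_{X^n}) - \log \sum_{y^n \in \YY^n} \exp\!\left(\sum_{x^n \in \XX^n} P_{X^n}(x^n) \log P_{X^nY^n}(x^n,y^n)\right).
\end{equation*}
Substituting $P_{X^nY^n}(x^n,y^n) = P_{X^n}(x^n) \prod_{i=1}^n \pazocal{W}(y_i|x_i)$ and using linearity of the logarithm, the inner sum becomes $-H(P_{X^n}) + \sum_{i=1}^n \sum_{x_i \in \XX} P_{X_i}(x_i) \log \pazocal{W}(y_i|x_i)$, where the marginals $P_{X_i}$ emerge automatically from summing over the other coordinates. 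This step hinges on the elementary identity $\sum_{x^n} P_{X^n}(x^n) \log \pazocal{W}(y_i|x_i) = \sum_{x_i} P_{X_i}(x_i) \log \pazocal{W}(y_i|x_i)$.

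Second, I would pull the constant $-H(P_{X^n})$ out of the exponential, cancelling it against the prefactor, and then observe that the remaining exponential factorises as a product over $i$, so that $\sum_{y^n} \prod_{i} (\,\cdot\,)_i = \prod_{i} \sum_{y_i} (\,\cdot\,)_i$. Taking the logarithm turns this product into the sum
\begin{equation*}
U(X^n;Y^n) = - \sum_{i=1}^n \log \sum_{y_i \in \YY} \exp\!\left(\sum_{x_i \in \XX} P_{X_i}(x_i) \log \pazocal{W}(y_i|x_i)\right).
\end{equation*}
Finally, I would recognise each summand as $U(X_i;Y_i)$ by running the same closed-form computation in reverse on the marginal distribution $P_{X_iY_i}(x_i,y_i) = P_{X_i}(x_i) \pazocal{W}(y_i|x_i)$, where once again the $-H(P_{X_i})$ and $+H(P_{X_i})$ terms cancel.

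There is no real obstacle here: the cancellation of the input entropy is the only subtle point, and it works precisely because the channel's contribution to $\log P_{X^nY^n}$ is additive across tensor factors while the $\log P_{X^n}$ contribution contributes only the constant $-H(P_{X^n})$ to the Gibbs exponent. The mild care needed is to apply Convention~\ref{conv:infty} consistently so that input symbols $x_i$ outside $\mathrm{supp}(P_{X_i})$ and output symbols $y_i$ for which some $\pazocal{W}(y_i|x_i)=0$ are handled correctly; this is automatic because the marginalisation preserves supports.
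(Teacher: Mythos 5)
Your argument is correct and is precisely the one the paper has in mind: the paper's (terse) proof observes that~\eqref{eq:looks_additive} holds without the maximisation over $P_X$, i.e.\ $U(X;Y) = -\log\sum_{y}\exp(\sum_{x} P_X(x)\log\pazocal{W}(y|x))$ whenever $P_{XY}=P_X\pazocal{W}$, and then factorises the exponential exactly as in Corollary~\ref{cor:add_chan}, with the marginalisation identity $\sum_{x^n} P_{X^n}(x^n)\log\pazocal{W}(y_i|x_i)=\sum_{x_i}P_{X_i}(x_i)\log\pazocal{W}(y_i|x_i)$ doing the work. You have supplied exactly the details the paper leaves implicit (the cancellation of $H(P_{X^n})$, the identification of the $(X_i,Y_i)$ marginal as $P_{X_i}(x_i)\pazocal{W}(y_i|x_i)$), so no substantive difference.
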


Let
\bb
    L^{\infty}(\pazocal{W}) &\coloneqq \liminf_{n\to\infty}\frac{1}{n}L(\pazocal{W}^{\times n}).
\ee
Proposition~\ref{prop:ineqPV} implies that $L^\infty(\pazocal{W})\geq L(\pazocal{W})$, as
\bb
    \max_{P_{X^n}}L(X^n:Y^n)\geq \max_{P_{X^n}}\sum_{i=1}^nL(X_i:Y_i) \geq \max_{P_{X}^n}\sum_{i=1}^nL(X_i:Y_i)=n \max_{P_{X_1}} L(X_1:Y_1),
\ee
where in the second inequality we have considered the ansatz of $n$ independent copies of a probability distribution $P_X\in\mathcal{P}(\XX)$.
Palomar and Verd\'{u} also provide the following operational interpretation of $L^\infty$ as a upper bound to the plain zero-rate error exponent.
\begin{prop}[(Theorem 10 in \cite{Lautum_08})] Let $\pazocal{W}$ be a discrete memoryless channel $\pazocal{W}$ from $\XX$ to $\YY$. Then
\bb
    \errpl{\pazocal{W}}\leq L^{\infty}(\pazocal{W}).
\ee 
\end{prop}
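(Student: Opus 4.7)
The plan is to derive an exponentially tight Fano-type bound via two applications of the data processing inequality (DPI) for the relative entropy. Fix integers $n, M \geq 2$, and let $(\pazocal{E}, \pazocal{D})$ be an unassisted code for $M$ equiprobable messages over $\pazocal{W}^{\times n}$ whose average error probability equals $\epsilon_{n, M} \coloneqq \epsilon^\emptyset(M, \pazocal{W}^{\times n})$. Writing $W$ for the uniform message, $X^n = \pazocal{E}(W)$, $Y^n$ for the output of $\pazocal{W}^{\times n}$ on input $X^n$, and $\hat W = \pazocal{D}(Y^n)$, we obtain a Markov chain $W \to X^n \to Y^n \to \hat W$.

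The first step is to establish the contraction
\begin{equation}
L(W\!:\!\hat W) \,\leq\, L(X^n\!:\!Y^n) \,\leq\, L(\pazocal{W}^{\times n}).
\end{equation}
The right inequality is just the definition of $L(\pazocal{W}^{\times n})$. For the left inequality, I would apply the DPI of relative entropy twice: first through the decoder channel $P_{\hat W|Y^n}$, and then through the ``reverse encoder'' channel $P_{W|X^n}(w|x^n) = P_{X^n|W}(x^n|w)\big/\bigl(M\, P_{X^n}(x^n)\bigr)$. The crucial check is that each of these maps sends the joint distribution to the next joint and \emph{simultaneously} sends the product of marginals to the new product of marginals; for the reverse encoder, the latter relies on the conditional independence $W \perp \hat W \mid X^n$ encoded in the Markov chain.

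The second step is to compress $L(W\!:\!\hat W)$ via a third DPI, applied to the binary indicator channel $(w, \hat w) \mapsto \mathbbm{1}[w = \hat w]$. Under $P_{W \hat W}$ this yields $\mathrm{Ber}(1 - \epsilon_{n, M})$, and under $P_W P_{\hat W}$ it yields $\mathrm{Ber}(1/M)$, using that $W$ is uniform. Consequently,
\begin{equation}
L(W\!:\!\hat W) \,\geq\, d\!\left(\tfrac{1}{M}\,\big\|\,1 - \epsilon_{n, M}\right),
\end{equation}
where $d$ denotes the binary KL divergence. Expanding this expression gives, after a short calculation, an inequality of the form
\begin{equation}
\Bigl(1 - \tfrac{1}{M}\Bigr)\bigl({-\log \epsilon_{n, M}}\bigr) \,\leq\, L(\pazocal{W}^{\times n}) + |c_M|,
\end{equation}
with $c_M \coloneqq -\tfrac{\log M}{M} + (1-\tfrac{1}{M})\log(1-\tfrac{1}{M})$ depending only on $M$ and satisfying $c_M \to 0$ as $M \to \infty$.

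To conclude, I would divide by $n$, take $\liminf_{n \to \infty}$ (which introduces $L^\infty(\pazocal{W})$ on the right and $E^\emptyset_0(M, \pazocal{W})$ on the left), and then let $M \to \infty$; invoking the identification $E^\emptyset(0^+, \pazocal{W}) = E^\emptyset_0(\pazocal{W})$ recalled just after \eqref{eq:Tast} would give the desired bound $\errpl{\pazocal{W}} \leq L^\infty(\pazocal{W})$. The main technical obstacle is verifying the simultaneous push-forward property of the reverse-encoder channel for both the joint and product-of-marginals arguments of the relative entropy; once this bookkeeping is complete, what remains is routine asymptotic estimation of the binary KL divergence.
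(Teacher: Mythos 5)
Your proof is correct, but it proceeds along a genuinely different line from the paper's. The paper quotes this proposition from Palomar--Verd\'u~\cite{Lautum_08} and immediately afterwards rederives it as a short corollary of the umlaut machinery via
$L^{\infty}(\pazocal{W}) \geq \liminf_{n}\tfrac{1}{n}\,U(\pazocal{W}^{\times n}) = U(\pazocal{W}) = \errns{\pazocal{W}} \geq \errpl{\pazocal{W}}$,
using the pointwise inequality $L\geq U$, additivity of $U$ (Corollary~\ref{cor:add_chan}), the operational characterisation $U(\pazocal{W})=\errns{\pazocal{W}}$ (Theorem~\ref{thm:exact_tho}), and the inclusion of unassisted codes among NS-assisted ones. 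You instead give a direct, self-contained Fano-type bound for the lautum information: the data-processing chain $L(W\!:\!\hat W)\leq L(X^n\!:\!\hat W)\leq L(X^n\!:\!Y^n)\leq L(\pazocal{W}^{\times n})$ --- the first inequality via the ``reverse encoder'' $P_{W|X^n}$ acting on the first coordinate, the second via the decoder acting on the second --- followed by a further DPI into the binary event $\{W=\hat W\}$ to extract $d\bigl(\tfrac{1}{M}\,\big\|\,1-\epsilon_{n,M}\bigr)$, yielding the finite-$(n,M)$ inequality $\bigl(1-\tfrac1M\bigr)\bigl(-\log\epsilon_{n,M}\bigr)\leq L(\pazocal{W}^{\times n})+|c_M|$. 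The two push-forward checks you flag as the main obstacle do go through: the reverse encoder sends $P_{X^n\hat W}\mapsto P_{W\hat W}$ precisely because of the conditional independence $W\perp\hat W\mid X^n$, and sends $P_{X^n}P_{\hat W}\mapsto P_W P_{\hat W}$ by construction of $P_{W|X^n}$ (with the implicit restriction to the support of $P_{X^n}$, which is harmless). Your route is more elementary and closer in spirit to the original Palomar--Verd\'u argument, and produces a quantitative non-asymptotic bound as a byproduct; the paper's route is shorter but rests on the much heavier identification $U(\pazocal{W})=\errns{\pazocal{W}}$. One minor simplification: to close the argument you only need the easy inequality $\errpl{\pazocal{W}}\leq\oppl{\pazocal{W}}$ from~\eqref{eq:Tast}, not the full equality $E^{\emptyset}(0^+,\pazocal{W})=E^{\emptyset}_0(\pazocal{W})$ that the paper recalls from~\cite{SHANNON1967522}.
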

This result can be seen as a corollary of what we proved in this work, as
\bb
    L^{\infty}(\pazocal{W}) = \liminf_{n\to\infty}\frac{1}{n}L(\pazocal{W}^{\times n})\geqt{(i)}\liminf_{n\to\infty}\frac{1}{n}U(\pazocal{W}^{\times n})\eqt{(ii)} U(\pazocal{W})\eqt{(iii)}\errns{\pazocal{W}}\geq \errpl{\pazocal{W}},
\ee
where in (i) we have used that, by their very definition, $L\geq U$, in (ii) we have recalled the additivity of the channel umlaut information (Corollary~\ref{cor:add_chan}) and in (iii) we have leveraged the operational interpretation provided in Theorem~\ref{thm:exact_tho}. Palomar and Verd\'{u} provided an example that we can mention to prove that the inequality (i) can be strict.

\begin{figure}
    \centering
    \includegraphics[width=0.7\linewidth]{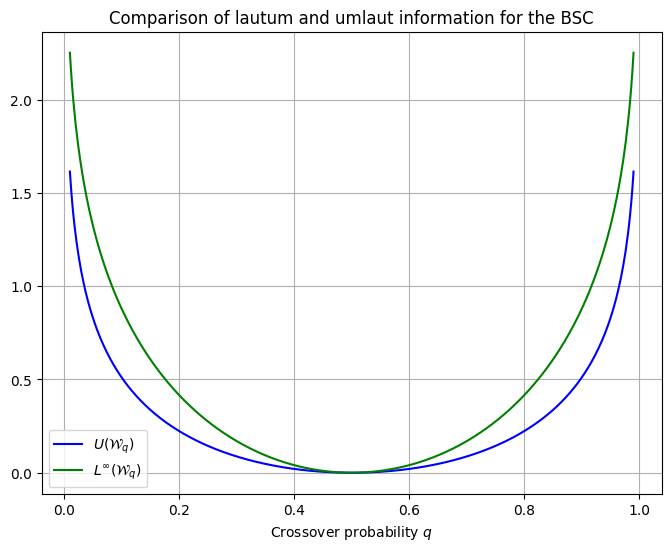}
    \caption{Regularised latum information and umlaut information for the binary symmetric channel (BSC) as functions of the crossover probability $q$.}
    \label{fig:LandU}
\end{figure}

\begin{prop}[(Theorem 13 in \cite{Lautum_08})] Let $q\in (0,1)$, $\XX=\YY=\{0,1\}$ and let
\bb
    \pazocal{W}_q(y|x)=\begin{cases}
        1-q & y=x\\
        q & y\neq x
    \end{cases}
\ee
be the binary symmetric channel. Then
\bb
    L(\pazocal{W}_q)&=\frac{1}{2}\log\frac{1}{4q(1-q)},
    L^{\infty}(\pazocal{W}_q)&=\left(\frac{1}{2}-q\right)\log\frac{1-q}{q}.
\ee
\end{prop}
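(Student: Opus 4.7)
The first identity, $L(\pazocal{W}_q) = \tfrac{1}{2}\log\tfrac{1}{4q(1-q)}$, reduces to a one-shot optimisation. Setting $P_X(0)=p$ and $P_Y(0) = r = q + p(1-2q)$, I would write $L(X:Y) = \sum_x P_X(x)\, D(P_Y\,\|\,\pazocal{W}_q(\cdot|x))$ as an explicit function of $p$. The channel is invariant under the simultaneous bit-flip on input and output, which gives the symmetry $p \leftrightarrow 1-p$; a direct first-derivative computation then confirms that $p = 1/2$ is the unique interior critical point and a global maximum, and substituting $p = r = 1/2$ yields the claimed closed form.

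For the regularised identity, the key observation is the additive-noise representation $Y^n = X^n \oplus N^n$ with $N^n \sim \mathrm{Bern}(q)^{\otimes n}$ independent of $X^n$. Combined with $\log \pazocal{W}_q^{\otimes n}(y^n|x^n) = d_H(x^n,y^n)\log q + (n - d_H(x^n,y^n))\log(1-q)$, this yields the identity
\[
\frac{1}{n}L(X^n:Y^n) \;=\; -\frac{1}{n}H(Y^n) \;-\; \log(1-q) \;-\; \log\tfrac{q}{1-q}\cdot \frac{1}{n}\,\mathbb{E}_{P_{X^n} \otimes P_{Y^n}}[d_H(X^n,Y^n)],
\]
valid for every $P_{X^n}$. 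I would then bound the two stochastic ingredients separately. The entropy satisfies $H(Y^n) \geq H(Y^n|X^n) = H(N^n) = n H(q)$ by the independence of $N^n$ from $X^n$. For the expected Hamming distance, writing $p_i = P_{X_i}(0)$ and $r_i = q + p_i(1-2q)$, a short computation gives $\mathbb{P}_{P_{X_i}\otimes P_{Y_i}}[X_i \neq Y_i] = q + 2p_i(1-p_i)(1-2q)$, which lies in $[q, 1/2]$ when $q \leq 1/2$ and in $[1/2, q]$ when $q \geq 1/2$. Substituting these two estimates into the displayed identity and simplifying yields the converse
\[
\frac{1}{n} L(X^n : Y^n) \;\leq\; \big(\tfrac{1}{2} - q\big)\log\tfrac{1-q}{q}
\]
for every $n$ and every $P_{X^n}$.

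To match this bound I plan to use the repetition-type input $P_{X^n}(0^n) = P_{X^n}(1^n) = 1/2$. This gives $p_i = 1/2$ and hence saturates the Hamming-distance bound; moreover, writing $Y^n = S\,\mathbf{1} \oplus N^n$ with $S \sim \mathrm{Bern}(1/2)$ independent of $N^n$ yields $H(Y^n) \leq H(S) + H(N^n) = 1 + nH(q)$, so that $n^{-1}H(Y^n) \to H(q)$ as $n\to\infty$ and the upper bound is saturated in the limit. The main technical wrinkle is only bookkeeping: one has to carefully track the sign of $\log\tfrac{q}{1-q}$ in the two regimes $q \lessgtr 1/2$, which flips together with the direction of the bound on the expected Hamming distance and therefore produces the same closed-form answer, and to perform the algebraic simplification that collapses $-H(q) - \log(1-q) + \tfrac{1}{2}\log\tfrac{1-q}{q}$ into the compact expression $(1/2 - q)\log\tfrac{1-q}{q}$.
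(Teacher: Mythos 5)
The paper does not prove this statement at all: it is cited directly as Theorem~13 of Palomar and Verd\'u~\cite{Lautum_08}, so there is no "paper's own proof" to compare against. What you have written is therefore an independent reconstruction, and it appears to be correct.

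Your key decomposition is sound: since $P_{XY}(x,y)=P_X(x)\pazocal{W}_q(y|x)$ cancels the $P_X$ factor inside the logarithm, one indeed has
\[
\frac{1}{n}L(X^n;Y^n)= -\frac{1}{n}H(Y^n)-\log(1-q)-\log\tfrac{q}{1-q}\cdot\frac{1}{n}\,\mathbb{E}_{P_{X^n}\otimes P_{Y^n}}\!\left[d_H(X^n,Y^n)\right],
\]
and this holds for \emph{every} input law $P_{X^n}$, not just product ones. Your two bounds are also correct: $H(Y^n)\geq H(Y^n\mid X^n)=H(N^n)=nH_2(q)$ by the additive-noise representation, and the per-coordinate crossover probability $\mathbb P[X_i\neq Y_i]=q+2p_i(1-p_i)(1-2q)$ (which only depends on the marginals $P_{X_i}$, $P_{Y_i}$) is extremal at $1/2$ regardless of the sign of $1-2q$, in the direction needed so that the sign of $-\log\tfrac{q}{1-q}$ flips simultaneously. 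Collapsing $-H_2(q)-\log(1-q)-\tfrac12\log\tfrac{q}{1-q}$ does yield $(\tfrac12-q)\log\tfrac{1-q}{q}$, and the repetition input $P_{X^n}$ uniform on $\{0^n,1^n\}$ gives $H(Y^n)\leq 1+nH_2(q)$, which saturates the converse in the limit.

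Two smaller remarks. First, for the single-letter formula, your symmetry-plus-critical-point plan is fine, but it is worth knowing that $L$ is \emph{not} concave in $P_X$ (this is one of the points Palomar and Verd\'u emphasise), so "unique interior critical point" does need the explicit derivative computation you promise rather than a convexity shortcut; the same algebra that simplifies $L(1/2)$ to $\tfrac12\log\tfrac{1}{4q(1-q)}$ also shows $L(p)=H_2(q)-H_2(r)+2p(1-p)(1-2q)\log\tfrac{1-q}{q}$, from which the derivative $(1-2q)\big[\log\tfrac{r}{1-r}+2(1-2p)\log\tfrac{1-q}{q}\big]$ vanishes only at $p=1/2$. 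Second, your converse is stronger than strictly necessary for $L^{\infty}$: it bounds $\frac1n L(\pazocal{W}_q^{\times n})$ uniformly in $n$, which together with achievability shows that the $\liminf$ in the definition of $L^{\infty}$ is in fact a limit — a nice bonus worth stating.
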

We proved above -- see~\eqref{eq:BSC} -- that
\bb
    U(\pazocal{W}_q)=\frac{1}{2}\log\frac{1}{4q(1-q)},
\ee
which incidentally is equal to $L(\pazocal{W}_q)$ for this particular channel. Therefore, $L^{\infty}(\pazocal{W}_q)>U(\pazocal{W}_q)$ for $q\neq 0,1/2,1$, as we can see in Figure~\ref{fig:LandU}. Finally, in Figure~\ref{fig:recap} we summarise the hierarchy of the information measures and of the error exponents that were discussed in this paper.

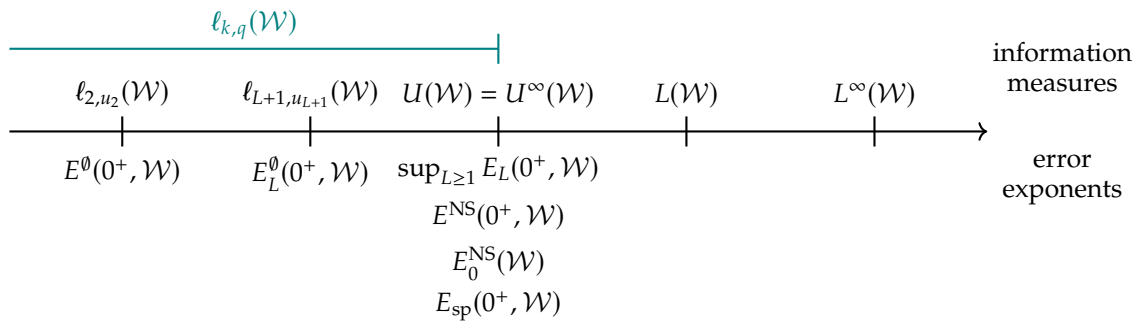
\begin{figure}
    \centering

\begin{tikzpicture}
    \draw[thick, ->] (2,0) -- (15,0);

    \draw[thick] (3.5,0.2) -- (3.5,-0.2);
    \draw[thick] (6,0.2) -- (6,-0.2); 
    \draw[thick] (8.5,0.2) -- (8.5,-0.2); 
    \draw[thick] (11,0.2) -- (11,-0.2); 
    \draw[thick] (13.5,0.2) -- (13.5,-0.2);

    \node[above] at (3.5,0.2) {$\ell_{2,u_{2}}(\pazocal{W})$};
    \node[above] at (6,0.2) {$\ell_{L+1,u_{L+1}}(\pazocal{W})$};
    \node[above] at (8.5,0.2) {$U(\pazocal{W})=U^{\infty}(\pazocal{W})$};
    \node[above] at (11,0.2) {$L(\pazocal{W})$};
    \node[above] at (13.5,0.2) {$L^\infty(\pazocal{W})$};
        \node[below] at (8.5,-0.2) {$\sup_{L\geq1}E_L(0^+,\pazocal{W})$};
    \node[below] at (8.5,-0.8) {\errns{\pazocal{W}}};
    \node[below] at (8.5,-1.4) {\opns{\pazocal{W}}};
    \node[below] at (8.5,-2.0) {$E_{\mathrm{sp}}(0^+,\pazocal{W})$};
    \node[below] at (6,-0.2) {\errL{\pazocal{W}}};
    \node[below] at (3.5,-0.2) {\errpl{\pazocal{W}}};

    \draw[thick, teal] (8.5,1.3) -- (8.5,0.9);
    \draw[thick, teal] (2,1.1) -- (8.5,1.1);
    \node[above, teal] at (5.25,1.1) {$\ell_{k,q}(\pazocal{W})$};


    \node[above] at (16,0.8) {information};
    \node[above] at (16,0.4) {measures};
    \node[above] at (16,-0.6) {error};
    \node[above] at (16,-1.1) {exponents};
\end{tikzpicture}

    \caption{A pictorial comparison of the information measures and of the error exponents that were discussed in this paper. $\pazocal{W}$ is a discrete memoryless channel from $\XX$ to $\YY$; for any $k\geq 1$, $q\in\mathbb{R}^k$ is a vector such that $\sum_{i=1}^kq_i=1$ and $u_k\in\mathbb{R}^k$ represents the uniform vector $u_k =(1/k,\dots, 1/k)$.}
    \label{fig:recap}
\end{figure}


\section{Outlook}
\label{sec:outlook}

As the umlaut information of channels quantifies the non-signalling--assisted zero-rate error exponent, which is different from the unassisted error exponent, there is the intriguing open question about how the exponent behaves under different types of assistance, and in particular about the entanglement-assisted zero-rate error exponent. Typically, quantum correlations can help in one-shot settings in (classical) Shannon theory for point-to-point problems, but become asymptotically useless in the sense that the unassisted, entanglement-assisted and non-signalling-assisted values become asymptotically equal \cite{Siddharth18}.\footnote{See, however, \cite{PhysRevA.95.052329} and follow-up works about network settings.} Here, however, the non-signalling--assisted value is different from the classical one, and so it is completely unclear where the entanglement-assisted value lands.  In that sense, the situation could be similar to zero-error Shannon theory, where the zero-error channel capacities are all different for the unassisted, entanglement-assisted, and non-signalling--assisted settings~\cite{Cubitt11}.

More generally, it would be interesting to explore more the idea of focusing more on quality in terms of small error probabilities instead of quantity in terms of optimal rates, which motivates the study of other types of zero-rate error exponents, both for classical and quantum Shannon theory~\cite{lami2024asymptotic}.


\acknowledgments

We thank Hao-Chung Cheng for enlightening correspondence on low-rate codes. MB and AO acknowledge funding from the European Research Council (ERC Grant Agreement No.~948139) and the Excellence Cluster Matter and Light for Quantum Computing (ML4Q). LL acknowledges financial support from the European Union under the European Research Council (ERC Grant Agreement No.~101165230) and from MIUR (Ministero dell'Istruzione, dell'Universit\`a e della Ricerca) through the project `Dipartimenti di Eccellenza 2023--2027' of the `Classe di Scienze' department at the Scuola Normale Superiore. MT is supported by the Ministry of Education through grant T2EP20124-0005 and by the National Research Foundation, Singapore through the National Quantum Office, hosted in A*STAR, under its Centre for Quantum Technologies Funding Initiative (S24Q2d0009).


\bibliography{biblio}


\appendix

\section{Auxiliary proofs}\label{app:missing}

{
\renewcommand{\thethm}{\ref{thm:rev_meta}}
\begin{prop}
For any channel $\pazocal{W}$ from $\XX$ to $\YY$ and any $M \in \N$ we have
    \bb
       -\log \epsilon^{\rm NS}(M,\pazocal{W})=\max_{P_X}\min_{Q_Y} D^{1/M}_H(P_XQ_Y\|P_{XY}),
    \ee
    where $P_{XY}(x,y)=\pazocal{W}(y|x)P_X(x)$.
    In particular,
    \bb
\opns{\pazocal{W}}&=\lim_{\delta\to 0}\liminf_{n\to \infty}\frac{1}{n}\max_{P_{X^n}}\min_{Q_{Y^n}} D^{\delta}_H(P_{X^n}Q_{Y^n}\|P_{X^nY^n})
    \ee
    where $P_{X^nY^n}(x_1,\dots,x_n,y_1,\dots,y_n)= P_{X^n}(x_1,\dots,x_n)\prod_{i=1}^n\pazocal{W}(y_i|x_i)$.
\end{prop}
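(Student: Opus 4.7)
The plan is to establish the one-shot identity first, and then derive the asymptotic statement by specialising it to $\pazocal{W}^{\times n}$. For the one-shot part I would follow the LP duality approach of Matthews~\cite{Matthews2012}.

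The starting point is that an NS-assisted code $\pazocal{Z}(x,\hat w|y,w)$ is defined by a finite system of linear constraints (non-negativity, normalisation, and the two non-signalling conditions that $\sum_{\hat w}\pazocal{Z}$ be independent of $y$ and $\sum_x\pazocal{Z}$ be independent of $w$), so the minimum average error probability under a uniform source is the value of a linear program with linear objective. A simplifying step is to average an optimal code over simultaneous permutations of $w$ and $\hat w$: this preserves both the error probability and the NS constraints, and reduces $\pazocal{Z}$ to a code specified by its input marginal $P_X(x)$ (now independent of $y$ and $w$) and a single kernel depending only on whether $\hat w=w$.

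I would then apply LP duality. The dual introduces a probability distribution $Q_Y$ on $\YY$ (from the normalisation constraint) and a test $T:\XX\times\YY\to[0,1]$ (from the kernel bounds), and after rearrangement the dual value equals
\begin{equation*}
\max_{P_X}\min_{Q_Y}\min\bigl\{\langle T,P_{XY}\rangle:\ 0\leq T\leq 1,\ \langle T,P_XQ_Y\rangle\geq 1-\tfrac{1}{M}\bigr\}.
\end{equation*}
The inner optimisation is precisely $\exp(-D^{1/M}_H(P_XQ_Y\|P_{XY}))$ by the definition of the hypothesis-testing divergence, so taking $-\log$ delivers the claimed one-shot identity.

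The main obstacle I expect is the careful identification of the dual variables with genuine probabilistic objects: in particular, that the dual multiplier for the normalisation constraint factors through a probability distribution $Q_Y$ (rather than an unconstrained non-negative function), and that the type-I error threshold on the dual side arises exactly as $1/M$ from the uniform source. Both can be handled by exploiting the permutation symmetry identified above, which pins down the structure of optimal primal solutions and in turn of the complementary dual constraints. For the second display in the statement, I would simply apply the one-shot identity to $\pazocal{W}^{\times n}$ in place of $\pazocal{W}$ (this replaces $P_{XY}$ by $P_{X^nY^n}$ in the hypothesis-testing term), take $\liminf_{n\to\infty}\tfrac{1}{n}$ on both sides, and finally send $\delta=1/M\to 0$; the resulting equality is exactly $\opns{\pazocal{W}}$ as defined.
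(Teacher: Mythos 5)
Your overall route (start from Matthews' LP characterisation of $\epsilon^{\rm NS}$, massage it into a $\max_{P_X}\min_{Q_Y}$ over a hypothesis-testing quantity, then substitute $\pazocal{W}^{\times n}$ and take limits) is the same as the paper's, and the asymptotic step is handled identically. However, the middle of your argument has a concrete gap. You propose that $Q_Y$ and the test $T$ both arise as \emph{dual variables} of the LP, with $Q_Y$ the multiplier for the normalisation constraint, and you suggest that permutation averaging will pin this down as a probability distribution. That is not how it works, and permutation symmetry is the wrong tool for this step: it is used upstream (in Matthews' derivation) to reduce a generic NS code to the $(P_X,R)$ parameterisation, not to turn a dual multiplier into a probability. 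If you literally dualise the $y$-wise constraints $\sum_x R_{xy}\le 1/M$, you get an unconstrained nonnegative vector $\lambda\in\R_+^{\YY}$, not an element of $\mathcal{P}(\YY)$, and $T$ never appears as a dual variable at all.

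What the paper actually does, starting from Matthews' LP (cited as a black box), is a \emph{change of variables} $R'_{xy}=P_X(x)-R_{xy}$ followed by $R'_{xy}=P_X(x)T_{xy}$, so that $T$ is a reparameterisation of the \emph{primal} variable $R$, constrained to $[0,1]$. The constraint $\sum_x P_X(x)T_{xy}\ge 1-1/M$ for all $y$ is then rewritten as $\min_{Q_Y\in\mathcal{P}(\YY)}\sum_{x,y}P_X(x)Q_Y(y)T_{xy}\ge 1-1/M$, using only that a pointwise minimum over a finite set equals a minimum over the probability simplex. Finally, von Neumann's minimax theorem (applied to the convex--compact pair $(T,Q_Y)$) swaps $\min_T$ and $\max_{Q_Y}$, after which the inner minimisation over $T$ is literally the definition of $\exp(-D^{1/M}_H(P_XQ_Y\|P_{XY}))$. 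So the appearance of $Q_Y$ as a genuine probability distribution is a consequence of the simplex reformulation plus minimax, not of LP duality, and no symmetrisation argument is needed. If you want to salvage your phrasing, think of the $Q_Y$/$T$ step as Lagrangian (Sion-type) duality for the reparameterised problem rather than as the LP dual of Matthews' original primal.
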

}
\begin{proof}
    Using Matthews' linear programming formulation of the error of non-signalling coding~\cite[Proposition~13]{Matthews2012}, we get
    \bb
        \epsilon^{\rm NS}(M, \pazocal{W}) &=   1 - \max_{\substack{R\\P_X\in\mathcal{P}(\XX)}} \Bigg\{ \begin{aligned}[t] &\sum_{x\in\XX} \sum_{y\in\YY} \pazocal{W}(y|x)\, R_{xy} \,:\\
        & \sum_{x\in\XX} R_{xy} \leq \frac{1}{M} \quad \forall y \in \YY,\quad 0 \leq R_{xy} \leq P_X(x) \quad \forall x \in \XX, y \in \YY\Bigg\} \end{aligned}\\
        &= \min_{\substack{R'\\P_X\in\mathcal{P}(\XX)}} \Bigg\{ \begin{aligned}[t] &\sum_{x\in\XX} \sum_{y\in\YY} \pazocal{W}(y|x)\, R'_{xy} \,:\\
        & \sum_{x\in\XX} R'_{xy} \geq 1 - \frac{1}{M} \quad \forall y \in \YY,\quad 0 \leq R'_{xy} \leq P_X(x) \quad \forall x \in \XX, y \in \YY\Bigg\},
    \end{aligned}
    \ee
    where we defined $R'$ through $R'_{xy} = P_X(x) - R_{xy}$. Introducing a test $T$ such that $R'_{xy} = P_X(x)\, T_{xy}$, we can continue as
        \bb
        - \log &\epsilon^{\rm NS}(M, \pazocal{W}) \\
        &= -\log  \min_{\substack{T\\P_X\in\mathcal{P}(\XX)}} \Bigg\{ \begin{aligned}[t] &\sum_{x\in\XX} \sum_{y\in\YY} \pazocal{W}(y|x)\, P_X(x)\, T_{xy} \,:\\
        & \sum_{x\in\XX} P_X(x)\, T_{xy}  \geq 1 - \frac{1}{M} \quad \forall y \in \YY,\quad 0 \leq T_{xy} \leq 1 \quad \forall x \in \XX, y \in \YY \Bigg\} 
    \end{aligned}\\
    &= -\log  \min_{P_X \in \mathcal{P}(\XX)} \min_{T} \Bigg\{ \begin{aligned}[t] &\sum_{x\in\XX} \sum_{y\in\YY} P_{XY}(x,y)\, T_{xy} \,:\\
        & \min_{Q_Y \in \mathcal{P}(\YY)} \sum_{x\in\XX} P_X(x) \, Q_Y(y) \, T_{xy} \geq 1 - \frac{1}{M},\quad 0 \leq T_{xy} \leq 1 \quad \forall x \in \XX, y \in \YY\Bigg\} 
    \end{aligned}\\
        &= -\log  \min_{P_X \in \mathcal{P}(\XX)} \max_{Q_Y \in \mathcal{P}(\YY)} \min_{T} \Bigg\{ \begin{aligned}[t] &\sum_{x\in\XX} \sum_{y\in\YY} P_{XY}(x,y)\, T_{xy} \,:\\
        & \sum_{x\in\XX} P_X(x) \, Q_Y(y) \, T_{xy} \geq 1 - \frac{1}{M},\quad 0 \leq T_{xy} \leq 1 \quad \forall x \in \XX, y \in \YY\Bigg\} 
    \end{aligned}\\
&=  \max_{P_X \in \mathcal{P}(\XX)} \min_{Q_Y \in \mathcal{P}(\YY)} D^{1/M}_H(P_XQ_Y\|P_{XY}),
    \ee
where we used von Neumann's minimax theorem in the second-to-last line and recalled the definition of $D^{\epsilon}_H$~\eqref{D_H_iid} in the last one.

    This gives
    \bb
        \opns{\pazocal{W}}&= \lim_{M\to\infty}\liminf_{n\to \infty}-\frac{1}{n} \log \epsilon^{\rm NS}(M, \pazocal{W}^{\times n})\\
        &=\lim_{M\to\infty}\liminf_{n\to \infty}\frac{1}{n}\max_{P_{X^n}}\min_{Q_{Y^n}} D^{1/M}_H(P_{X^n}Q_{Y^n}\|P_{X^nY^n})\\
        &=\lim_{\delta\to 0}\liminf_{n\to \infty}\frac{1}{n}\max_{P_{X^n}}\min_{Q_{Y^n}} D^{\delta}_H(P_{X^n}Q_{Y^n}\|P_{X^nY^n}).
    \ee
     where $P_{X^nY^n}(x_1,\dots,x_n,y_1,\dots,y_n)\coloneqq P_{X^n}(x_1,\dots,x_n)\prod_{i=1}^n\pazocal{W}(y_i|x_i)$.
\end{proof}

\section{Continuous alphabets}\label{app:cv}

The aim of this section is the discussion of the generalization of the operational interpretation of the umlaut information for probability distributions to the setting of continuous alphabets.

\begin{Def}[(Umlaut information, continuous alphabets)]
Given two random variables $X$ and $Y$ taking values in $\XX$ and $\YY$. Let $\mu_{XY}$ be the law of $(X,Y)$ and let $\mu_X$ be the law of $X$. The umlaut information is defined as
\bb\label{eq:lautum_cv}
U(X;Y)&\coloneqq \inf_{\nu}D(\mu_X\times \nu\,\|\,\mu_{XY})
\ee
where $\nu$ is any probability measure on $\YY$.
\end{Def}

Let us suppose that $\XX=\mathbb{R}^{n_1}$ and $\YY=\mathbb{R}^{n_2}$, with $\mu_{XY}$ absolutely continuous with respect to the Lebesgue measure $\lambda_{\XX}\times\lambda_\YY$.
Then we can identify the law of $(X,Y)$ with a probability density function $p_{XY}(x,y)$. We can rewrite~\eqref{eq:lautum_cv} as
\bb\label{eq:lautum_cv2}
U(X;Y)&\coloneqq \inf_{q_Y}D(p_Xq_Y\|p_{XY})=\inf_{q_Y}\int_{\XX\times\YY}p_X(x)q_Y(y)\log\frac{p_X(x)q_Y(y)}{p_{XY}(x,y)}dx\,dy
\ee
where $q_Y$ is any probability distribution on $\YY$ and $p_X$ is the marginal of $p_{XY}$ on $\YY$. In principle,~\eqref{eq:lautum_cv2} does not account for all the measures $\nu$ appearing in the minimisation of~\eqref{eq:lautum_cv}. However, it is easy to see that the measures $\nu$ which are not absolutely continuous with respect to the Lebesgue measure on $\YY$ do not contribute to the minimization. Indeed, for any $\nu$ we have the following two possibilities.
\begin{itemize}
    \item $\mu_X\times \nu\not\ll \mu_{XY}$: this means that $D(\mu_X\times \nu\|\mu_{XY})=+\infty$, so $\nu$ does not contribute to the minimum;
    \item $\mu_X\times \nu\ll \mu_{XY}$: since $\mu_{XY}\ll \lambda$, we also have that $\mu_X\times\nu\ll\lambda$; given a measurable set\footnote{Like a ball centered in the origin of $\mathbb{R}^{n_1}$ with a suitably large radius.} $S\subseteq \mathcal{X}$ such that $0<\lambda_\mathcal{X}(S)<\infty$ and $\mu_X(S)>0$, we have
    \bb
        \mu_X(S)\nu(\,\cdot\,)=(\mu_X\otimes \nu)(S \times \,\cdot\,)\ll (\lambda_\mathcal{X}\times\lambda_\mathcal{Y})(S \times \,\cdot\,)=\lambda_\mathcal{X}(S)\lambda_\mathcal{Y}(\,\cdot\,),
    \ee
    and this proves that any measure $\nu$ that contributes to the minimum is absolutely continuous with respect to the Lebesgue measure on $\mathcal{Y}$.
\end{itemize}

\begin{lemma}[(Continuous variables Gibbs variational principle)]\label{lem:cv_Gibbs} Let $a:\mathbb{R}^n\to\mathbb{R}$ be a function such that the integral $\int_{\mathbb{R}^n} e^{-a(x)}dx$ is finite. For any probability density $p$, the integral $\int_{\mathbb{R}^n} p(x)\log\left(p(x)e^{a(x)}\right)dx$ is well defined and satisfies the inequality
    \begin{equation}\label{eq:gibbs_cv}
       \int_{\mathbb{R}^n} p(x)\log\left(p(x)e^{a(x)}\right)dx\geq - \log\int_{\mathbb{R}^n} e^{-a(x)}dx,
    \end{equation}
    with equality if and only if
    \begin{equation}
        p(x)=\frac{e^{-a(x)}}{\int_{\mathbb{R}^n} e^{-a(x)}dx},
    \end{equation}
    almost everywhere.
\end{lemma}


\begin{proof}
 Since $\log t\leq t-1$, we have
\bb
    I(x)\coloneqq p(x)\log\left(p(x)e^{a(x)}\right)= p(x)\left(-\log\frac{e^{-a(x)}}{p(x)}\right)\geq p(x)\left(1-\frac{e^{-a(x)}}{p(x)}\right)=p(x)-e^{-a(x)},
\ee
whence, if we denote by $t_\pm=\frac{1}{2}(t\pm |t|)$, then $I(x)=I_+(x)+I_-(x)$ and
\bb
    I_-(x)=(p(x)-e^{-a(x)})_-\geq -e^{-a(x)}.
\ee
The lower bound is integrable by hypothesis, so the integral $\int_{\R^n}I(x)dx\in (-\infty,+\infty]$ is well defined.
    We rewrite
    \bb
        \int_{\mathbb{R}^n} p(x)\log\left(p(x)e^{a(x)}\right)dx&=\int_{{\rm supp}(p)} p(x)\left(-\log \frac{e^{-a(x)}}{p(x)}\right)dx\\
        &\geqt{(i)}- \log\int_{{\rm supp}(p)} e^{-a(x)}dx\\
        &\geqt{(ii)}- \log\int_{\mathbb{R}^n} e^{-a(x)}dx
    \ee
    where (i) is Jensen inequality for the convex function $t\mapsto -\log t$ and (ii) follows from the monotonicity of the logarithm and the positivity of $e^{-a(x)}$. In particular, since the convexity is strict, (i) is an equality if and only if $e^{-a(x)}/p(x)$ is constant and (ii) is an equality if and only if $p$ has full support. This concludes the proof.
\end{proof}

Leveraging the Gibbs variational principle, we can rewrite~\eqref{eq:lautum_cv} as
\bb
    U(X;Y)&\coloneqq \inf_{q_Y}D(p_Xq_Y\|p_{XY})=-\log\int_\YY\exp\left(\int_{\XX}p_X(x)\log\frac{p_{XY}(x,y)}{p_X(x)}dx\right)dy
\ee
Indeed, let us consider
\bb
    a(y) \coloneqq \int_{\XX}p_X(x)\log\frac{p_X(x)}{p_{XY}(x,y)}dx.
\ee
Using Jensen inequality for the convex function $t\mapsto -\log t$, we get the lower bound
\bb
    a(y) \coloneqq \int_{\XX}p_X(x)\left(-\log\frac{p_{XY}(x,y)}{p_X(x)}\right)dx \geq -\log\int_{\XX}p_{XY}(x,y)=-\log p_Y(y),
\ee
which implies the upper bound
\bb
    0\leq \int_\YY e^{-a(y)}dy\leq \int_\YY p_Y(y) = 1, 
\ee
so the hypothesis of Lemma \ref{lem:cv_Gibbs} is satisfied and we can compute
\bb
    D(p_Xq_Y\|p_{XY}) &=\int_{\XX\times \YY} q_Y(y)p_X(x)\log\frac{p_X(x)q_Y(y)}{p_{XY}(x,y)}dx\,dy\\
    &=\int_{\YY} q_Y(y)\left(\int_{\XX}p_X(x)\log\frac{p_X(x)}{p_{XY}(x,y)}dx+\log q_Y(y)\right)dy\\
    &=\int_{\YY} q_Y(y)\log \left(q_Y(y)\exp\int_{\XX}p_X(x)\log\frac{p_X(x)}{p_{XY}(x,y)}dx\right)dy\\
    &\geq -\log\int_\YY\exp\left(\int_{\XX}p_X(x)\log\frac{p_{XY}(x,y)}{p_X(x)}dx\right)dy.
\ee
In particular, the lower bound is attained by $\ddot q_Y(y)\propto \exp\left(\int_\mathcal{X}p_X(x)\log p_{XY}(x,y)\right) $, so the infimum corresponds to a minimum.

Now, similarly to Definition~\ref{def:renyi-umlaut}, we can introduce the R\'enyi $\alpha$-umlaut information for continuous alphabets. Given $\alpha\in(0,1)\cup (1,\infty)$, we will call
\bb
U_\alpha(X;Y)&\coloneqq \inf_{Q_Y}D_\alpha(P_XQ_Y\|P_{XY})
\ee
where $D_\alpha$ is the R\'enyi $\alpha$-relative entropy for continuous probability distributions:
\bb
    D_{\alpha}(P\|Q)\coloneqq \frac{1}{\alpha-1}\log\int P^\alpha(x)Q^{1-\alpha}(x)dx\, .
\ee
The R\'enyi $\alpha$-umlaut information for continuous alphabets has the closed-form expression (see~\eqref{eq:U_a})
\bb\label{eq:U_a_c}
    U_\alpha(X;Y) 
        =-\log \int_\YY  P_Y(y)  \exp\left(-D_\alpha(P_X\|P_{X|Y=y})\right)dy
\ee
and it is additive with a proof identical to the one of Proposition~\ref{thm:closed_alpha}.

\begin{lemma}\label{lem:alpha_to_1_cont}
    Given two random variables $X$ and $Y$ taking values in $\XX=\mathbb{R}^{n_1}$ and $\YY=\mathbb{R}^{n_2}$ and having an absolutely continuous law with respect to the Lebesgue measure, it holds that
    \bb
        \lim_{\alpha\to 1^-}U_\alpha(X;Y)=U(X;Y).
    \ee
\end{lemma}

\begin{rem} In order to prove Lemma~\ref{lem:alpha_to_1_cont} for continuous alphabets, we cannot proceed as in the first proof of Lemma~\ref{lem:alpha_to_1}. Indeed, even if 
\begin{itemize}
    \item $\alpha\mapsto D_\alpha(P_XQ_Y\|P_{XY})$ is monotone \cite[Theorem 39]{vanErven2014},
    \item $Q_Y\mapsto D_\alpha(P_XQ_Y\|P_{XY})$ is lower semi-continuous with respect to the weak convergence \cite[Theorem 19]{vanErven2014},
\end{itemize}
we cannot use the Mosonyi--Hiai minimax theorem~\cite[Corollary A2]{MosonyiHiai} as $\mathcal{P}(\YY)$, the set of probability distributions on $\YY$, is not compact in general when endowed with the weak convergence (e.g. $\mathcal{P}(\mathbb{R})$ is not compact). However, the closed-form expression~\eqref{eq:U_a_c} is sufficient to prove the lemma.
\end{rem}

\begin{proof}[Proof of Lemma~\ref{lem:alpha_to_1_cont}]
    Leveraging the closed-form expression~\eqref{eq:U_a_c}, we have
    \bb
        \lim_{\alpha\to 1^-}U_\alpha(X;Y) 
        &=-\log \lim_{\alpha\to 1^-}\int_\YY  P_Y(y)  \exp\left(-D_\alpha(P_X\|P_{X|Y=y})\right)dy\\
        &\eqt{(i)}-\log \int_\YY P_Y(y)  \exp\left(-\lim_{\alpha\to 1^-}D_\alpha(P_X\|P_{X|Y=y})\right)dy\\
        &\eqt{(ii)}-\log \int_\YY P_Y(y)  \exp\left(-D(P_X\|P_{X|Y=y})\right)dy\\
        &=-\log \int_\YY P_Y(y)  \exp\left(\int_\XX P_X(x)\log P_{XY}(x,y)dx-\log P_{Y}(y)+H(P_X)\right)dy\\
        &=-H(P_X)-\log \int_\YY\exp\left(\int_\XX P_X(x)\log P_{XY}(x,y)dx\right)dy\\
        &=U(X;Y)
    \ee
    where in (ii) we have used that $\lim_{\alpha\to 1^-}D_\alpha(p\|q)=D(p\|q)$ \cite{vanerven2010renyi} and in (i) we have used Lebesgue's dominated convergence theorem: since for all $\alpha\in(0,1)$ $D_\alpha(p\|q)$ is non-negative \cite{vanerven2010renyi}, we have the domination
    \bb
        0 \leq P_Y(y)  \exp\left(-D_\alpha(P_X\|P_{X|Y=y})\right)\leq P_Y(y) \in L^1(\YY),
    \ee
    which ensures that we can commute the limit with the integral.
\end{proof}

The operational interpretation of the umlaut information is valid also in the case of continuous alphabets. In order to prove this, we are going to leverage the following Lemma, which requires a preliminary definition: given a probability distribution $p$ on $\XX=\mathbb{R}^k$ and a finite (Lebesgue)-measurable partition $\pazocal{P}=\{A_1,\dots,A_n\}$ of cardinality $n$, we define the probability distribution $p|_\pazocal{P}$ on $[n]$ as $p|_\pazocal{P}(i)\coloneqq \int_{A_i} p(x)dx$ for $i=1,\dots, n$.

\begin{lemma}[({\cite[Theorem 10]{vanErven2014}})]\label{lem:partitions}  Let $p,q$ two probability distributions. For any $\alpha\in[0,\infty]$, we have
\bb
    D_\alpha(p\|q)=\sup_\pazocal{P}D_\alpha(p|_\pazocal{P}\|q|_\pazocal{P})
\ee
    where the supremum is over all finite measurable partitions $\pazocal{P}$ of $\XX$.
\end{lemma}

\begin{thm}[(Operational interpretation of the umlaut information, continuous alphabets)]\label{thm:op_int_cont_alph}
Given a joint probability distribution $P_{XY}\in\mathcal{P}(\XX\times\YY)$, let $P_X$ be the marginal on $\XX$. Then it holds that
\bb
U(X;Y)=\mathrm{Stein}\big(\FF^{P_X}\,\big\|\, P_{XY}\big)\, .
\ee
\end{thm}

\begin{proof} 
Let $\alpha\in(0,1)$, and, as usual, let $P_{XY}^{\times n}\in\mathcal{P}(\XX^n\times\YY^n)$ be the i.i.d.\ distribution 
\bb
P^{\times n}_{XY}(x_1,\dots,x_n,y_1,\dots,y_n)=P_{XY}(x_1,y_1)\cdots P_{XY}(x_n,y_n)\, .
\ee
We denote as $P^{\times n}_X$ its marginal on $\XX^n$. Eq.~\eqref{eq:Dalpha} still holds for continuous alphabets. Indeed, let $p$ and $q$ be two probability distributions on $\mathbb{R}^k$. Then
\bb
D^\epsilon_H(p\|q)\geqt{(a)} D^\epsilon_H(p|_\pazocal{P}\|q|_\pazocal{P})\geqt{(b)} D_\alpha(p|_\pazocal{P}\|q_\pazocal{P})+\frac{\alpha}{1-\alpha}\log\frac{1}{\epsilon}
\ee
where in (a) we have used data-processing inequality for $D^\epsilon_H$, with $\pazocal{P}$ being a measurable finite partition of $\mathbb{R}^k$, and in (b) we have used~\eqref{eq:Dalpha} for probability distributions on finite sets. By arbitrariness of $\pazocal{P}$, we get
\bb
D^\epsilon_H(p\|q)\geq \sup_{\pazocal{P}}D_\alpha(p|_\pazocal{P}\|q_\pazocal{P})+\frac{\alpha}{1-\alpha}\log\frac{1}{\epsilon}=D_\alpha(p\|q)+\frac{\alpha}{1-\alpha}\log\frac{1}{\epsilon},
\ee
where the last equality follows from Lemma~\ref{lem:partitions}. Now,
\bb
\mathrm{Stein}\big(\FF^{P_X}\,\big\|\,P_{XY}\big) &\eqt{(i)} \lim_{\epsilon\to 0}\liminf_{n\to \infty}\frac{1}{n}\min_{Q_{Y^n}\in\mathcal{P}(\YY^n)}D^\epsilon_H(P_X^{\times n} Q_{Y^n}\|P_{XY}^{\times n})\\
&\geq \lim_{\epsilon\to 0}\liminf_{n\to \infty}\frac{1}{n}\min_{Q_{Y^n}\in\mathcal{P}(\YY^n)}\left(D_\alpha(P_X^{\times n} Q_{Y^n}\|P_{XY}^{\times n})+\frac{\alpha}{1-\alpha}\log\frac{1}{\epsilon}\right)\\
&=\liminf_{n\to \infty}\frac{1}{n}U_\alpha(X^n;Y^n)\\
&\eqt{(ii)} U_\alpha(X;Y),
\label{proof_interp_L2_eq3b}
\ee
where we used~\eqref{D_H_minimax_product_testing} in~(i), and the additivity of $U_\alpha$ in~(ii). In particular,
\bb
\mathrm{Stein}\big(\FF^{P_X}\,\big\|\,P_{XY}\big) &\geq \limsup_{\alpha\to 1^-}U_\alpha(X;Y) \\
&\eqt{(iii)} U(X;Y),
\ee
where in~(iii) we employed Lemma~\ref{lem:alpha_to_1_cont}.
For the upper bound we consider the ansatz 
\bb
Q_{Y^n}(y_1,\dots,y_n) = Q_Y^{\times n}(y_1,\ldots,y_n) = Q_Y(y_1)\cdots Q_Y(y_1),
\ee 
where $Q_Y$ is an arbitrary fixed probability density on $\YY$. Then, continuing from the first line of~\eqref{proof_interp_L2_eq3b}, we have
\bb\label{eq:stein_upper}
\mathrm{Stein}\big(\FF^{P_X}\,\big\|\,P_{XY}\big) &\leq \lim_{\epsilon\to 0}\liminf_{n\to \infty}\frac{1}{n}D^\epsilon_H(P_X^nQ_Y^n\|P_{XY}^n)\\
&\eqt{(iv)}D(P_XQ_Y\|P_{XY})
\ee
where (iv)~follows from the Stein lemma for continuous alphabets, for which we are going to provide a coincise proof at the end. Minimising~\eqref{eq:stein_upper} over $Q_Y$ yields
\bb
\mathrm{Stein}\big(\FF^{P_X}\,\big\|\,P_{XY}\big) \leq \min_{Q_Y} D(P_XQ_Y\|P_{XY}) = U(X;Y)\, ,
\ee
concluding the proof of Theorem~\ref{thm:op_int_cont_alph}.
For completeness' sake, we give a short proof of the Stein lemma for continuous alphabets.
\bb
\mathrm{Stein}(p\|q) &= \lim_{\epsilon\to 0}\liminf_{n\to \infty}\frac{1}{n}D^\epsilon_H(p^{\times n}\|q^{\times n})
\geqt{(e)}\lim_{\epsilon\to 0}\liminf_{n\to \infty}\frac{1}{n}D^\epsilon_H(p|_\pazocal{P}^{\times n}\|q|_\pazocal{P}^{\times n})\eqt{(f)}D(p|_\pazocal{P}\|q|_\pazocal{P}),
\ee
where (e) is data processing inequality and (f) is the Stein lemma for probability distribution on finite sets. By arbitrariness of $\pazocal{P}$, we get
\bb
    \mathrm{Stein}(p\|q)\geq \sup_\pazocal{P}D(p|_\pazocal{P}\|q|_\pazocal{P})\eqt{(g)}D(p\|q).
\ee
where (g) follows from Lemma~\ref{lem:partitions}. The proof of the weak converse is more standard, let us consider a generic test that maps a generic probability density $r(x)$ into a binary distribution\footnote{which can be interpreted as the probability of accepting the null hypothesis ($r_0$) or the alternative hypothesis ($r_1$).} $(r_0,r_1)$ according to an acceptance function $A:\XX\to [0,1]$ as follows
\bb
    r_0=\int_\XX A(x)r(x)dx, \qquad r_1=1-r_0.
\ee
Being $r\mapsto(r_0,r_1)$ a channel, by data processing inequality for the relative entropy, we have
\bb
    D(p\|q)\geq D\left((p_0,p_1)\|(q_0,q_1)\right)
\ee
If we constrain the type I error probability $p_1$ to match a certain threshold $\epsilon$, we get
\bb
    D(p\|q)&\geq (1-\epsilon)\log\frac{1-\epsilon}{q_0}+\epsilon\log\frac{\epsilon}{1-q_0}\\
    &\geqt{(h)}(1-\epsilon)\log\frac{1-\epsilon}{q_0}-\epsilon\left(\frac{1-q_0}{\epsilon}-1\right)\\
    &\geq (1-\epsilon)\log\frac{1-\epsilon}{q_0}-1,
\ee
where in (h) we have used the inequality $\log x\leq x-1$, i.e. $\log x\geq -(\frac{1}{x}-1)$, for $x>0$.
Minimising the type II error probability $q_0$ over all the possible acceptance functions constrained to $p_1=\epsilon$, we get
\bb
    D(p\|q)&\geq (1-\epsilon)\left(\log(1-\epsilon)+D_H^\epsilon(p\|q)\right)-1,
\ee
whence, by the additivity of the LHS,
\bb
    D(p\|q)&\geq \lim_{\epsilon\to 0}\liminf_{n\to\infty}\frac{1}{n} \left((1-\epsilon)\left(\log(1-\epsilon)+D_H^\epsilon(p^n\|q^n)\right)-1\right)=\mathrm{Stein}(p\|q).
\ee
This concludes the short argument for the Stein lemma for continuous alphabets.
\end{proof}


\end{document}